\author[1]{Sijing Tu\footnote{\url{sijing@kth.se}}}
\author[1]{Stefan Neumann\footnote{\url{neum@kth.se}}}
\author[1]{Aristides Gionis\footnote{\url{argioni@kth.se}}}
\affil[1]{KTH Royal Institute of Technology, Stockholm, Sweden}
\date{}
\begin{document}

\title{Adversaries with Limited Information in the Friedkin--Johnsen Model}

\maketitle 

\begin{abstract}
	In recent years, online social networks have been the target of adversaries
	who seek to introduce discord into societies, to undermine democracies and
	to destabilize communities. Often the goal is not to favor a certain
	side of a conflict but to increase disagreement and polarization. To get a
	mathematical understanding of such attacks, researchers use 
	opinion-formation models from sociology, such as the Friedkin--Johnsen model, 
	and formally study how much discord the adversary can produce when altering
	the opinions for only a small set of users.  In this line of work, it is
	commonly assumed that the adversary has full knowledge about the network
	topology \emph{and the opinions of all users}.  However, the latter
	assumption is often unrealistic in practice, where user opinions are not
	available or simply difficult to estimate accurately.

	To address this concern, we raise the following question: 
	\emph{Can an attacker sow discord in a social network, 
	even when only the network topology is known?} 
	We answer this question affirmatively.  We present approximation algorithms
	for detecting a small set of users who are highly influential for the
	disagreement and polarization in the network.
	We show that when the adversary radicalizes these users and if the initial
	disagreement/polarization in the network is not very high, then our method
	gives a constant-factor approximation on the setting when the user opinions
	are known.
	To find the set of influential users, we provide a novel approximation algorithm 
	for a variant of {\sc MaxCut} in graphs with positive \emph{and negative}
	edge weights. 
	We experimentally evaluate our methods, which have access \emph{only} to the
	network topology, and we find that they have similar performance as
	methods that have access to the network topology \emph{and all user
	opinions}.
	We further present an \NP-hardness proof, which was left as an open question
	by Chen and Racz~[IEEE Transactions on Network Science and Engineering, 2021].
\end{abstract}

\section{Introduction}

	Online social networks have become an integral part of modern societies and
	are used by billions of people on a daily basis.  In addition to connecting
	people with their friends and family, online social networks facilitate
	societal deliberation and play an important role in forming the political
	will in modern democracies.

	However, during recent years we have had ample evidence of
	malicious actors performing attacks on social networks so as to destabilize
	communities, sow disagreement, and increase polarization.  For instance, a
	report issued by the United States Senate finds that Russian ``trolls
	monitored societal divisions and were poised to pounce when new events
	provoked societal discord'' and that this ``campaign [was] designed to sow
	discord in American politics and society''~\cite{senatereport}. 
	Another report found that both left- and
	right-leaning audiences were targeted by these trolls~\cite{diresta2019tactics}. 
	Similarly, a recent analysis regarding the Iranian
	disinformation claimed that ``the main goal is to control public 
	opinion---pitting groups against each other and tarnishing the reputations of
	activists and protesters''~\cite{hassaniyan2022long}.
	
	The study of how such attacks influence societies can be fa\-cil\-i\-tat\-ed 
	by models of \emph{opinion dynamics}, 
	which study the mechanisms for individuals to form their opinions in social networks.
	Relevant research questions have been investigated in different disciplines, 
	e.g., psychology, social sciences, and
	economics~\cite{castellano2009statistical,jackson2008social,acemoglu2011opinion,
		xia2011opinion,lorenz2007continuous}. 
	A popular model for studying such questions in 
	computer science~\cite{gionis2013opinion,musco2018minimizing,zhu2021minimizing,bindel2015bad,abebe2021opinion,xu2021fast,tu22viral}
	is the \emph{Friedkin--Johnsen model} (FJ)~\cite{friedkin1990social}, 
	which is a generalization of the \emph{DeGroot~model}~\cite{degroot1974reaching}.

	To understand the power of an adversarial actor over the opinion-formation
	process in a social network, there are two popular measures of
	{discord}: \emph{disagreement} and \emph{polarization}; 
	for the rest of the paper, we use the word \emph{discord} 
	to refer to either disagreement or polarization; 
	see Section~\ref{sec:preliminaries} for the formal definitions.
	
	Previous works studied the increase of discord that can be
	inflicted by a malicious attacker who can change the opinions of a small
	number of users. As an example, Chen and Racz~\cite{chen2020network} showed
	that even simple heuristics, such as changing the opinions of centrists, can
	lead to a significant increase of the disagreement in the network. They also
	presented theoretical bounds, which were later extended by Gaitonde,
	Kleinberg and Tardos~\cite{gaitonde2020adversarial}.

	Crucially, the previous methods assume that the attacker has access to
	the network topology \emph{as well as} the opinions of all users. 
	However, the latter assumption is rather impractical: 
	user opinions are either not available or difficult to estimate accurately.
	On the other hand, obtaining the network
	topology is more feasible, as networks often provide access to
	the follower and interaction~graphs.

	As knowledge of all user opinions appears unrealistic, 
	we raise the following question: 
	\emph{Can attackers sow a significant amount of discord in a social network, 
	even when only the network topology is known?}
	In other words, we consider a setting \emph{with limited information} in
	which the adversary has to pick a small set of users, without knowing the
	user opinions in the network.

	\smallskip
	\noindent
	\textbf{Our Contributions.} Our main contributions are as follows.
	First, we provide a formal connection between the settings of full (all user
	opinions are known) and limited information (the user opinions are unknown).
	Informally, we show that if the variance of user opinions in the network 
	is not very high (and some other mild technical assumptions),
	then an adversary who radicalizes the users who are highly influential
	for the network obtains a $\bigO(1)$-approximation for the setting when all
	user opinions are known. Thus, we
	answer the above question affirmatively from a theoretical point of view.

	Second, we implement our algorithms and evaluate them on real-world datasets.
	Our experiments show that for maximizing disagreement, our algorithms,
	\emph{which use only topology information}, outperform simple baselines and
	have similar performance as existing algorithms that have \emph{full}
	information. Therefore, we also answer the above question affirmatively in
	practice.

	Third, we provide constant-factor approximation algorithms for identifying
	$\Omega(n)$ users who are \emph{highly influential} for the discord in the
	network, where $n$ is the number of users in the network. 
	We derive analytically the concept of highly-influential users for
	network discord and we formalize an associated computational task	
	(Section~\ref{sec:algorithms-our-problem}).
	Our formulation allows us to obtain insights into which users drive the
	disagreement and the polarization in social networks.  We also show that
	this problem is \NPhard, which solves an open problem by Chen and
	Racz~\cite{chen2020network}.

	Fourth, we show that to find the users who are influential on the discord,
	we have to solve a version of cardinality constrained {\sc Max\-Cut} in
	graphs with \emph{both positive and negative edge weights}.  For this
	problem, we present the first constant-factor approximation algorithm when
	the number of users to radicalize is $\Omega(n)$. Here the main technical challenge arises
	from the presence of negative edges, which imply that the problem is
	non-submodular and which rule out using 
	\emph{averaging arguments} that are often used to
	analyze such algorithms~\cite[A.3.2]{arora2009computational}. 
	Hence, existing algorithms do not extend to our
	more general case and we prove analogous results for graphs with positive
	and negative edge weights.  In addition, our \NP-hardness proof provides a
	further connection between maximizing the disagreement and {\sc Max\-Cut}.

	We discuss some of the ethical aspects of our findings 
	regarding the power of a malicious adversary who has access to the topology
	of a social network in the conclusion (Section~\ref{sec:conclusion}). 

\sbpara{Related work.}
A recently emerging and popular topic in the area of graph mining 
is to study optimization problems based on FJ opinion dynamics. 
Papers considered minimizing disagreement and
polarization indices~\cite{musco2018minimizing}, maximizing
opinions~\cite{gionis2013opinion}, changes of the network
topology~\cite{zhu2021minimizing,bindel2015bad} or changes of the susceptibility
to persuasion~\cite{abebe2021opinion}. Xu et al.~\cite{xu2021fast} show how to
efficiently estimate quantities such as the polarization and disagreement indices.
Our paper is also conceptually related to the topic of maximizing influence in
social networks, pioneered by Kempe, Kleinberg and
Tardos~\cite{kempe2015maximizing}; 
the influence-maximization model has recently been
combined with opinion-formation processes~\cite{tu22viral}.
Furthermore, many extensions of the classic FJ model have been
proposed~\cite{amelkin17polar,parsegov2016novel}.

Most related to our work are the papers by Chen and Racz~\cite{chen2020network}
and by Gaitonde, Kleinberg and Tardos~\cite{gaitonde2020adversarial}, who
consider adversaries who plan network attacks.  They provide upper bounds when
an adversary can take over $k$~nodes in the network, and they present heuristics
for maximizing disagreement in the setting with full information.  A practical
consideration of this model has motivated us to study settings with limited
information.  While their adversary can change the opinions of $k$~nodes to
either $0$ or $1$, in this paper we are mainly concerned with adversaries which
can change the opinions of $k$~nodes to $1$; we consider the adversary's
actions as ``radicalizing $k$~nodes.''  Our setting is applicable in scenarios
when opinions near opinion value $0$ ($1$) correspond to non-radicalized 
(radicalized) views.

Our algorithm for {\sc MaxCut} in graphs with positive and negative edge weights 
is based on the SDP-rounding techniques by Goemans and Williamson~\cite{goemans1995improved} for {\sc MaxCut}, 
and by Frieze and Jerrum~\cite{frieze1997improved} for {\sc Max\-Bi\-section}. 
While their results assume
that the matrix~$\m+A$ in Problem~\eqref{eq:our-problem} is the Laplacian of a
graph with positive edge weights, our result in Theorem~\ref{thm:unbalanced}
applies to more general matrices, %
albeit with worse approximation ratios.
Currently, the best approximation algorithm for {\sc Max\-Bi\-section} is by Austrin et al.~\cite{austrin2016better}.  
Ageev and Sviridenko~\cite{ageev1999approximation} 
gives LP-based algorithms for
versions of {\sc Max\-Cut} with given sizes of parts,
Feige and Langberg~\cite{feige2001approximation} extended this work to an SDP-based algorithm;
but their techniques appear to be
inherently limited to positive edge-weight graphs and cannot be extended to our more general
setting of Problem~\eqref{eq:our-problem}.

\section{Preliminaries}
\label{sec:preliminaries}

Let $\graph = (V, E, w)$ be an undirected weighted graph 
representing a social network.
The edge-weight function 
$w \colon E \rightarrow \Real_{>0}$
models the strengths of user interactions. 
We write $\abs{V} = n$ for the number of nodes, %
and use $N(u)$ to denote the set of neighbors of node $u\in V$, 
i.e., $N(u) = \{v : (u, v) \in E\}$. 
We let $\m+D$ be the $n\times n$ diagonal matrix with
$\m+D_{v,v} = \sum_{u\in N(v)}w(u,v)$ and 
define the weighted adjacency matrix $\m+W$
by $\m+W_{u,v} = w(u,v)$. %
The Laplacian of the graph $G$ is given by $\laplacian = \m+D - \m+W$.

In the \emph{Friedkin--Johnsen opinion-dynamics model} (FJ)~\cite{friedkin1990social}, 
each node $u\in V$ corresponds to a person who has
an \emph{innate opinion} and an \emph{expressed opinion}.
For each node~$u$, the innate opinion $s_u\in[0,1]$ is fixed over time and kept
private; the expressed opinion $z_u^{(t)}\in[0,1]$ is publicly known 
and it changes over time $t\in\mathbb{N}$ due to peer pressure. 
Initially, $z_u^{(0)}=s_u$ for all users $u\in V$. 
At each time
$t>0$, all users $u\in V$ update their expressed opinion $z_u^{(t)}$ as the
weighted average of their innate opinion and the expressed opinions of their
neighbors, as 
follows:
\begin{align}
\label{eq:update-opinions}
	z_u^{(t)}
	= \frac{s_u + \sum_{v\in N(u)} w_{uv} z_v^{(t-1)}}{1 + \sum_{v\in N(u)} w_{uv}}.
\end{align}

We write $\v+z^{(t)} = (z_1^{(t)},\dots,z_n^{(t)})$
to denote the vector of expressed opinions at time $t$.
Similarly, we set $\v+s = (s_1,\dots,s_n)$ for the innate opinions.
In the limit $t\to\infty$, the expressed opinions
reach the equilibrium $\finop = \lim_{t\to\infty} \v+z^{(t)} = (\ID + \laplacian)^{-1} \begop$. 

We study the behavior of the following two \emph{discord measures} in the FJ
opinion-dynamics model:
\begin{description}
\item \textbf{Disagreement index}~(\cite{musco2018minimizing})
	$\DisIdx{G, \begop} = \sum_{(u,v)\in E} w_{u,v} (z_u-z_v)^2 
	= \begop^{\intercal} \MasIdx{\DisIdx{}} \begop$, where
	$\MasIdx{\DisIdx{}} = (\laplacian + \ID)^{-1} \laplacian (\laplacian +
			\ID)^{-1}$, and
\item \textbf{Polarization index}~(\cite{matakos2017measuring, musco2018minimizing})
	$\PolIdx{G, \begop} = \sum_{u\in V} (z_u - \bar{\finop})^2 =
	\begop^{\intercal} \MasIdx{\PolIdx{}} \begop$, where
	$\bar{\finop} = \frac{1}{n} \sum_{u\in V} z_u$ is the average user
	opinion and 
	$\MasIdx{\PolIdx{}} = (\ID + \laplacian)^{-1} (\ID - \frac{\ind \ind^\intercal}{n}) (\ID + \laplacian)^{-1}$.
\end{description}
Note that the disagreement index measures the discord along the edges of the network,
i.e., it measures how much interacting nodes disagree. The polarization index measures
the overall discord in the network by considering the variance of the opinions.

We note that the matrices $\MasIdx{\DisIdx{}}$ and $\MasIdx{\PolIdx{}}$ may have
positive and negative off-diagonal entries and it is not clear whether they are
diagonally dominant; this is in contrast to graph Laplacians that have
exclusively non-positive off-diagonal entries and are diagonally dominant.
Having positive and negative entries will be one of the challenges we need to
overcome later.  The following lemma presents some additional properties.
\begin{lemma}
\label{lem:matrices}
	Let $\m+A\in\{\MasIdx{\DisIdx{}}, \MasIdx{\PolIdx{}}\}$.
	Then $\m+A$ is positive semi\-definite and satisfies $\m+A \v+1 = \v+0$, where $\v+1$ is the all-ones
	vector and $\v+0$ is the all-zeros vector.
\end{lemma}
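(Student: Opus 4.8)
The key structural observation is that both target matrices share the sandwich form $\m+A = \m+B\,\m+M\,\m+B$, where $\m+B = (\laplacian + \ID)^{-1}$ is symmetric and the middle matrix is $\m+M = \laplacian$ in the disagreement case and $\m+M = \ID - \frac{\ind\ind^\intercal}{n}$ in the polarization case. The plan is to first record two facts about $\m+B$. Since every row of $\laplacian = \m+D - \m+W$ sums to zero we have $\laplacian\v+1 = \v+0$, hence $(\laplacian + \ID)\v+1 = \v+1$; that is, $\v+1$ is an eigenvector of $\laplacian + \ID$ with eigenvalue $1$. Because $\laplacian$ is symmetric and positive semidefinite, all eigenvalues of $\laplacian + \ID$ are at least $1$, so $\laplacian + \ID$ is positive definite and invertible and $\m+B$ is symmetric; applying $\m+B$ to the eigenvector relation gives $\m+B\v+1 = \v+1$.

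For positive semidefiniteness, I would use that congruence by a symmetric matrix preserves the property. Symmetry of $\m+A = \m+B\m+M\m+B$ is immediate from symmetry of $\m+B$ and $\m+M$. For any vector $\v+x$, writing $\v+y = \m+B\v+x$ gives $\v+x^\intercal \m+A \v+x = \v+y^\intercal \m+M \v+y$, which is nonnegative as soon as $\m+M$ is positive semidefinite. It therefore suffices to check that both middle matrices are positive semidefinite: $\laplacian$ is a graph Laplacian and hence positive semidefinite, while $\ID - \frac{\ind\ind^\intercal}{n}$ is the orthogonal projection onto the complement of $\v+1$ (it is symmetric and idempotent) and thus positive semidefinite as well.

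Finally, for the identity $\m+A\v+1 = \v+0$ I would use $\m+B\v+1 = \v+1$ to reduce to the action of the middle matrix: $\m+A\v+1 = \m+B\m+M\m+B\v+1 = \m+B(\m+M\v+1)$. In the disagreement case $\m+M\v+1 = \laplacian\v+1 = \v+0$; in the polarization case $\m+M\v+1 = \v+1 - \frac{1}{n}\ind(\ind^\intercal\v+1) = \v+1 - \v+1 = \v+0$, using $\ind = \v+1$ and $\ind^\intercal\v+1 = n$. In both cases $\m+A\v+1 = \m+B\v+0 = \v+0$.

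I do not expect a genuine obstacle here; the argument is essentially bookkeeping around the shared factor $\m+B$. The only points that require a moment's care are verifying that $\v+1$ is fixed by $\m+B$ (and not merely by $\laplacian + \ID$) and that the centering matrix $\ID - \frac{\ind\ind^\intercal}{n}$ is positive semidefinite, which is cleanest to see by recognizing it as an orthogonal projection.
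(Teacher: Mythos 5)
Your proof is correct and follows essentially the same route as the paper's: both exploit the sandwich form $\m+B\,\m+M\,\m+B$ with $\m+B=(\ID+\laplacian)^{-1}$ symmetric, reduce positive semidefiniteness to that of the middle matrix, and use $\m+B\v+1=\v+1$ together with $\m+M\v+1=\v+0$ for the kernel identity. The only cosmetic difference is that you justify positive semidefiniteness of $\ID-\frac{\ind\ind^\intercal}{n}$ by recognizing it as an orthogonal projection, whereas the paper views it as the Laplacian of the complete graph with edge weights $1/n$; both are equally valid.
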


While we consider opinions in the interval $[0,1]$, 
for the \NPhardness results presented later, 
for technical reasons,
it will be useful to consider opinions in the interval~$[-1,1]$.
In Appendix~\ref{sec:scaling}, we show that the solutions 
of optimization problems are maintained 
under scaling, which implies that our \NPhardness results and 
our $\bigO(1)$-approximation algorithm also apply for opinions in the $[0, 1]$~interval.

We present all omitted proofs in Appendix~\ref{sec:omitted-proofs}. 

\section{Problem definition and algorithms}
\label{sec:algorithms}

We start by defining the problem of maximizing the discord when~$k$ user
opinions can be \emph{radicalized}, i.e., when for $k$~users the innate opinions
can be changed from their current value $\begop_0(u)$ to the extreme value $1$.
This problem is of practical relevance when opinions close to~$0$ correspond to
non-radicalized opinions (``{\sc covid-19} vaccines are generally safe'') and
opinions close to $1$ correspond to radicalized opinions (``{\sc covid-19}
vaccines are harmful'').  Then an adversary can radicalize $k$~people by setting
their opinion to $1$, for instance, by supplying them with fake news or by
hacking their social network accounts.  Formally, our problem is stated as
follows.

\begin{problem}
	Let $\m+A\in\{\MasIdx{\DisIdx{}}, \MasIdx{\PolIdx{}}\}$.
	Consider an undirected weighted graph $G = (V, E, w)$, and innate opinions
	$\begop_0\in[0, 1]^n$.
	We want to maximize the discord
    where we can radicalize the innate opinions of $k$~users.
	In matrix notation, the problem is as follows:
	\begin{equation}
    \label{problem:max-disagreement}
		\begin{aligned}
			\max_{\begop} \quad &  \begop^{\intercal} \m+A \begop,\\
			\st \quad &\lVert \begop - \begop_0 \rVert_0 = k, \text{ and}\\
			& \begop(u) \in \{\begop_0(u), 1\} \text{ for all } u\in V.
		\end{aligned}
	\end{equation}   
\end{problem}

Note that if we set $\m+A = \MasIdx{\DisIdx{}}$ the problem is to
maximize the disagreement in the network. If we set $\m+A = \MasIdx{\PolIdx{}}$
we seek to max\-i\-mize the polarization.

Further observe that for Problem~\eqref{problem:max-disagreement}, the algorithm
obtains as input the graph~$G$ \emph{and the vector of innate opinions
$\begop_0$}.  Therefore, we view this formulation as the setting \emph{with full
information}.

Central to our paper is the idea that the algorithm has access to the topology of the graph~$G$,
\emph{but it does not have access to the initial innate opinions~$\begop_0$}.
As discussed in the introduction, we believe that this scenario is of higher
practical relevance, as it seems infeasible for an attacker to gather the
opinions of millions of users in online social networks.  On the other hand,
assuming access to the network topology, i.e., the graph~$G$, appears more
feasible because networks, such as Twitter, make this information publicly
available. 

Our approach for maximizing the discord, even when we have limited information,
i.e., we only have access to the graph topology, has two steps:
\begin{enumerate}
	\item[{\bf 1.}] Detect a small set~$S$ of $k$~users who are highly influential
		for the discord in the network.
	\item[{\bf 2.}] Change the innate opinions for the users in the set~$S$ to $1$ and
		leave all other opinions unchanged.
\end{enumerate}

\smallskip
In the rest of this section, we will describe our overall approach for finding a
set of $k$~influential users on the discord
(Section~\ref{sec:algorithms-our-problem}) and then we will discuss
approximation algorithms (Section~\ref{sec:balanced-maxcut}) and heuristics
(Section~\ref{sec:greedy}) for this task. 
Then, we prove computational hardness (Section~\ref{sec:hardness}).

\subsection{Finding influential users on the discord}
\label{sec:algorithms-our-problem}

Next, we describe the implementation of Step~(1) discussed above. 
In other words, we wish
to find a set~$S$ of $k$~users who are highly influential for the discord in the
network.

To form an intuition about highly-influential users for the network discord in
the absence of information about user innate opinions, we consider scenarios of
non-controversial topics. Since the topics are non-controversial, we expect most
users to have opinions near a consensus opinion~$c$. %
In such scenarios, an adversary who aims to radicalize $k$~users so as to maximize
the network discord, will seek to find a set~$S$ of $k$~users and set
$\begop_0(u)=1$, for $u\in S$, so as to maximize the discord
$\begop^{\intercal} \m+A \begop$, where
$\m+A\in\{\MasIdx{\DisIdx{}}, \MasIdx{\PolIdx{}}\}$.

Since we assume most opinions to be near consensus~$c$, it seems natural that
the concrete value of~$c\in[0,1]$ has no big effect on the choice of the users
picked by the adversary (see also Theorem~\ref{thm:relationship} which
formalizes that this intuition is correct).  Hence, we consider $c=0$ and study
the idealized version of the problem, where $\begop_0(u)=0$ for all users $u$ in
the network.  In this case, the adversary will need to solve the following
optimization problem:
\begin{align}
\begin{aligned}
\label{eq:our-problem}
	\max_{\begop} \quad & \begop^{\intercal} \m+A \begop,\\
	\st \quad &\lVert \begop \rVert_0 = k, \text{ and} \\
		& \begop \in [0,1]^n.
\end{aligned}
\end{align}
The result of the above optimization problem is a vector~$\v+s$ that
has $k$ non-zero entries, all of which are equal to $1$. %
Thus, we can view the set $S=\{u \mid \begop_0(u)=1\}$ as a set of users who are
highly influential for the discord in the network.  

We provide a constant-factor approximation algorithm for this problem in
Theorem~\ref{thm:unbalanced}.  We also show that the problem is \NPhard in
Theorem~\ref{thm:disagreement-np-hard} when $\m+A=\MasIdx{\DisIdx{}}$, which
answers an open question by Chen and Racz~\cite{chen2020network}.

\sbpara{Relationship between the limited and full information settings.}
At first glance, it may not be obvious why a solution for
Problem~\eqref{eq:our-problem} with limited information implies a good solution
for Problem~\eqref{problem:max-disagreement} with full information. However, we
will show that this is indeed the case when there is little initial discord in
the network; we believe this is the most interesting setting for attackers
who wish to increase the discord.

Slightly more formally (see Theorem~\ref{thm:relationship} for
details), we show the following. If initially all innate opinions are close to
the average opinion $c$ and some mild assumptions hold, then an
$\bigO(1)$-approximate solution for Problem~\ref{eq:our-problem} (when only the
network topology is known) implies an $\bigO(1)$-approximate solution for
Problem~\ref{problem:max-disagreement} (when full information including user
opinions are~known).

Before stating the theorem, we define some additional notation.  For a set of
users $X\subseteq V$, we write $\begop_X$ to denote the vector of innate
opinions when we radicalize the users in $X$, i.e., $\begop_X\in[0,1]^n$
satisfies $\begop_X(u) = 1$ if $u\in X$ and $\begop_X(u) = \begop_0(u)$  if
$u\not\in X$.  Furthermore, given $X$ and a vector $\v+v$, we write
$\v+v_{\vert X}$ to denote the restriction of $v$ to the entries in $X$, i.e.,
$\v+v_{\vert X}(u) = \v+v(u)$ if $u\in X$ and 
$\v+v_{\vert X}(u) = 0$ if $u\not\in X$.  We discuss our technical assumptions
after the theorem.

\begin{theorem}
\label{thm:relationship}
	Let $\m+A\in\{\MasIdx{\DisIdx{}}, \MasIdx{\PolIdx{}}\}$.
	Let $c \in [0,1)$ and $\epsvec \in [-c,1-c]^n$ be such that
	$\begop_0 = c\v+1 + \epsvec$.
	Let $\gamma_1,\gamma_2, \gamma_3 \in (0,1)$ be parameters.  Furthermore, assume
	that for all sets $X\subseteq V$ with $\abs{X}=k$ it holds that:
	\begin{enumerate}
		\item[{\bf 1.}] $(\begop_X - \begop_0)^\intercal \m+A \begop_0 
				\geq - \gamma_1 \begop_0^\intercal \m+A \begop_0$,
		\item[{\bf 2.}] $\epsvec_{\vert X}^\intercal \m+A \epsvec_{\vert X} 
				\leq \gamma_2 \begop_0^\intercal \m+A \begop_0$, and 
		\item[{\bf 3.}] $\abs{\epsvec_{\vert X}^\intercal \m+A \v+1_{\vert X}}
				\leq \gamma_3 \begop_0^\intercal \m+A \begop_0$.
	\end{enumerate}
	Suppose we have access to a $\beta$-approximation algorithm for
	Problem~\eqref{eq:our-problem} with limited information.
	Then we can compute a solution for Problem~\eqref{problem:max-disagreement}
	with full information with approximation ratio
	$\frac{1}{4} \min\{\beta,
		   \frac{ 1-2\gamma_1-2(1-c)\gamma_3 }{ 1+2(1-c)\gamma_3+\gamma_2 } \}$,
	even if we only have access to the graph topology (but \emph{not} the user
	opinions).
\end{theorem}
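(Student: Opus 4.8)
The plan is to feed the set produced by the limited-information algorithm directly into the full-information problem as the set of users to radicalize, then bound the resulting objective from below and the full-information optimum from above. Concretely, run the $\beta$-approximation for Problem~\eqref{eq:our-problem} to obtain a set $S$ with $\abs{S}=k$; since that problem involves only $\m+A$, this step uses the topology alone and never reads $\begop_0$. Writing $Q_X = \v+1_{\vert X}^\intercal \m+A \v+1_{\vert X}$ and $Q^\ast = \max_{\abs{X}=k} Q_X$, the guarantee reads $Q_S \geq \beta Q^\ast$. Let $X^\ast$ be an optimal set for Problem~\eqref{problem:max-disagreement}, so the target is $\mathrm{OPT} = \begop_{X^\ast}^\intercal \m+A \begop_{X^\ast}$, and set $P_0 = \begop_0^\intercal \m+A \begop_0$; by Lemma~\ref{lem:matrices} and $\begop_0 = c\v+1 + \epsvec$ we have $P_0 = \epsvec^\intercal \m+A \epsvec \geq 0$.

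First I would lower-bound the value of radicalizing $S$. For any $X$ the change vector $\begop_X - \begop_0 = (1-c)\v+1_{\vert X} - \epsvec_{\vert X}$ is supported on $X$, so
\[
	\begop_X^\intercal \m+A \begop_X = P_0 + 2(\begop_X-\begop_0)^\intercal \m+A \begop_0 + (\begop_X-\begop_0)^\intercal \m+A (\begop_X-\begop_0).
\]
Assumption~1 lower-bounds the cross term by $-2\gamma_1 P_0$; expanding the last term as $(1-c)^2 Q_X - 2(1-c)\,\epsvec_{\vert X}^\intercal \m+A \v+1_{\vert X} + \epsvec_{\vert X}^\intercal \m+A \epsvec_{\vert X}$ lets me apply Assumption~3 to the middle summand and drop the final summand, which is nonnegative by Lemma~\ref{lem:matrices}. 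Taking $X=S$ and using $Q_S \geq \beta Q^\ast$ gives
\[
	\begop_S^\intercal \m+A \begop_S \;\geq\; \bigl(1 - 2\gamma_1 - 2(1-c)\gamma_3\bigr) P_0 + (1-c)^2 \beta\, Q^\ast .
\]

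The main obstacle is the upper bound on $\mathrm{OPT}$: the same expansion at $X^\ast$ now needs an \emph{upper} bound on the cross term $(\begop_{X^\ast}-\begop_0)^\intercal \m+A \begop_0$, which Assumption~1, a one-sided lower bound, does not provide. I would bypass it using positive semidefiniteness of $\m+A$ through $(\v+u+\v+v)^\intercal \m+A (\v+u+\v+v) \leq 2\,\v+u^\intercal \m+A \v+u + 2\,\v+v^\intercal \m+A \v+v$, which eliminates the cross term at the cost of a factor $2$. With $\v+u = \begop_0$ and $\v+v = \begop_{X^\ast}-\begop_0$, bounding $\v+v^\intercal \m+A \v+v$ by Assumptions~2 and~3 and using $Q_{X^\ast}\leq Q^\ast$ yields
\[
	\mathrm{OPT} \;\leq\; 2\bigl(1 + 2(1-c)\gamma_3 + \gamma_2\bigr) P_0 + 2(1-c)^2 Q^\ast .
\]

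Finally I would combine the two displays. Both have the form $\alpha_1 P_0 + \alpha_2 (1-c)^2 Q^\ast$ with nonnegative coefficients and quantities, so applying $\frac{p_1+q_1}{p_2+q_2} \geq \min\{p_1/p_2, q_1/q_2\}$ termwise gives
\[
	\frac{\begop_S^\intercal \m+A \begop_S}{\mathrm{OPT}} \;\geq\; \tfrac{1}{2}\min\Bigl\{\beta,\; \tfrac{1-2\gamma_1-2(1-c)\gamma_3}{1+2(1-c)\gamma_3+\gamma_2}\Bigr\},
\]
which is at least the claimed $\tfrac14\min\{\cdots\}$. The single factor of $2$ in the denominators---the only loss in the constant---comes entirely from the semidefinite split used to remove the cross term in the $\mathrm{OPT}$ bound, i.e. from the one-sidedness of Assumption~1. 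Crucially, the algorithm evaluates only $\m+A$ and never reads $\begop_0$, so the guarantee holds with access to the topology alone, with Assumptions~1--3 entering only in the analysis.
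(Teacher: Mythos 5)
Your proposal is correct and follows essentially the same route as the paper's proof: the same expansion of $\begop_X^\intercal \m+A \begop_X$ around $\begop_0$, the same use of Assumptions~1--3 on the cross and quadratic terms, the same positive-semidefinite splitting $2\epsvec^\intercal \m+A \v+v \leq \epsvec^\intercal \m+A \epsvec + \v+v^\intercal \m+A \v+v$ to remove the cross term in the bound on $\mathrm{OPT}$, and the same comparison of $\v+1_{\vert X}^\intercal \m+A \v+1_{\vert X}$ against the limited-information optimum. The only difference is the final combination step, where your mediant inequality $\frac{p_1+q_1}{p_2+q_2} \geq \min\{p_1/p_2,\,q_1/q_2\}$ replaces the paper's two-case distinction and yields the slightly sharper constant $\tfrac12$ in place of $\tfrac14$, which of course still implies the stated bound.
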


One may think of $c$ as the average
innate opinion and $\epsvec$ as the vector that indicates how much each innate
opinion deviates from $c$.  Indeed, for topics that initially have little
discord, one may assume that most entries in $\epsvec$ are small.

The intuitive interpretation of the technical conditions from the theorem is as
follows.
Condition~(1) corresponds to the assumption that no matter which
$k$~users the adversary radicalizes, the discord will not \emph{drop} by
more than a $\gamma_1$-fraction. This rules out some unrealistic scenarios in
which, for example, all but $k$~users have initial innate opinion~$1$ and one
could subsequently remove the entire discord by radicalizing the remaining 
$k$ users. 
Conditions~(2) and~(3) are
of similar nature and essentially state that if only $k$~users have the opinions
given by~$\epsvec$ and all other users have opinion~$0$, then the discord
in the network is significantly smaller than the initial discord when all
$n$~users have the opinions in $\begop_0$.

Note that when $k\ll n$, it is reasonable to assume that
$\gamma_1$, $\gamma_2$, $\gamma_3$ are upper-bounded by a small constant, 
say $\frac{1}{5}$. In this case the theorem states
that if we have a $\beta$-approximation algorithm for the setting with
limited information then we obtain an $\bigO(\beta)$-approx\-i\-ma\-tion algorithm
for the setting with full information, even though we only use the network
topology but not the innate opinions.

\subsection{$\alpha$-Balanced MaxCut}
\label{sec:balanced-maxcut}

In this section, we study the {$\alpha$-{\sc Balanced-Max\-Cut}} problem for
which we present a constant-factor approximation algorithm. 
This algorithm allows us to solve Problem~\eqref{eq:our-problem} (maximizing
		discord with limited information) approximately. Combined with
Theorem~\ref{thm:relationship} above, this implies that (under some assumptions)
adversaries with limited information only perform a constant factor worse than
those with full information (see Corollary~\ref{cor:relationship} below).

In the $\alpha$-{\sc Balanced-Max\-Cut} problem, the goal is to
partition a set of nodes into two sides such that one side contains an
$\alpha$-fraction of the nodes and the cut is maximized. Formally, we are given
a positive semi\-definite matrix $\m+A\in\mathbb{R}^{n\times n}$ and a parameter
$\alpha\in[0,1]$.
The goal is to solve the following problem:
\begin{equation}
\label{problem:maxcut-unbalanced}
    \begin{aligned}
        \max_{\v+x} \quad \frac{1}{4} & \v+x^{\intercal} \m+A \v+x ,\\
        \st \quad & \lVert \v+x + \v+1 \rVert_0 = \alpha n, \text{ and} \\
				  & \v+x \in [-1,1]^n.
    \end{aligned}
\end{equation}

Note that the optimal solution vector $\v+x$ takes values in $\{-1,1\}$ (since
the objective function is convex, as $\m+A$ is positive semi\-definite) and thus
it partitions the set~$V$ into two sets $S = \{ u \colon \v+x_u = 1 \}$ and
$\bar{S} = \{ u \colon \v+x_u = -1 \}$.  The first constraint ensures $\abs{S}=
\alpha n$ and $\abs{\bar{S}}= (1-\alpha)n$, i.e., one side contains an
$\alpha$-fraction of the nodes and the other side contains a
$(1-\alpha)$-fraction.  If $\m+A$ is the Laplacian of a graph and
$\alpha=\frac{1}{2}$, this is the classic {\sc Max\-Bi\-section}
problem~\cite{frieze1997improved}.  Hence, we will sometimes refer to $\v+x$ and
the corresponding partition $(S,\bar{S})$ as a \emph{cut} and to
$\v+x^{\intercal} \m+A \v+x$ as the \emph{cut value}.  

Our main result for Problem~\eqref{problem:maxcut-unbalanced} is as follows.
\begin{theorem}
\label{thm:unbalanced}
	Suppose $\alpha\in[0,1]$ is a constant and $\m+A$ is a symmetric,
	positive semi\-definite matrix with $\ind^{\intercal} \m+A = \m+0^{\intercal}$.
	Then for any $0< \epsilon <1$, there exists a randomized polynomial time algorithm 
	that with probability at least $1-\epsilon$, outputs a solution for the
	$\alpha$-{\sc Balanced-Max\-Cut} (Problem~\eqref{problem:maxcut-unbalanced}) with 
	approximation ratio presented in Figure~\ref{fig:approx-all}.
	The algorithm runs in time $O(1/\epsilon \log(1/\epsilon)poly(n))$.
\end{theorem}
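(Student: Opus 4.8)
The plan is to follow the SDP-relaxation-and-rounding paradigm pioneered by Goemans--Williamson and adapted to balanced variants by Frieze--Jerrum, but to handle the fact that $\m+A$ may have both positive and negative entries. First I would write down the natural vector relaxation of Problem~\eqref{problem:maxcut-unbalanced}: introduce unit vectors $\v+v_1,\dots,\v+v_n$ in place of the $\pm 1$ scalars $\v+x_u$, replacing each product $\v+x_u \v+x_w$ by the inner product $\v+v_u^{\intercal}\v+v_w$. Because the objective $\frac14 \v+x^{\intercal}\m+A\v+x$ is a quadratic form, this yields a semidefinite program whose value upper-bounds the integral optimum. The balance constraint $\lVert \v+x + \v+1\rVert_0 = \alpha n$ translates into a linear constraint on the sum $\sum_u \v+v_u$ (or an expected-size constraint after rounding); the condition $\ind^{\intercal}\m+A = \m+0^{\intercal}$ from Lemma~\ref{lem:matrices} is what lets us add or subtract multiples of $\v+1$ without changing the objective, so I would use it to simplify the quadratic form and to relate the ``diagonal mass'' to the cut value. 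Solving the SDP to accuracy $\epsilon$ in time $\mathrm{poly}(n)\log(1/\epsilon)$ is standard.

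Next I would round the SDP solution by the random-hyperplane method: draw a random Gaussian vector $\v+r$ and set $\v+x_u = \operatorname{sign}(\v+v_u^{\intercal}\v+r)$. The key quantitative input is that for two unit vectors at angle $\theta$, the probability they are separated by the hyperplane is $\theta/\pi$, so the expected contribution of the pair $(u,w)$ is $\frac{\theta_{uw}}{\pi}$ while its SDP contribution is $\frac{1-\cos\theta_{uw}}{2}$. For positive edge weights the ratio of these is bounded below by the Goemans--Williamson constant; for negative weights the same pairwise analysis gives a different (worse) constant, and the overall guarantee comes from combining the two regimes. This is exactly where the paper warns that the usual \emph{averaging arguments} fail, so instead of averaging I would bound the positive-weight part and the negative-weight part of $\v+x^{\intercal}\m+A\v+x$ separately and take the worse of the two constants, which is presumably the piecewise quantity displayed in Figure~\ref{fig:approx-all}.

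The balance constraint is the second technical hurdle and the place where I expect the real difficulty. A plain hyperplane rounding will not produce a cut of exactly $\alpha n$ on one side, so I would follow Frieze--Jerrum and skew the rounding: rotate the random direction or shift the threshold so that the expected side-size is $\alpha n$, then post-process by moving a few vertices across the cut to enforce the hard cardinality constraint exactly. The delicate point is showing that correcting the imbalance costs only a constant factor in the objective even in the presence of negative edges, since moving a vertex can now \emph{increase} as well as decrease the cut value; bounding this requires using positive semidefiniteness and the $\ind^{\intercal}\m+A=\m+0^{\intercal}$ identity to control how much any single vertex can contribute. Finally, to upgrade the expected approximation guarantee to one that holds with probability at least $1-\epsilon$, I would repeat the randomized rounding $O\!\left(\tfrac{1}{\epsilon}\log\tfrac{1}{\epsilon}\right)$ independent times and keep the best cut, applying a standard Markov/amplification argument; multiplying by the per-round cost $\mathrm{poly}(n)$ gives the claimed running time $O\!\left(\tfrac{1}{\epsilon}\log(1/\epsilon)\,\mathrm{poly}(n)\right)$. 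The main obstacle throughout is the loss of submodularity and of averaging caused by the negative edges, so every step that classically relied on those properties must be replaced by a direct pairwise or spectral bound.
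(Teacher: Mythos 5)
Your overall architecture matches the paper's: SDP relaxation, random-hyperplane rounding, a greedy post-processing step to enforce the cardinality constraint exactly, and $O(\frac{1}{\epsilon}\log\frac{1}{\epsilon})$ repetitions with a Markov-style amplification (the paper combines the cut value and the balance into a single random variable $Z_p = X_p/M^* + Y_p/N$ so that both are good on the \emph{same} trial, a Frieze--Jerrum device you should make explicit). You also correctly identify the two real difficulties and the right tools for the rebalancing step, namely positive semidefiniteness and $\ind^{\intercal}\m+A=\m+0^{\intercal}$; the paper's Lemma~\ref{lemma:boundSUmodify} turns these into the statement that some vertex can be moved at a cost of at most $\frac{2M}{\abs{S}}$ (a factor $2$ worse than the classical averaging bound), and Lemma~\ref{lemma:cut-value-after-moving} telescopes this loss. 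You only gesture at this argument rather than give it, and since it is the paper's main technical contribution you would eventually need to supply it, but the ingredients you name are the correct ones.

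There is, however, a concrete error in your analysis of the hyperplane rounding. You propose the pairwise Goemans--Williamson accounting, bounding the ratio $\frac{\theta_{uw}/\pi}{(1-\cos\theta_{uw})/2}$ separately on the ``positive-weight part'' and the ``negative-weight part'' of $\v+x^{\intercal}\m+A\v+x$ and taking the worse constant. No such per-pair constant exists for the negative entries: as $\theta\to 0$ the rounded contribution $\theta/\pi$ is of order $\theta$ while the SDP contribution $(1-\cos\theta)/2$ is of order $\theta^2$, so for a negatively weighted pair the rounding can lose an unbounded factor relative to the SDP, and the positive and negative sums can nearly cancel, so no term-by-term comparison controls their difference. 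The paper avoids this entirely by a \emph{global} spectral argument (Lemma~\ref{lem:standard}): writing $\Exp[\v+x_i\v+x_j]=\frac{2}{\pi}\arcsin(\v+v_i^{\intercal}\v+v_j)$ and using that the matrix $\arcsin(\m+X)-\m+X$ is positive semidefinite whenever $\m+X$ is (Williamson--Shmoys, Corollary~6.15), so that $\sum_{ij}\m+A_{ij}\bigl(\arcsin(\v+v_i^{\intercal}\v+v_j)-\v+v_i^{\intercal}\v+v_j\bigr)\geq 0$ because $\m+A\succcurlyeq\m+0$. This yields the factor $\frac{2}{\pi}$ (not $0.878$) on the cut value, which is the constant that actually enters the ratios plotted in Figure~\ref{fig:approx-all}. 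Replacing your pairwise split with this trace-against-a-PSD-matrix argument is necessary for the proof to go through for general symmetric PSD $\m+A$.
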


\begin{figure}[t]
	\centering 
        \resizebox{0.40\textwidth}{!}{%
  \tikzsetnextfilename{tikz/approx_ratio}%
  \begin{tikzpicture}
\begin{axis}[
    legend cell align={left},
    legend columns=1,
    legend style={
        fill opacity=0,
        draw opacity=1,
        text opacity=1,
        anchor=north east,
        at={(1.5,0.85)},
        draw=none,
    },
    axis lines = left, 
    label style={font=\fontsize{14}{6}\selectfont},
    tick label style={font=\fontsize{14}{6}\selectfont}, 
    legend style={font=\fontsize{14}{6}\selectfont},
    scaled ticks=true, 
    xlabel style={at={(1,0.1)}, anchor=south west},
    xlabel={$\alpha$}, 
    ylabel={Approximation Ratio}, 
    every axis plot post/.append style={mark=none}, 
]
\addplot table [x index=0, y index=1, col sep=space] {tikz/real_ratio.txt};
\addlegendentry{Numerical Results} 

\addplot[dashed, domain=0:0.448]{2.059*x^2}; 
\addlegendentry{$2.059\cdot\alpha^2$} 

\addplot[dashed, color=red, domain=0.448:0.5]{1.36*(1-x)^2}; 
\addlegendentry{$1.36\cdot(1-\alpha)^2$} 

\addplot[dashdotted, domain=0.5:0.552]{1.36*x^2}; 
\addlegendentry{$1.36\cdot\alpha^2$} 

\addplot[dashdotted, color=red, domain=0.552:1]{2.059*(1-x)^2}; 
\addlegendentry{$2.059\cdot(1-\alpha)^2$} 

\end{axis}
\end{tikzpicture}%

		}\\
	\caption{
	The approximation ratio from Theorem~\ref{thm:unbalanced} as function
	of~$\alpha$. We present numerical approximation ratio results compared
	with a piece-wise quadratic function defined by the formulas: 
	$2.059\alpha^2$ for $0 < \alpha < 0.448$, 
	$1.36(1-\alpha)^2$ for $0.448\leq \alpha < 0.5$, 
	$1.36\alpha^2$ for $0.5 \leq \alpha < 0.552$, 
	and $2.059(1-\alpha)^2$ for $0.552 \leq \alpha < 1$. }
	\label{fig:approx-all}
\end{figure}
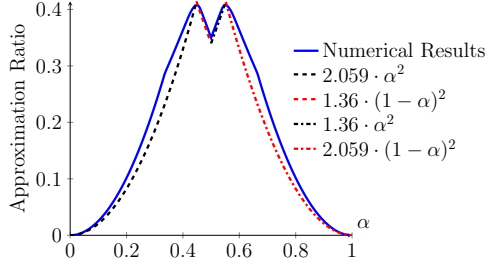

In Figure~\ref{fig:approx-all} we visualize the approximation ratios for
different values of~$\alpha$. In particular, observe that for any constant
$\alpha\in(0,1]$, our approximation ratio is $\Omega(1)$ and that for most values
of $\alpha$ it performs within at a least a factor of~$10$ of the optimal
solution. Furthermore, for $\alpha$ close to $0.5$, our approximation is better
than $\frac{1}{3}$.

Before we discuss Theorem~\ref{thm:unbalanced} in more detail, we first present
two corollaries.  First, we observe that the theorem implies that we obtain a
constant-factor approximation algorithm for maximizing the discord with limited
information (Problem~\eqref{eq:our-problem}).
\begin{corollary}
\label{cor:limited-information}
	Let $k = \alpha n$. If $\alpha\in[0,1]$ is a constant, there exists an $\bigO(1)$-approximation
	algorithm for Problem~\eqref{eq:our-problem} with limited information that
	runs in polynomial time.
\end{corollary}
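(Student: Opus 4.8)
The plan is to exhibit an objective-preserving, constraint-matching correspondence between the integral feasible solutions of Problem~\eqref{eq:our-problem} and those of the $\alpha$-{\sc Balanced-Max\-Cut} problem (Problem~\eqref{problem:maxcut-unbalanced}) with $\alpha = k/n$, and then to invoke Theorem~\ref{thm:unbalanced}. First I would recall that both optima are attained at integral points: since $\m+A$ is positive semidefinite, the objective is convex, so its maximum over the relevant box is attained at a vertex. Together with the support constraint this forces the optimal $\begop$ of Problem~\eqref{eq:our-problem} to equal $\v+1_{\vert S}$ for some $S\subseteq V$ with $\abs{S}=k$ (exactly as noted right after Problem~\eqref{eq:our-problem}), and analogously the optimal $\v+x$ of Problem~\eqref{problem:maxcut-unbalanced} lies in $\{-1,1\}^n$.

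Second, I would set up the map. Given $S$ with $\abs{S}=k$, associate $\begop = \v+1_{\vert S}$, feasible for Problem~\eqref{eq:our-problem}, with $\v+x = 2\begop - \v+1$, which is feasible for Problem~\eqref{problem:maxcut-unbalanced} at $\alpha = k/n$ since $\v+x_u = 1 \iff u\in S$, so $\lVert \v+x + \v+1 \rVert_0 = \abs{S} = k = \alpha n$. The crucial computation is
\begin{align*}
\v+x^\intercal \m+A \v+x
&= (2\begop - \v+1)^\intercal \m+A (2\begop - \v+1)\\
&= 4\,\begop^\intercal \m+A \begop - 4\,\begop^\intercal \m+A \v+1 + \v+1^\intercal \m+A \v+1.
\end{align*}
By Lemma~\ref{lem:matrices} we have $\m+A\v+1 = \v+0$, so the last two terms vanish and $\tfrac14\v+x^\intercal \m+A \v+x = \begop^\intercal \m+A \begop$. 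Hence the correspondence $S \leftrightarrow (\begop,\v+x)$ preserves the objective exactly, and in particular the two problems have the same optimal value $\mathrm{OPT}$.

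Third, I would transfer the approximation guarantee. The matrix $\m+A\in\{\MasIdx{\DisIdx{}},\MasIdx{\PolIdx{}}\}$ is symmetric and, by Lemma~\ref{lem:matrices}, positive semidefinite with $\v+1^\intercal \m+A = \v+0^\intercal$, so it satisfies the hypotheses of Theorem~\ref{thm:unbalanced}. Since $k=\alpha n$ with $\alpha$ constant, the theorem gives a randomized polynomial-time algorithm that, with probability at least $1-\epsilon$, outputs an integral cut $\v+x\in\{-1,1\}^n$ with $\lVert \v+x+\v+1 \rVert_0 = \alpha n$ and value at least $r(\alpha)\cdot\mathrm{OPT}$, where $r(\alpha)=\Omega(1)$ is the ratio from Figure~\ref{fig:approx-all}. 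Mapping this cut back via $\begop = (\v+x+\v+1)/2 = \v+1_{\vert S}$ yields a feasible solution of Problem~\eqref{eq:our-problem} with $\begop^\intercal \m+A \begop = \tfrac14\v+x^\intercal \m+A \v+x \geq r(\alpha)\cdot\mathrm{OPT}$, which is the claimed $\bigO(1)$-approximation.

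Because this is a direct reduction, there is no genuine obstacle beyond the bookkeeping: all substantive difficulty was already absorbed into Theorem~\ref{thm:unbalanced}. The single point requiring care is the algebraic identity above, whose validity hinges entirely on the property $\m+A\v+1 = \v+0$ from Lemma~\ref{lem:matrices}; without it, the affine change of variables $\v+x = 2\begop - \v+1$ would leave surviving linear cross-terms and the two objectives would no longer coincide.
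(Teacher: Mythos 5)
Your proposal is correct and follows essentially the same route as the paper: reduce Problem~\eqref{eq:our-problem} to Problem~\eqref{problem:maxcut-unbalanced} with $\alpha=k/n$ via the affine correspondence between $\{0,1\}$- and $\{-1,1\}$-valued solutions, observe that the objectives differ by exactly a factor of $4$ (the paper cites the rescaling argument of Appendix~\ref{sec:scaling}, while you verify the identity directly from $\m+A\v+1=\v+0$ of Lemma~\ref{lem:matrices} --- the same underlying mechanism), and invoke Theorem~\ref{thm:unbalanced}.
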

\begin{proof}
	Observe that a solution for Problem~\eqref{problem:maxcut-unbalanced}
	implies a solution for Problem~\eqref{eq:our-problem} as follows:
	Suppose in Problem~\eqref{problem:maxcut-unbalanced} we set
	$\m+A=4\MasIdx{\DisIdx{}}$ or $\m+A=4\MasIdx{\PolIdx{}}$ and $\alpha =
	\frac{k}{n}$ to obtain a solution $\v+x$. Now we define a solution $\begop$
	for Problem~\eqref{problem:maxcut-unbalanced} by setting $\begop(u)=1$ if
	$\v+x(u)=1$ and $\begop(u)=0$ if $\v+x(u)=-1$.  Observe that
	$\lVert\begop\rVert_0=k$ as desired. Since the last step can be viewed as
	rescaling opinions from $[-1,1]$ to $[0,1]$, the objective function values
	of $\v+x^\intercal \m+A \v+x$ and $\begop^\intercal\m+A\begop$ only
	differ by a factor of~$4$ (see Appendix~\ref{sec:scaling}).
\end{proof}

Second, observe that by combining Theorems~\ref{thm:relationship} and
Corollary~\ref{cor:limited-information}, we immediately obtain the following
result for solving the setting with full information, even when we only have
access to the network topology.
\begin{corollary}
\label{cor:relationship}
	Suppose $\alpha\in[0,1]$ is a constant and the conditions of
	Theorem~\ref{thm:relationship} hold with $\gamma_1,\gamma_2,\gamma_3\leq
	\frac{1}{5}$.  Then there exists a polynomial time algorithm for
	Problem~\eqref{problem:max-disagreement} with full information that has
	approximation ratio~$\bigO(1)$ and \emph{only uses the graph topology}
	(but \emph{not} the user opinions).
\end{corollary}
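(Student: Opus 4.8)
The plan is to chain Corollary~\ref{cor:limited-information} into Theorem~\ref{thm:relationship} and then verify that, under the stated bounds on $\gamma_1,\gamma_2,\gamma_3$, the resulting approximation ratio is a positive constant. First I would invoke Corollary~\ref{cor:limited-information} with $k=\alpha n$: since $\alpha$ is a constant, this yields a polynomial-time $\beta$-approximation algorithm for Problem~\eqref{eq:our-problem} with $\beta = \Omega(1)$, the constant coming from the approximation ratio of Theorem~\ref{thm:unbalanced} depicted in Figure~\ref{fig:approx-all}. The key point to carry along is that this algorithm reads \emph{only} the graph topology, because Problem~\eqref{eq:our-problem} is formulated without any reference to the innate opinions $\begop_0$.

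Next I would feed this $\beta$-approximation into Theorem~\ref{thm:relationship}, whose hypotheses are precisely Conditions~(1)--(3), here assumed to hold with $\gamma_1,\gamma_2,\gamma_3 \leq \tfrac{1}{5}$. The theorem then produces, again using only the graph topology, a solution for Problem~\eqref{problem:max-disagreement} with full information whose approximation ratio is $\tfrac{1}{4}\min\{\beta,\ \tfrac{1-2\gamma_1-2(1-c)\gamma_3}{1+2(1-c)\gamma_3+\gamma_2}\}$. It then only remains to show that this expression is bounded below by a positive constant. Using $c\in[0,1)$, so that $1-c\leq 1$, together with $\gamma_1,\gamma_2,\gamma_3\leq\tfrac{1}{5}$, I would bound the numerator of the second term from below by $1-\tfrac{2}{5}-\tfrac{2}{5}=\tfrac{1}{5}$ and its denominator from above by $1+\tfrac{2}{5}+\tfrac{1}{5}=\tfrac{8}{5}$, so that the second term is at least $\tfrac{1}{8}$. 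Hence the whole ratio is at least $\tfrac{1}{4}\min\{\beta,\tfrac{1}{8}\}=\Omega(1)$, since $\beta=\Omega(1)$, which is the claimed constant-factor guarantee.

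The computation is entirely routine, and the only point that genuinely needs checking is that the numerator $1-2\gamma_1-2(1-c)\gamma_3$ stays \emph{strictly} positive, so that the bound from Theorem~\ref{thm:relationship} is meaningful rather than vacuous; the choice $\gamma_i\leq\tfrac{1}{5}$ guarantees this with room to spare, leaving a numerator of at least $\tfrac{1}{5}$. I therefore do not expect any substantial obstacle: the conceptual content, namely that knowledge of user opinions can be dispensed with, is already carried by Theorems~\ref{thm:relationship} and~\ref{thm:unbalanced}, so the corollary really is immediate. In writing it up I would simply make the hidden constant explicit and reiterate that every step of the composed algorithm touches only the topology and never the vector $\begop_0$.
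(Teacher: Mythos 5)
Your proposal is correct and matches the paper's own argument, which simply combines Corollary~\ref{cor:limited-information} with Theorem~\ref{thm:relationship} and notes the result is immediate; your explicit bound of $\frac{1}{4}\min\{\beta,\frac{1}{8}\}$ under $\gamma_1,\gamma_2,\gamma_3\leq\frac{1}{5}$ and $1-c\leq 1$ is exactly the routine verification the paper leaves implicit.
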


Next, let us discuss Theorem~\ref{thm:unbalanced} in more detail.

\smallskip
The theorem \emph{generalizes} previous results and its
approximation ratios are only a small constant factor worse than classic
results~\cite{goemans1995improved,frieze1997improved,han02improved}. In
particular, the previous results assumed that $\m+A$ is the Laplacian of a graph
with positive edge weights, and thus $\m+A$ has the structure that all
off-diagonal entries are non\-positive. In contrast, in our result we do require
the latter assumption and allow for positive off-diagonal
entries, which appear, for instance, in graphs with negative edge weights.
Indeed, this is the case for the matrices~$\MasIdx{\DisIdx{}}$ and
$\MasIdx{\PolIdx{}}$ from Section~\ref{sec:preliminaries}, which may have
positive off-diagonal entries.  Therefore, our generalized theorem is necessary
to maximize the discord in Problem~\eqref{eq:our-problem}.

Furthermore, we note that to apply Theorem~\ref{thm:unbalanced} on graphs with
both positive \emph{and negative} edge weights, we have to assume that their
Laplacian is positive semi\-definite. This assumption cannot be dropped, as
pointed out by Williamson and Shmoys~\cite[Section~6.3]{williamson2011design}.
This is crucial since, while for graphs with positive edge weights the Laplacian
is always positive semi\-definite, this is not generally true for graphs with
negative edge weights.\footnote{
	We note that this property of graphs with positive and negative edges also
	rules out simple algorithms of the type: ``Randomly color the graph, pick
	the color for which the induced subgraph has the highest edge weights and
	then solve unconstrained MaxCut in this subgraph.'' The issue here is that
	the Laplacian of such a randomly picked subgraph is not necessarily positive
	semidefinite (even if the Laplacian of the original graph is positive
	semidefinite). Thus such simple tricks cannot be applied here and we need
	other solutions.
} However, this assumption holds in our use cases due to
Lemma~\ref{lem:matrices}.

In the theorem we require $\alpha\in[0,1]$ to be a constant and thus
$k=\Omega(n)$. While this is somewhat undesirable, there are underlying
technical reasons for it: the SDP-based approach by Frieze and
Jerrum~\cite{frieze1997improved} also has this requirement; LP-based algorithms
which work for $k=o(n)$ (as shown, for instance, by
\citet{ageev1999approximation}) do not generalize to the setting in which the
matrices are not graph Laplacians; the same is the case for the SDP-based
approach by \citet{feige2001approximation}.

\sbpara{Algorithm.} Our algorithm is based on solving the SDP relaxation of
Problem~\eqref{problem:maxcut-unbalanced} and applying random hyperplane
rounding~\cite{goemans1995improved}, followed by a greedy step in which we adjust
the sizes of the sets $S$ and $\bar{S}$. Later, we will see that our main
technical challenge will be to prove that the greedy adjustment step still works
in our more general setting.

To obtain our SDP relaxation of Problem~\eqref{problem:maxcut-unbalanced},
we observe that by the convexity of the objective function we can assume that
$\begop\in\{-1,1\}^n$ (see Appendix~\ref{sec:convexity}) 
and thus, we can rewrite the constraint
$\lVert \v+x + \v+1 \rVert_0 = \alpha n$ 
~as~ $2\sum_{i<j}^n \v+x_i \v+x_j = n^2(1 - 2\alpha)^2 - n$.

Now the semi\-definite relaxation of Problem~\eqref{problem:maxcut-unbalanced}
becomes:
\begin{align}
\begin{aligned}
\label{problem:maxcut-unbalaced-relax}
	\max_{\v+v_1,\dots,\v+v_n} \quad & \frac{1}{4} \sum_{ij} \m+A_{ij}\v+v_i^{\intercal} \v+v_j,\\
	\st \quad & \sum_{i<j} \v+v_i^{\intercal} \v+v_j = \frac{1}{2}n^2(1 - 2 \alpha)^2-\frac{n}{2}, \text{ and} \\
	& \v+v_i \in \Real^n, \quad\quad \|\v+v_i\|_2 = 1.
\end{aligned}
\end{align}

{\SetAlgoNoLine
 \LinesNumbered
 \DontPrintSemicolon
 \SetAlgoNoEnd
\begin{algorithm2e}[t]
Solve the SDP in Equation~\eqref{problem:maxcut-unbalaced-relax}
		with solution $\v+v_1,\dots,\v+v_n$\;
\For{$\kappa = \bigO(1/\varepsilon \lg(1/\varepsilon))$ times}{
	 Sample vector $\v+r$ with each entry $\sim {\mathcal N}(0,1)$\;
	 Set $S = \{ i : \langle \v+v_i, \v+r \rangle \geq 0 \}$ and $\bar{S}=V\setminus S$\;
	Set %
			$\bar{\v+x}_i=1$ if $i\in S$ and 
			$\bar{\v+x}_i=-1$ if $i\in \bar{S}$\;
	\If{$\abs{S}>\alpha n$}
		{greedily move elements from $S$ to $\bar{S}$ until $\abs{S} = \alpha n$}
	\If{$\abs{S}<\alpha n$}
		{greedily move elements from $\bar{S}$ to $S$ until $\abs{S} = \alpha n$}
	\tcp*{\small In each step, move the element that
			decreases the value of the objective function the least}
}
\Return{best solution over all trials}\;
\caption{SDP-based relaxation followed by iterative local improvement}
\label{alg:SDP-based}
\end{algorithm2e}
}

Our approach for solving Problem~\eqref{problem:maxcut-unbalanced}
is shown as Algorithm~\ref{alg:SDP-based}.
For simplicity, we assume that $\alpha\leq \frac{1}{2}$; 
for $\alpha>\frac{1}{2}$ we can run the algorithm with $\alpha'=1-\alpha\leq \frac{1}{2}$
and obtain the desirable result.

\sbpara{Analysis.}
Our analysis has two parts. The first part is the
hyperplane rounding of the SDP solution; it follows the techniques of
Goemans and Williamson~\cite{goemans1995improved} and 
Frieze and Jerrum~\cite{frieze1997improved}. 
The next lemma summarizes the first part of the analysis.
\begin{lemma}[\cite{goemans1995improved,williamson2011design,frieze1997improved}]%
\label{lem:standard}
	The expected cut of $(S, \bar{S})$ is at least
	$\frac{2}{\pi}\, \OPT$ and 
	$\Exp[\abs{S}\abs{\bar{S}}] \geq 0.878 \cdot \alpha(1-\alpha)n^2$, 
	where $\OPT$ is the optimal solution for 
	$\alpha$-{\sc Balanced-Max\-Cut}.
\end{lemma}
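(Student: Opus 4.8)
The plan is to analyze the random hyperplane rounding of Algorithm~\ref{alg:SDP-based} and to prove the two claims separately, since they concern different features of the rounded solution. Throughout, let $\m+M$ be the Gram matrix of the SDP solution, $\m+M_{ij} = \v+v_i^{\intercal}\v+v_j$, which is positive semidefinite with unit diagonal and off-diagonal entries in $[-1,1]$. The whole argument rests on two standard facts about a random Gaussian hyperplane $\v+r$: for unit vectors the probability that $\v+v_i$ and $\v+v_j$ are separated is $\Pr[\,i,j \text{ separated}\,] = \frac{1}{\pi}\arccos(\m+M_{ij})$, and hence $\Exp[\bar{\v+x}_i\bar{\v+x}_j] = 1 - \frac{2}{\pi}\arccos(\m+M_{ij}) = \frac{2}{\pi}\arcsin(\m+M_{ij})$, using $\bar{\v+x}_i\in\{-1,1\}$.

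For the cut bound I would use Nesterov's positive-semidefinite rounding argument rather than the term-by-term Goemans--Williamson comparison, since the latter needs non-negative off-diagonal weights and breaks down once $\m+A$ has positive off-diagonal entries, which is exactly the regime of Problem~\eqref{eq:our-problem}. Writing the expected cut value as $\Exp\!\left[\frac{1}{4}\bar{\v+x}^{\intercal}\m+A\bar{\v+x}\right] = \frac{1}{4}\cdot\frac{2}{\pi}\sum_{ij}\m+A_{ij}\arcsin(\m+M_{ij})$, and noting that the SDP value equals $\frac{1}{4}\sum_{ij}\m+A_{ij}\m+M_{ij}$, it suffices to show $\sum_{ij}\m+A_{ij}\arcsin(\m+M_{ij}) \geq \sum_{ij}\m+A_{ij}\m+M_{ij}$, because the SDP value is at least $\OPT$ (the SDP being a relaxation). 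Let $\m+N$ be the matrix with $\m+N_{ij} = \arcsin(\m+M_{ij}) - \m+M_{ij}$. Since the Taylor expansion $\arcsin(t) - t = \sum_{k\geq 1} c_k t^{2k+1}$ has only non-negative coefficients $c_k$, we have $\m+N = \sum_{k\geq 1} c_k \m+M^{\circ (2k+1)}$, a non-negative combination of odd Hadamard powers of $\m+M$; each such power is positive semidefinite by the Schur product theorem, so $\m+N$ is positive semidefinite. The required inequality is then $\langle \m+A, \m+N\rangle = \mathrm{tr}(\m+A\m+N)\geq 0$, which holds because the trace of the product of two positive semidefinite matrices is non-negative. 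Combining, $\Exp\!\left[\frac{1}{4}\bar{\v+x}^{\intercal}\m+A\bar{\v+x}\right] \geq \frac{2}{\pi}\cdot(\text{SDP value}) \geq \frac{2}{\pi}\OPT$.

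For the balance bound I would argue directly from the separation probabilities. Since $\abs{S}\abs{\bar{S}}$ counts the unordered pairs split by the hyperplane, $\Exp[\abs{S}\abs{\bar{S}}] = \sum_{i<j}\frac{1}{\pi}\arccos(\m+M_{ij})$. Here I would invoke the scalar Goemans--Williamson inequality $\frac{1}{\pi}\arccos(t)\geq 0.878\cdot\frac{1-t}{2}$, valid for all $t\in[-1,1]$; note that this is a sum of non-negative quantities, so the sign obstruction from the cut bound does not arise. Summing yields $\Exp[\abs{S}\abs{\bar{S}}]\geq 0.878\cdot\frac{1}{2}\big(\frac{n(n-1)}{2} - \sum_{i<j}\m+M_{ij}\big)$, and substituting the SDP balance constraint $\sum_{i<j}\m+M_{ij} = \frac{1}{2}n^2(1-2\alpha)^2 - \frac{n}{2}$ reduces the parenthesized quantity to $2\alpha(1-\alpha)n^2$ after the identity $1-(1-2\alpha)^2 = 4\alpha(1-\alpha)$, giving the claimed $\Exp[\abs{S}\abs{\bar{S}}]\geq 0.878\,\alpha(1-\alpha)n^2$.

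The main obstacle is the cut bound, and specifically the positive-semidefiniteness of $\m+N$ together with the resulting trace inequality $\mathrm{tr}(\m+A\m+N)\geq 0$. This is the step that substitutes for the failed averaging/term-by-term argument and is precisely where the hypothesis that $\m+A$ is positive semidefinite is used in an essential way. The balance bound, by contrast, is routine once the scalar Goemans--Williamson inequality is in hand, amounting only to summation and an algebraic simplification of the balance constraint.
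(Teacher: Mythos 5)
Your proposal is correct and follows essentially the same route as the paper: the $\frac{2}{\pi}$ cut bound via $\Exp[\bar{\v+x}_i\bar{\v+x}_j]=\frac{2}{\pi}\arcsin(\v+v_i^{\intercal}\v+v_j)$ together with positive semidefiniteness of the matrix $\arcsin(\m+M_{ij})-\m+M_{ij}$ paired against the positive semidefinite $\m+A$ (the paper cites this as Corollary~6.15 of Williamson--Shmoys, whereas you re-derive it from the Taylor series and the Schur product theorem), and the balance bound via the scalar inequality $\frac{1}{\pi}\arccos(t)\geq 0.878\cdot\frac{1-t}{2}$ plus the SDP balance constraint. The only difference is that you unpack the cited lemma's proof; the logic and all intermediate quantities match the paper's.
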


The second part of the analysis is novel and considers the greedy procedure that
ensures that $S$ contains~$k$ elements. 
When $\m+A$ is the Laplacian of a graph with \emph{non\-negative} edge weights, 
an averaging argument (see, e.g., ~\cite[A.3.2]{arora2009computational}) implies
that there exists $u\in S$ such that we can move $u$ from $S$ to $\bar{S}$ and
the cut value drops by a factor of at most $1/\abs{S}$.  
However, for more general matrices $\m+A$ this may not hold, 
e.g., when $\m+A$ is the Laplacian of a \emph{signed graph} with negative edge
weights or when $\m+A$ is the matrix that corresponds to the disagreement index,
		as in Problem~\eqref{eq:our-problem}.
We also illustrate this in Appendix~\ref{sec:illustration-worst-case}. 
However, we show that in our setting there always
exists a node in~$S$ such that if we move $u$ from $S$ to $\bar{S}$ 
then the cut value drops by a factor of at most $2/\abs{S}$.

\begin{lemma}
    \label{lemma:boundSUmodify}
	Suppose that $\m+A$ is a symmetric, positive semi\-definite matrix with
	$\v+1^{\intercal} \m+A = \v+0^{\intercal}$ and let $\vx\in\{-1,1\}^n$.
    Set $M = \frac{1}{4}\vx^{\intercal} \m+A \vx$ and
	$S = \{i \in \{1, \ldots, n\} \mid \vx_i = 1\}$.
    Then there exists $i \in S$ such that, by modifying $\x_i$ to be $-1$, 
    $M$ decreases at most $\frac{2M}{\abs{S}}$.
\end{lemma}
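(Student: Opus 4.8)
The plan is to reduce the statement to an averaging argument over the elements of $S$. If the \emph{average} drop in $M$ caused by moving a single element of $S$ to $\bar S$ is at most $\tfrac{2M}{\abs{S}}$, then at least one element must achieve a drop no larger than this average, which is exactly what the lemma asserts. So the first step is to compute, for a fixed $i \in S$, the exact change in $M$ when we set $\vx_i$ from $1$ to $-1$. Writing the modified vector as $\vx - 2\v+e_i$ (valid since $\vx_i = 1$) and expanding the quadratic form while using the symmetry of $\m+A$, I get that $M$ changes by $\m+A_{ii} - (\m+A\vx)_i$; equivalently, $M$ \emph{drops} by $(\m+A\vx)_i - \m+A_{ii}$ (a signed quantity, negative when the flip actually increases $M$). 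Averaging over $i \in S$, it then remains to show $\sum_{i \in S}\big[(\m+A\vx)_i - \m+A_{ii}\big] \leq 2M$.

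The crux is to evaluate $\sum_{i\in S}(\m+A\vx)_i$, and this is precisely where the hypothesis $\v+1^{\intercal}\m+A = \v+0^{\intercal}$ enters. I would write the indicator vector of $S$ as $\vx^+ = \tfrac12(\vx + \v+1)$, so that $\sum_{i\in S}(\m+A\vx)_i = (\vx^+)^{\intercal}\m+A\vx = \tfrac12(\vx + \v+1)^{\intercal}\m+A\vx$. Since $\v+1^{\intercal}\m+A = \v+0^{\intercal}$ annihilates the $\v+1$ term, this collapses to $\tfrac12\vx^{\intercal}\m+A\vx = 2M$. The second ingredient is positive semidefiniteness: each diagonal entry satisfies $\m+A_{ii} = \v+e_i^{\intercal}\m+A\v+e_i \geq 0$, hence $\sum_{i\in S}\m+A_{ii}\geq 0$. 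Combining the two gives $\sum_{i\in S}\big[(\m+A\vx)_i - \m+A_{ii}\big] = 2M - \sum_{i\in S}\m+A_{ii} \leq 2M$, and the averaging step closes the argument.

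The obstacle worth flagging is conceptual rather than computational, and it is exactly the non-submodularity issue raised in the surrounding text. For a genuine graph Laplacian with nonnegative edge weights, the same sum telescopes to $M - 2\cdot\mathrm{internal}(S)$, where the internal-edge weight is \emph{nonnegative} and may be discarded, yielding the sharper factor $\tfrac{1}{\abs{S}}$. Once positive off-diagonal entries are allowed (equivalently, negative graph edges), this internal term can be negative and cannot be thrown away, so that route is closed. My approach avoids controlling the off-diagonal interactions inside $S$ altogether: the kernel condition $\m+A\v+1 = \v+0$ pins down $\sum_{i\in S}(\m+A\vx)_i$ exactly as $2M$, and only the diagonal sum needs a definite sign, which PSD supplies. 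The price for being unable to discard the internal term is precisely the loss from $\tfrac{1}{\abs{S}}$ to the claimed $\tfrac{2}{\abs{S}}$.
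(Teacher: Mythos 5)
Your proof is correct and is essentially the paper's argument in contrapositive form: the paper assumes every flip drops $M$ by more than $\frac{2M}{\abs{S}}$, sums over $i\in S$, uses $\v+1^{\intercal}\m+A=\v+0^{\intercal}$ to establish $\sum_{i\in S}\v+e_i^{\intercal}\m+A\vx=2M$, and derives $0>\sum_{i\in S}\v+e_i^{\intercal}\m+A\v+e_i$, contradicting positive semidefiniteness. Your direct averaging version rests on exactly the same two ingredients (the kernel identity pinning the sum at $2M$, and PSD giving a nonnegative diagonal sum), so it is the same proof stated affirmatively.
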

\begin{proof}
    We prove the lemma by contradiction. 
    Suppose there does not exist such $i \in S$ and
    let $\v+e_i \in \Real^n$ denote the vector whose $i$-th entry is $1$ and all other entries are $0$s.   
    Then for any $i$, it holds 
    \begin{align*}
		M - \frac{1}{4}(\vx - 2\v+e_i)^{\intercal} \m+A (\vx - 2\v+e_i) > \frac{2M}{\abs{S}}.
	\end{align*}
    Expanding and simplifying the formula, we get
	$$\v+e_i^{\intercal} \m+A \vx > \frac{2M}{\abs{S}} + \v+e_i^\intercal \m+A \v+e_i,$$ 
	for all $i\in S$.
	Summing this inequality over all $i\in S$, we obtain
    \begin{align}
	\label{eq:boundSUmodify-1}
		 \sum_{i\in S} \v+e_i^{\intercal} \m+A \vx > 2M + \sum_{i\in S} \v+e_i^\intercal \m+A \v+e_i.
	\end{align}

    By $\ind^{\intercal} \m+A = \m+0^{\intercal}$, and $\sum_{i\in S} \v+e_i + \sum_{i \in U} \v+e_i = \ind$, 
    we get
    \begin{equation*}
        \sum_{i\in S} \v+e_i^{\intercal} \m+A \vx + \sum_{i\in \bar{S}} \v+e_i^{\intercal} \m+A \vx
		= \ind^{\intercal} \m+A \vx 
		= 0. 
    \end{equation*}

	Thus,
	\begin{align}
        \label{eq:boundSUmodify-2}
		\sum_{i\in S} \v+e_i^{\intercal} \m+A \vx = - \sum_{i\in \bar{S}} \v+e_i^{\intercal} \m+A \vx.
	\end{align}

    Using $\sum_{i \in S} \v+e_i - \sum_{i \in \bar{S}} \v+e_i = \vx$ and Equation~\eqref{eq:boundSUmodify-2},
    we get 
    \begin{align}
	\label{eq:boundSUmodify-3}
        \vx^{\intercal} \m+A \vx 
		= \left(\sum_{i \in S} \v+e_i - \sum_{i \in \bar{S}} \v+e_i\right)^{\intercal} \m+A \vx
		= 2\sum_{i\in S} \v+e_i^{\intercal} \m+A \vx.
    \end{align}

	Recalling that $M = \frac{1}{4} \vx^{\intercal} \m+A \vx$ and
	using Equations~\eqref{eq:boundSUmodify-1} and~\eqref{eq:boundSUmodify-3},
    \begin{align*}
		 \sum_{i\in S} \v+e_i^{\intercal} \m+A \vx 
		 > \sum_{i\in S} \v+e_i^{\intercal} \m+A \vx + \sum_{i\in S} \v+e_i^\intercal \m+A \v+e_i.
	\end{align*}

	Thus,
	\begin{align*}
		 0 > \sum_{i \in S} \v+e_i^{\intercal} \m+A \v+e_i.  
    \end{align*}

    However, since $\m+A$ is positive semi\-definite we must have that
	$\v+e_i^{\intercal} \m+A \v+e_i \geq 0$ for all $i$ and thus 
	$\sum_{i \in S} \v+e_i^{\intercal} \m+A \v+e_i \geq 0$. This yields our
	desired contradiction.
\end{proof}

Next, let us consider how our approximation behaves when we apply 
Lemma~\ref{lemma:boundSUmodify} multiple times in a row.
Here, the issue is that we may need to apply the lemma more than
$\frac{\abs{S}}{2}$ times in a row and then a na\"{i}ve analysis would yield a cut
value of less than $M - \frac{\abs{S}}{2} \cdot \frac{2M}{\abs{S}} = 0$, i.e.,
we would not be able to obtain our desired approximation result. However, this
analysis is too pessimistic because it assumes that after each application of
the lemma, the cut decreases by a $\frac{2}{\abs{S}}$-frac\-tion with respect to the
\emph{initial} cut. Therefore, the following lemma presents a more refined
analysis, which takes into account that during each application of
Lemma~\ref{lemma:boundSUmodify}, the cut only decreases by a
$\frac{2}{\abs{S}}$-fraction with respect to the \emph{previous} cut.
A similar idea was used by Srivastav and Wolf~\cite[Lemma 1]{srivastav1998finding}
to solve the \emph{densest $k$-sub\-graph} problem. 

Intuitively, in the lemma $(S,\bar{S})$ corresponds to the cut we obtain from
the hyperplane rounding and $(T,\bar{T})$ corresponds to the $\alpha$-balanced
solution that we wish to return.
\begin{lemma}
    \label{lemma:cut-value-after-moving}
	Suppose that $\m+A$ is a symmetric, positive semi\-definite matrix with
	$\v+1^{\intercal} \m+A = \v+0^{\intercal}$ and let $\vx\in\{-1,1\}^n$.
    Let $M_0 = \frac{1}{4}\vx^{\intercal} \m+A \vx$,
	let $(S_0,\bar{S}_0)$ denote the cut induced by $\vx$ and assume that
	$\abs{S_0}\leq n/2$.
	Furthermore, let $s,t\in (0,\frac{1}{2}]$ be such that $\abs{S_0} = sn$
	and $tn$ is an integer. 
	Then there exists a set of nodes $T$ of size $\abs{T}=tn$ such that the
	cut $(T,\bar{T})$ has value at least 
	$\frac{(1-t)^2 -7(1-t)/n + 12/n^2}{(1-s)^2  + (1-s)/n} M_0$,
	if $t>s$, and value at least 
	$\frac{t^2 - t/n}{s^2 - s/n} M_0$,
	if $t<s$.
	Furthermore, $T$ can be found by repeatedly applying
	Lemma~\ref{lemma:boundSUmodify}.
\end{lemma}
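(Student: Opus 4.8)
The plan is to build the target set $T$ of size $tn$ by repeatedly invoking Lemma~\ref{lemma:boundSUmodify}, controlling the accumulated loss with the refined ``fraction of the current value'' accounting described above. Each single removal of a node from whichever set we are currently shrinking multiplies the cut value by a factor of at least $\frac{m-2}{m} = 1-\frac{2}{m}$, where $m$ is the current size of that set. Since $\m+A$ is positive semidefinite, the cut value is never negative, and since the relevant sizes encountered are always at least $2$, every such factor is non-negative; hence the per-step factors compose into a single telescoping product that is a valid lower bound on the final cut value. It remains to identify the sizes $m$ traversed in each of the two regimes and to evaluate the resulting product.

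Shrinking case ($t<s$): here $\abs{S_0}=sn>tn$, so I remove nodes directly from $S$ via Lemma~\ref{lemma:boundSUmodify}, and before the successive removals the set $S$ has sizes $sn, sn-1, \dots, tn+1$. Writing $M_T$ for the resulting cut value, this yields
\[
	\frac{M_T}{M_0} \;\geq\; \prod_{m=tn+1}^{sn} \frac{m-2}{m}
	\;=\; \frac{tn(tn-1)}{sn(sn-1)}
	\;=\; \frac{t^2 - t/n}{s^2 - s/n},
\]
where the middle equality is the standard telescoping collapse of $\prod\frac{m-2}{m}$ to a ratio of its extreme surviving factors. This matches the claimed bound for $t<s$ exactly.

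Growing case ($t>s$): here I cannot enlarge $S$ by removing from it, so I exploit the symmetry granted by $\v+1^{\intercal}\m+A=\v+0^{\intercal}$. The cut value $\frac{1}{4}\vx^{\intercal}\m+A\vx$ is invariant under $\vx\mapsto-\vx$, so applying Lemma~\ref{lemma:boundSUmodify} to $-\vx$ shows there is a node in $\bar{S}$ whose move into $S$ multiplies the cut value by at least $1-\frac{2}{\abs{\bar{S}}}$. Thus enlarging $S$ to size $tn$ is the same as shrinking $\bar{S}$ from $(1-s)n$ down to $(1-t)n$, and telescoping over the sizes $(1-s)n, \dots, (1-t)n+1$ of $\bar{S}$ gives a factor of the form $\frac{(1-t)n((1-t)n-1)}{(1-s)n((1-s)n-1)}$; relaxing the lower-order $1/n$ and $1/n^2$ terms then yields the stated (slightly weaker) bound $\frac{(1-t)^2-7(1-t)/n+12/n^2}{(1-s)^2+(1-s)/n}$. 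In both regimes the greedy rule of Algorithm~\ref{alg:SDP-based}, which moves the node that decreases the objective the least, is at least as good as the node guaranteed by the lemma, so the same product bounds the greedy trajectory and certifies that $T$ is found by repeated application of Lemma~\ref{lemma:boundSUmodify}.

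I expect the main obstacle to be the growing case. One must first recognize that Lemma~\ref{lemma:boundSUmodify}, which literally speaks only about removing a node from the $+1$ side, transfers to removing from $\bar{S}$ through the sign-flip symmetry afforded by $\m+A\v+1=\v+0$; and one must then carry the telescoping product over the \emph{large} set $\bar{S}$ (whose sizes lie between $n/2$ and $n$) while tracking the lower-order terms carefully enough to arrive at the stated form. By contrast, the shrinking case is a clean telescoping whose only bookkeeping is the product evaluation together with the observation that $tn$ is an integer, so that the removal indices align.
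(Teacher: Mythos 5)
Your proof is correct and follows essentially the same route as the paper's: repeated application of Lemma~\ref{lemma:boundSUmodify} (via the sign-flip symmetry $\vx\mapsto-\vx$ when enlarging $S$, which the paper leaves implicit) followed by telescoping the per-step factors $1-2/m$. Your exact telescoping in the growing case, $\frac{(1-t)n((1-t)n-1)}{(1-s)n((1-s)n-1)}$, is in fact slightly stronger than the paper's stated expression and dominates it term by term, so the relaxation you invoke is valid.
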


The proof of Theorem~\ref{thm:unbalanced} follows from applying
Lemma~\ref{lemma:cut-value-after-moving}, where we set $t=\alpha$ and
additionally we set $S_0$ to the set $S$ from Lemma~\ref{lem:standard} which
initially has cut value at least $\frac{2}{\pi}\OPT$. Then a case distinction
for $\abs{S}>tn$ and $\abs{S}<tn$ yields the theorem.

\subsection{Greedy heuristics}
\label{sec:greedy}

Next, we discuss two greedy heuristics, 
which can be applied in two different ways. 
First they can be used to solve 
Problem~\eqref{problem:max-disagreement}
in the model with full information, 
i.e., when the graph topology and the innate opinions of all users are available.

Second, by setting $\begop_0=\v+0$,
these greedy heuristics can be used to solve Problem~\eqref{eq:our-problem}, 
and thus, be used as subroutines for the first step 
of our approach in the model with limited information. 
In other words, they can be used to substitute the SDP-based algorithm
that we presented in the previous section. 
This is particularly useful, since solving an SDP is not scalable for large graphs, 
while the greedy methods are significantly more efficient.

\sbpara{Adaptive greedy}~(\cite{chen2020network}) initializes
$\begop = \begop_0$ and performs $k$ iterations. In each iteration, for all
indices $u$ it computes how the objective function changes when setting
$\begop_u = 1$. Then it picks the index $u$ that increases
the objective function the most.

\sbpara{Non-adaptive greedy} works similarly. In a first step, it initializes
$\begop = \begop_0$ and computes for all indices $u$ the score that indicates 
how the objective function changes when setting $\begop_u = 1$. Then it orders the indices
$u_1,\dots,u_n$ such that the score is non-increasing. Now it iterates over
$i=1,\dots,n$ and for each $i$, it sets $\begop_{u_i} = 1$ if this increases the
objective function; otherwise it proceeds with $i+1$. 
The non-adaptive greedy algorithm stops after it has changed $k$~entries.

\subsection{Computational hardness}
\label{sec:hardness}

Chen and Racz~\cite{chen2020network} left it as an open problem to prove that
maximizing the disagreement of the expressed opinions is \NPhard; they studied a
version of Problem~\ref{problem:max-disagreement} in which they had an
inequality constraint $\lVert \begop - \begop_0 \rVert \leq k$ rather than the
equality constraint we study and in which the adversary could pick a solution
vector~$\begop\in[0,1]^n$. We show that this problem, as well as
Problems~\eqref{problem:max-disagreement} and~\eqref{eq:our-problem} are \NPhard.
In addition, in Corollary~\ref{cor:our-problem-np-hard}, we show that these two 
problems are \NPhard even when $k = \Omega(n)$, 
which implies that Problem~\eqref{problem:maxcut-unbalanced} is also \NPhard when $\alpha$ 
is constant. 
\begin{theorem}
\label{thm:disagreement-np-hard}
	Problem~\eqref{problem:max-disagreement} is \NPhard for
	$\m+A=\MasIdx{\DisIdx{}}$, even for $\begop_0 = \v+0$. The problem by Chen
	and Racz~\cite{chen2020network} is \NPhard, even when $\begop_0=\v+0$ and
	$k=n$.  
	Problem~\eqref{eq:our-problem} with $\m+A=\MasIdx{\DisIdx{}}$ is also \NPhard.
\end{theorem}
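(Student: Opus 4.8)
The plan is to reduce from \textsc{MaxCut}, making precise the connection advertised in the introduction. The key algebraic fact is that for a $0/1$ vector $\begop$ and a graph Laplacian $\laplacian$, the quadratic form $\begop^{\intercal}\laplacian\begop$ equals the cut value of $S=\{u:\begop_u=1\}$. If the matrix in Problem~\eqref{problem:max-disagreement} were $\laplacian$ itself, maximizing $\begop^{\intercal}\m+A\begop$ over $\begop\in\{0,1\}^n$ would literally be \textsc{MaxCut}. The obstacle is that $\MasIdx{\DisIdx{}}=(\ID+\laplacian)^{-1}\laplacian(\ID+\laplacian)^{-1}$ applies the ``smoothing'' operator $(\ID+\laplacian)^{-1}$ before measuring disagreement, so $\begop^{\intercal}\MasIdx{\DisIdx{}}\begop$ is only approximately the cut value. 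I would neutralize this smoothing by scaling the edge weights down.

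Concretely, given a \textsc{MaxCut} instance $H$ with unit weights and Laplacian $\laplacian_H$, I would build an FJ graph $G$ whose edges carry a tiny weight $\lambda>0$, so that its Laplacian is $\lambda\laplacian_H$. Writing $B=(\ID+\lambda\laplacian_H)^{-1}$ and $\v+y=B\begop$, one has $\begop^{\intercal}\MasIdx{\DisIdx{}}\begop=\lambda\,\v+y^{\intercal}\laplacian_H\v+y$ and $\begop-\v+y=\lambda\laplacian_H\v+y$, so that $\|\begop-\v+y\|\le\lambda\|\laplacian_H\|\,\|\begop\|$ (using $\|B\|\le 1$). A two-line estimate of $\lambda(\v+y-\begop)^{\intercal}\laplacian_H(\v+y+\begop)$ then yields, uniformly over $\begop\in\{0,1\}^n$,
\[
  \bigl|\,\begop^{\intercal}\MasIdx{\DisIdx{}}\begop-\lambda\,\mathrm{cut}_H(S)\,\bigr|
  \;\le\; 2\lambda^2\,\|\laplacian_H\|^2\,n \;=:\; \epsilon \;=\; \lambda^2\cdot\mathrm{poly}(n).
\]
Since cut values are integers, choosing $\lambda=2^{-p(n)}$ for a suitable polynomial $p$ makes $2\epsilon/\lambda<1$ while keeping $\lambda$ of polynomial bit length, so the construction is polynomial. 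Setting the decision threshold to $\tau=\lambda(t-\tfrac12)$, I would check both directions: a cut of size at least $t$ gives disagreement at least $\tau$; conversely any $0/1$ solution with disagreement at least $\tau$ induces a set with $\lambda\,\mathrm{cut}_H(S)>\lambda(t-1)$, hence $\mathrm{cut}_H(S)\ge t$ by integrality. This establishes hardness of the support-free $0/1$ version.

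To obtain the three claimed statements, I would add two pieces of bookkeeping. For the continuous variants (Problem~\eqref{problem:max-disagreement}, the Chen–Racz problem with $k=n$, and Problem~\eqref{eq:our-problem}), I use that $\MasIdx{\DisIdx{}}$ is positive semidefinite by Lemma~\ref{lem:matrices}, so the convex objective is maximized at a vertex of the box $[0,1]^n$ and we may restrict to $\begop\in\{0,1\}^n$ without loss. For the cardinality constraint $\|\begop\|_0=k$, I would augment $G$ with $|V_H|$ isolated vertices and set $k=|V_H|$; because $G$ is then a disjoint union $H\sqcup I$, its disagreement matrix is block diagonal with the $H$-block equal to $\MasIdx{\DisIdx{}}$ and the isolated block equal to zero (the isolated vertices satisfy $z_u=s_u$ and contribute nothing). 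Hence any size-$k$ selection is equivalent to choosing an arbitrary $S\subseteq V_H$ and padding with isolated vertices, so support exactly $k$ is always attainable and the problem collapses to the unconstrained selection over $V_H$. The Chen–Racz inequality variant with $k=n$ needs no padding, since the constraint is then vacuous; note also that with $k=|V_H|=\Omega(n)$ this construction simultaneously supplies the $\Omega(n)$ strengthening of Corollary~\ref{cor:our-problem-np-hard}.

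The main obstacle is the quantitative perturbation analysis: I must show that the distortion introduced by $(\ID+\lambda\laplacian_H)^{-1}$ is strictly smaller than the integrality gap between competing cut values, and that the resulting $\lambda$ and threshold $\tau$ have polynomial bit length, so the reduction is genuinely polynomial. Once this gap bound is in place, the remaining ingredients—convexity to restrict to $\{0,1\}$ and isolated-vertex padding to encode the cardinality constraint—are routine, and the single construction yields all three hardness claims at once.
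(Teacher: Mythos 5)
Your proposal is correct and follows essentially the same route as the paper: reduce from {\sc MaxCut} by scaling the edge weights down by a small factor so that the smoothing operator $(\ID+\laplacian)^{-1}$ is close to the identity, bound the resulting distortion, use integrality of cut values to separate, invoke convexity to restrict to $\{0,1\}$ vertices, and use isolated-vertex padding for the cardinality constraint. The only differences are in execution rather than substance: you prove the perturbation bound with a short operator-norm argument where the paper carries out an elementwise analysis via the FJ fixed-point equations (Claim~\ref{claim:technical}), and you use padding to handle the equality constraint in the theorem itself, whereas the paper enumerates over all cardinalities $k'$ there and reserves the padding construction for Corollary~\ref{cor:our-problem-np-hard}.
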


\begin{corollary}
	\label{cor:our-problem-np-hard}
	Problem~\eqref{eq:our-problem} with $\m+A = \MasIdx{\DisIdx{}}$ and 
	$k \in \Omega(n)$ is \NPhard. 
\end{corollary}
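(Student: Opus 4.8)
The plan is to derive the corollary from Theorem~\ref{thm:disagreement-np-hard} by a padding reduction that inflates a hard instance until the cardinality budget is a constant fraction of the number of vertices, without changing the optimal value. Concretely, I would reduce from the instance produced in Theorem~\ref{thm:disagreement-np-hard} for the problem of Chen and Racz with $\begop_0 = \v+0$ and $k=n$. Since for $\begop_0=\v+0$ and $k=n$ the constraint $\lVert \begop \rVert_0 \le n$ is vacuous and the objective is convex, that instance is exactly the \emph{unconstrained} maximization of $\begop^{\intercal} \MasIdx{\DisIdx{}} \begop$ over $\begop \in \{0,1\}^{n_0}$, which the theorem asserts is \NPhard. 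I will show that this unconstrained problem is captured by Problem~\eqref{eq:our-problem} on a larger graph carrying an \emph{equality} cardinality constraint of size $\Omega(n)$.

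The first ingredient is a structural observation: if $G$ is a disjoint union of two graphs, then $\laplacian$ is block diagonal, hence so are $\laplacian+\ID$, its inverse, and therefore $\MasIdx{\DisIdx{}} = (\laplacian+\ID)^{-1}\laplacian(\laplacian+\ID)^{-1}$. In particular, an \emph{isolated} vertex contributes an all-zero row and column to $\MasIdx{\DisIdx{}}$, so it counts toward the support size $\lVert\begop\rVert_0$ but adds nothing to the objective. This is the whole gadget: isolated vertices act as zero-cost \emph{budget sinks}. Given the hard instance $G_0$ on $n_0$ vertices, I would form $G'$ by adjoining $n_0$ isolated vertices, so that $G'$ has $n' = 2n_0$ vertices, and set the budget to $k' = n_0 = n'/2 \in \Omega(n')$.

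With this construction, any feasible $\begop'$ for Problem~\eqref{eq:our-problem} on $(G',k')$ has objective equal to $\begop^{\intercal}\MasIdx{\DisIdx{}}\begop$ evaluated on the $G_0$-coordinates alone, and those coordinates single out an arbitrary subset $S$ of the vertices of $G_0$ with $\lvert S\rvert \le n_0$; this already bounds the value by the unconstrained $[0,1]$-optimum on $G_0$. Conversely, any subset $S$ with $\lvert S\rvert \le n_0$ extends to a feasible $\begop'$ by switching on exactly $n_0 - \lvert S\rvert$ of the isolated vertices to satisfy the equality constraint. Hence the optimum of Problem~\eqref{eq:our-problem} on $(G',k')$ equals the unconstrained binary optimum on $G_0$; using convexity of the objective to restrict attention to $0/1$ vectors, these two values coincide exactly. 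This yields the decision-problem equivalence, and therefore \NPhardness of Problem~\eqref{eq:our-problem} for $k' = \Omega(n')$.

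The main obstacle is precisely the \emph{equality} constraint $\lVert\begop\rVert_0 = k$ of Problem~\eqref{eq:our-problem}: unlike an inequality constraint, it does not on its own simulate ``choose any subset of size at most $n_0$,'' and an adversary could in principle trade budget between $G_0$ and the padding. The isolated-vertex gadget dissolves this difficulty, because every unit of surplus budget is absorbed at zero objective cost, so the equality constraint of size $n_0$ behaves exactly like the at-most-$n_0$ constraint on $G_0$. The only point requiring care is the block-diagonality of $\MasIdx{\DisIdx{}}$ under disjoint unions and the resulting zero block for isolated vertices; the remainder is a short counting argument, which is why this is a corollary rather than a standalone theorem.
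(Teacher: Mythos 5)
Your proposal is correct and matches the paper's own proof essentially step for step: both pad the hard instance with $n_0$ isolated vertices, observe that $\MasIdx{\DisIdx{}}$ becomes block-diagonal with a zero block on the padding, and conclude that the equality budget $k = n/2$ can be absorbed at zero cost so the optima coincide. The only cosmetic difference is that you make the two-directional counting argument explicit, which the paper states more briefly.
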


\section{Experimental evaluation}
\label{sec:experiments}

We empirically evaluate the methods we propose.
Due to lack of space, we only present here our results for maximizing the
disagreement. We defer our results for maximizing the polarization to 
Appendix~\ref{sec:experiments-polarization}.

Our objective is to answer the following research questions:

\begin{description}
	\item[RQ1:] Does the SDP-based method outperform the greedy methods?
	\item[RQ2:] Is there a big gap between the settings with full information and with
		limited information?
	\item[RQ3:] Which dataset parameters determine the gap between full and limited
		information?
	\item[RQ4:] How does our approach scale with respect to~$k$? 
\end{description}

Our implementations are available in on GitHub.\footnote{\, \url{https://github.com/SijingTu/KDD-23-Adversaries-With-Limited-Information}}

\sbpara{Algorithms.}
In our experiments, we consider several algorithms that work with full
information and limited information. 

First, our algorithms with \emph{full} information are as follows. 
We use the two greedy algorithms described in Section~\ref{sec:greedy};
we refer to the adaptive greedy as \AGFull\ and 
the non-adaptive greedy as \NAGFull. 
We use the suffix~\FullInfo
to indicate that they use full information.
For \AGFull we adapt the implementation of Chen and Racz~\cite{chen2020network}.\footnote{
	\label{foot:chen-code}
	\url{https://github.com/mayeechen/network-disruption}
}	

Second, we use the suffix \LimitedInfo to refer to our methods with
\emph{limited} information, which only know the network structure
(see Section~\ref{sec:algorithms}).
For picking the seed nodes, we consider the following algorithms:
\AG\ is the adaptive greedy algorithm with $\begop_0=\v+0$, 
\NAG\ is the non-adaptive greedy algorithm with $\begop_0=\v+0$, and
\SDPalgo\ is the SDP-based algorithm from Theorem~\ref{thm:unbalanced}.
\IM\ finds the seed nodes by solving the influence-maximization
problem~\cite{kempe2015maximizing} and our implementation is based on the
Martingale approach, i.e., IMM, proposed by Tang et
al.~\cite{tang2015influence}; we set the graph edge weights as in
the weighted cascade model~\cite{kempe2015maximizing}.
\Random\ randomly picks~$k$ nodes,  and 
\Deg\ picks the $k$~nodes of the highest degree.

\sbpara{Datasets.}
We present statistics for our smaller datasets in
Table~\ref{tab:disagreement-small} and for our larger datasets in
Table~\ref{tab:disagreement-large}.  
For each of the datasets, we provide the number of vertices and edges. We also
report the \emph{normalized disagreement index}
$\DisIdx{G, \begop}' = \frac{\DisIdx{G, \begop} \cdot 10^5}{\abs{E}}$, where we
normalize by the number of edges for better comparison across different datasets
and we multiply with $10^5$ because $\frac{\DisIdx{G, \begop}}{\abs{E}}$ is
typically very small. We also report average innate opinions~$\bar{\begop}_0$
and the standard deviation of the innate opinions~$\sigma(\begop_0)$.

We note that the datasets \karate, \books, \blogs, \SBM and \GplusLTWO do not 
contain ground-truth opinions. 
However, the these datasets contain ground-truth communities; thus,
we set the nodes' innate opinions by sampling from Gaussian distributions with different
parameters, based on the community membership.
More details for all
datasets are presented in Appendix~\ref{sec:experiments-datasets}.

\begin{table*}[t]
	\centering
	\caption{Results on the small datasets for the relative increase of the
		disagreement, where we set $k = 10\%\,n$. For each dataset, we have marked the
		highest value in bold and we have made the highest value for each
		setting italic.}
	\label{tab:disagreement-small}
	\resizebox{\textwidth}{!}{%
		\begin{tabular}{@{}RRRRRRRRRRRRRRRR@{}}
  \toprule
  \multirow{2}{*}{\textsf{dataset}} & 
  \multicolumn{5}{c}{dataset properties} & 
  \multicolumn{2}{c}{full information} & 
  \multicolumn{6}{c}{limited information}\\
  \cmidrule(lr){2-6} \cmidrule(lr){7-8} \cmidrule(l){9-14} 
  & \abs{V} & \abs{E} &  \DisIdx{G, {\begop_0}}' & \bar{\begop}_0 & \sigma(\begop_0) & 
  \NAGFull & \AGFull & \SDPalgo\LimitedInfo &\NAG\LimitedInfo & \AG\LimitedInfo & \Deg\LimitedInfo & \IM\LimitedInfo & \Random\LimitedInfo \\
  \midrule
\karate & 34 & 78          & 179.156 & 0.194 & 0.146 &  2.744  & \emph{2.824}    & \textbf{2.872} & 2.706 & \textbf{2.872} & 0.838 & 0.753 & 2.040\\
\books & 105 & 441         & 76.476 & 0.217 & 0.147  &  3.087  & \emph{3.457}    & \textbf{3.584} & 2.987 & 3.399 & 0.661 & 0.924 & 1.973\\
\Twitter & 548 & 3\,638    & 10.679 & 0.602 & 0.080  &  4.361  & \emph{4.468}    & \textbf{4.646} & 4.243 & 4.312 & 1.055 & 0.966 & 2.160\\
\Reddit & 556 & 8\,969     & 0.400 & 0.498 & 0.042   &  48.344 & \textbf{48.581} & \emph{48.571} & 48.356 & \emph{48.571} & 6.803 & 8.174 & 14.581\\
\SBM &
1\,000 &
74\,568 &
0.245 &
0.346 &
0.147 &
1.890 & \textbf{1.972} & \emph{1.881} & 1.444 & 1.806 & 1.123 & 1.102 & 1.301\\
\blogs & 1\,222 & 16\,717  & 18.514 & 0.205 & 0.131  & 6.518   & \textbf{6.635}  & \emph{6.555} & 6.452 & 6.462 & 0.747 & 0.708 & 2.172\\
  \bottomrule
\end{tabular}

	}
\end{table*}

\sbpara{Evaluation.} To evaluate our methods, we compare the initial
disagreement with the disagreement after the algorithms changed the innate
opinions. More concretely, let $\begop_0$ denote the initial innate opinions and
let $\begop$ denote the output of an algorithm. We report the score
$\frac{\begop^{\intercal} \m+A \begop - \begop_0^{\intercal} \m+A \begop_0}{\begop_0^{\intercal} \m+A \begop_0}$,
where $\m+A$ is one of the matrices~$\MasIdx{\DisIdx{}}$ or $\MasIdx{\PolIdx{}}$
from Section~\ref{sec:preliminaries}.
For example, if $\m+A=\MasIdx{\DisIdx{}}$ then we measure the
relative increase in disagreement compared to the initial setting.

\sbpara{Maximizing disagreement on small datasets.}
We start by studying the performance of our methods for maximizing
disagreement. We present the results on small datasets in
Table~\ref{tab:disagreement-small}, where $k=10\%\,n$. 
We consider these small datasets
as they allow us to evaluate our \SDPalgo-based algorithm, 
which does not scale to larger~graphs.

Our results in Table~\ref{tab:disagreement-small} show that for all datasets, our
limited-information algorithms, i.e., \SDPalgo\LimitedInfo, \NAG\LimitedInfo,
and \AG\LimitedInfo, perform surprisingly well.  Indeed, on all datasets these
algorithms have performance similar to the algorithms using full information.
Surprisingly, on \karate, \books, and \Twitter, \SDPalgo\LimitedInfo outperforms
the best algorithms with full information, even though only by very small
margins.  
Since \SDPalgo\LimitedInfo is the best method only on the three smallest
datasets, we believe that this exceptionally good performance 
is an artifact of the datasets being small.

Among the three algorithms with limited information, \SDPalgo\LimitedInfo
performs best on all datasets, but the gap to the two greedy algorithms
is relatively small. This answers {\bf RQ1}. 

We also observe that the SDP and greedy algorithms with limited information
achieve better results than the baselines.

Next, we note that for the \Reddit dataset, the disagreement increases by a factor of more
than~48.  A close look at the ground-truth opinions on \Reddit reveals that the
standard deviation of the innate opinions is just~0.042, and the normalized initial 
disagreement is also among the second smallest. 
These two factors make the dataset susceptible to increasing
the disagreement by a large amount.

\sbpara{Maximizing disagreement on larger datasets.}
Next, we consider the larger datasets in Table~\ref{tab:disagreement-large} with
$k=1\%\,n$. Here, we drop \SDPalgo\LimitedInfo due to scalability issues.

First, we observe that for the larger datasets, the dataset properties, such as, the normalized 
initial disagreement, the mean of the innate opinions, and the standard deviations of the 
innate opinions are similar to those of the smaller datasets.
Due to these similarities, we expect a similar gap between the full-information algorithms and 
the limited-information algorithms as in the smaller datasets.

Second, we observe that the methods with full information indeed are just slightly 
better than \NAG\LimitedInfo and \AG\LimitedInfo over all the datasets.  
The biggest gap in performance is on~\TweetSFOUR where the full-information
methods are about 40\% better. Note that both $\bar{\begop}_0=0.568$ and
$\sigma(\begop_0)=0.302$ are large for \TweetSFOUR; this is somewhat
uncharacteristic for our other datasets, which have either smaller
$\bar{\begop}_0$ or smaller $\sigma(\begop_0)$.
We also observe that the there is no clear winner between $\NAG$ and $\AG$ in
the limited information setting, which have very similar performance. In
addition, the greedy algorithms clearly outperform the baseline algorithms.
We present the running time analysis in Appendix~\ref{sec:experiments-running-time}.

\begin{table*}
	\centering
	\caption{Results on the large datasets for the relative increase of the
		disagreement, where we set $k = 1\%\,n$. For each dataset, we have marked the
		highest value in bold and we have made the highest value for each setting italic.}
	\label{tab:disagreement-large}
	\resizebox{\textwidth}{!}{%
		\begin{tabular}{@{}RRRRRRRRRRRRRRR@{}}
  \toprule
  \multirow{2}{*}{\textsf{dataset}} & 
  \multicolumn{5}{c}{dataset properties} & 
  \multicolumn{2}{c}{full information} & 
  \multicolumn{5}{c}{limited information}\\
  \cmidrule(lr){2-6} \cmidrule(lr){7-8} \cmidrule(l){9-13} 
  & \abs{V} & \abs{E} &  \DisIdx{G, {\begop_0}}' & \bar{\begop}_0 & \sigma(\begop_0) & 
  \NAGFull & \AGFull & \NAG\LimitedInfo & \AG\LimitedInfo & \Deg\LimitedInfo & \IM\LimitedInfo & \Random\LimitedInfo \\
  \midrule
\TweetSTWO  & 
1\,999 &
40\,498 &
29.024 &
0.216 &
0.258 &
\textbf{0.243} & \textbf{0.243} & 0.215 & \emph{0.219} & -0.012 & -0.011 & 0.075\\
\TweetSFOUR & 
3\,999 &
222\,268 &
10.765 &
0.568 &
0.302 &
\textbf{0.074} & \textbf{0.074} & \emph{0.053} & 0.050 & -0.009 & -0.006 & -0.001\\
\TweetMFIVE & 
4\,999 &
245\,085 &
1.256 &
0.075 &
0.077 &
\textbf{3.340} & \textbf{3.340} & \emph{3.148} & 3.041 & -0.027 & -0.024 & 0.524\\
\TweetLTWO  & 
21\,999 &
860\,884 &
5.890 &
0.106 &
0.136 &
\textbf{0.856} & \textbf{0.856} & \emph{0.844} & 0.808 & -0.024 & -0.013 & 0.222\\
\GplusLTWO  & 
22\,999 &
5\,474\,133 &
0.059 &
0.300 &
0.101 &
7.106 & \textbf{7.133} & 7.098 & \emph{7.120} & 0.015 & 0.016 & 0.502\\
  \bottomrule
\end{tabular}

	}
\end{table*}

Summarizing our results, we can answer {\bf RQ2}:
we find
that the setting with limited information is at most a factor of~$1.4$ worse than
the setting with full information. 

\sbpara{Relationship of dataset parameters and the gap between full and limited
	information.}
To understand how the dataset parameters influence the performance of our
algorithms with limited information, we perform a regression analysis and report
the results in Figure~\ref{fig:regression}. On the $y$-axis, we consider the
ratio between the best of $\NAG\LimitedInfo$ and $\AG\LimitedInfo$, which only use limited information,
and the best method with full information.
Observe that this ratio can be viewed as the gap between having full and having
limited information. On the $x$-axis, we plot the dataset parameters
$\DisIdx{G, \begop}'$, $\bar{\begop}_0$ and $\sigma(\begop_0)$.

First, we find that there is a low correlation between the ratio of
limited/full-information algorithms and the average innate
opinions~$\bar{\begop}_0$ ($R^2 = 0.17$) and the initial disagreement
$\DisIdx{G, \begop}'$ ($R^2=0.08$) in the datasets.
Second, we find that the correlation between the standard deviation of the
innate opinions~$\sigma(\begop_0)$ is moderately high ($R^2=0.62$). 

These finding align well with the intuition that if~$\sigma(\begop_0)$
is high, an adversary that only knows the graph lacks more information than
when~$\sigma(\begop_0)$ is small;
additionally, note that if $\sigma(\begop_0)$ is small, then the vector
$\epsvec$ from Theorem~\ref{thm:relationship} will have small norm and the second and the third
condition of the theorem should be satisfied on our datasets.
Similarly, it is intuitive that $\bar{\begop}_0$ should not have a large impact
on the adversary's decisions if it is not too high (here we consider datasets
		with $\bar{\begop}_0\leq 0.61$). However, in preliminary experiments
(not reported here) we also observed that if $\bar{\begop}_0$ is very large
($\bar{\begop}_0 \geq 0.8$) then the performance of the algorithms becomes much
worse.
Furthermore, it might be considered somewhat surprising that the correlation with~$\DisIdx{G, \begop}'$ is low, 
since one might intuitively expect that $\DisIdx{G, \begop}'$ and~$\sigma(\begop_0)$
should be closely related.
For this discrepancy, we note that $\DisIdx{G, \begop}'$ also involves the network structure.

Hence, we can answer {\bf RQ3}:
we find that the standard deviation of the initial opinions is the most important for
determining the gap between full and limited information, while the average
innate opinions and initial disagreement play no major role.

\begin{figure*}[t]
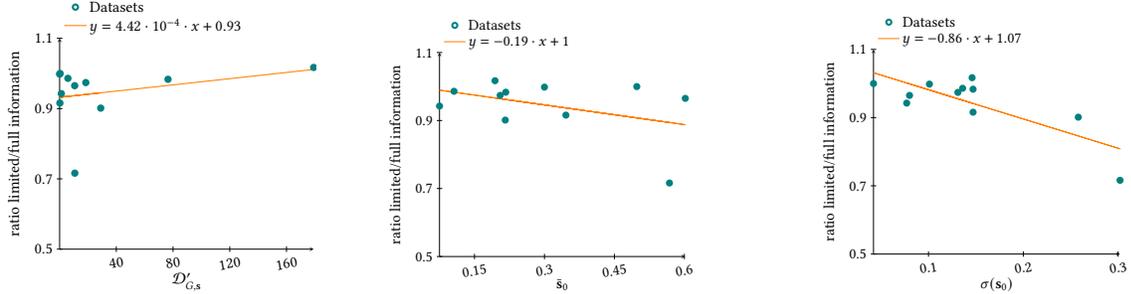

	\centering 
	\resizebox{\textwidth}{!}{%
    \begin{tabular}{ccc}
        \resizebox{0.3\textwidth}{!}{%
			\inputtikz{tikz/regression_disagreement}
		}&
        \resizebox{0.3\textwidth}{!}{%
			\inputtikz{tikz/regression_avg}
		}&
        \resizebox{0.3\textwidth}{!}{%
			\inputtikz{tikz/regression_std}
		}\\
		(a)~initial disagreement, $R^2=0.08$ &
		(b)~average innate opinion, $R^2=0.17$ &
		(c) standard deviation of opinions, $R^2=0.62$ \\
	\end{tabular}
	}
	\caption{Regression analysis of dataset parameters and the ratio between the
		best methods with full and limited information.
		}
	\label{fig:regression}
\end{figure*}

\sbpara{Dependency on~$k$.}
For $k=0.5\%\,n,1\%\,n,\dots,2.5\%\,n$, we present our results on \TweetLTWO and
\GplusLTWO in Figure~\ref{fig:scale-k}.  The figure indicates that the
disagreement grows linearly in~$k$; this behavior was also suggested by the
upper bounds of Chen and Racz~\cite{chen2020network} and Gaitonde et
al.~\cite{gaitonde2020adversarial} who considered a slightly stronger adversary.
Similar to the results in Table~\ref{tab:disagreement-large}, \AG\FullInfo is
the best method, followed by \NAG\LimitedInfo and \AG\LimitedInfo.  The ranking
of the algorithms is consistent across the different values of~$k$.
This answers {\bf RQ4}.

\begin{figure}[t]
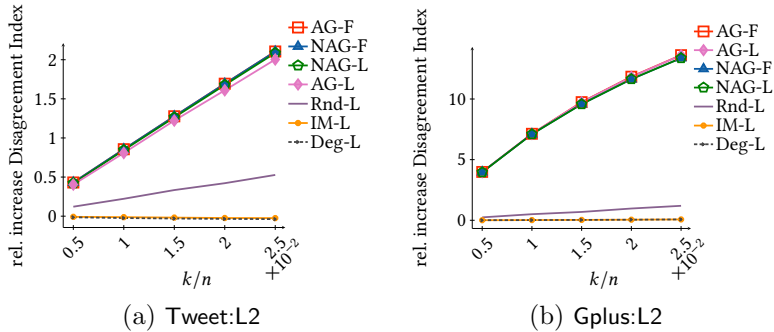

	\centering 
    \begin{tabular}{cc}
        \resizebox{0.3\columnwidth}{!}{%
			\inputtikz{tikz/twitterl2dis}
		}&
		\resizebox{0.3\columnwidth}{!}{%
			\inputtikz{tikz/gplusl2dis}
		}\\
		(a)~\TweetLTWO & (b)~\GplusLTWO \\
	\end{tabular}
	\caption{Results for the relative increase of the
		disagreement on \TweetLTWO and \GplusLTWO. Here, we vary
		$k=0.5\%\,n, 1\%\,n, \dots, 2.5\%\,n$.}
	\label{fig:scale-k}
\end{figure}

\sbpara{Additional experiments.}
In the appendix we present additional experiments.  
First, in Appendix~\ref{sec:experiments-influential}
we evaluate the algorithms for solving Problem~\eqref{eq:our-problem}.  
Second, in Appendix~\ref{sec:experiments-polarization}
we also use our algorithms to maximize the polarization in the network; we remark
that all of our results extend to this setting, including the guarantees from
Theorem~\ref{thm:unbalanced}.

\section{Conclusion}
\label{sec:conclusion}

We have studied how adversaries can sow discord in social networks, even when
they only have access to the network topology and cannot assess the opinions of
the users. We proposed a framework in which we first detect a small set of users who
are highly influential on the discord and then we change the opinions of
these users. We showed experimentally that our approach can increase the
discord significantly in practice and that it performs within a constant
factor to the greedy algorithms that have access to the full information about the
user opinions.

Our practical results demonstrate that attackers of social networks are quite
powerful, even when they can only access the network topology. From an ethical
point of view, these findings showcase the power of malicious attackers to 
sow discord and increase polarization in social networks.
However, to draw a final conclusion further study is needed, for
example because the assumption that the adversary can radicalize $k$~opinions
\emph{arbitrarily much} may be too strong. Nonetheless, the upshot is that 
by understanding attackers with limited information, one may be able to make
recommendations to policy makers regarding the data that social-network
providers can share with the public.

Furthermore, in this paper we only studied one possible definition for
disagreement that is common in the computer science
literature~\cite{chen2020network,gaitonde2020adversarial}. Klofstad et
al.~\cite{klofstad2013disagreeing} point out that in the political science
literature there are different viewpoints on how disagreement should be defined,
and that these different definitions will lead to different conclusions, with
different empirical and democratic consequences.  Understanding the connection
of our definition and the ones in political science is an interesting question.
Also, Edenberg~\cite{edenberg2021problem} argues that to solve current societal
problems like polarization, purely technical solutions, such as social media
literacy campaigns and fact checking, are not enough; instead \emph{``we must find
ways to cultivate mutual respect for our fellow citizens in order to reestablish
common moral ground for political debate.’’} 
While certainly true, such considerations and course of actions 
are out of the scope of our paper.

As we already mentioned above, in future work it will be interesting to validate
which adversary models are realistic in practice. Theoretically, it is
interesting to obtain approximation algorithms for
Problem~\eqref{problem:max-disagreement} and the problem by Chen and
Racz~\cite{chen2020network}; note that such algorithms must generalize our
result from Theorem~\ref{thm:unbalanced}, as Problem~\eqref{eq:our-problem} is a
special case of Problem~\eqref{problem:max-disagreement}.

\section*{Acknowledgements}
We are grateful to Tianyi Zhou for providing the Twitter datasets with innate
opinions. We thank Sebastian Lüderssen for pointing out a mistake in an earlier
version of this paper. This research is supported by the Academy of Finland project MLDB
(325117), the ERC Advanced Grant REBOUND (834862),
the EC H2020 RIA project SoBigData++ (871042),
and the Wallenberg AI, Autonomous Systems and Software Program (WASP)
funded by the Knut and Alice Wallenberg Foundation.
The computations were enabled by resources in project 
SNIC 2022/22-631 provided by Uppsala University at UPPMAX.

\clearpage

\bibliographystyle{ACM-Reference-Format}
\balance
\bibliography{bibclean}


\begin{thebibliography}{44}


\ifx \showCODEN    \undefined \def \showCODEN     #1{\unskip}     \fi
\ifx \showDOI      \undefined \def \showDOI       #1{#1}\fi
\ifx \showISBNx    \undefined \def \showISBNx     #1{\unskip}     \fi
\ifx \showISBNxiii \undefined \def \showISBNxiii  #1{\unskip}     \fi
\ifx \showISSN     \undefined \def \showISSN      #1{\unskip}     \fi
\ifx \showLCCN     \undefined \def \showLCCN      #1{\unskip}     \fi
\ifx \shownote     \undefined \def \shownote      #1{#1}          \fi
\ifx \showarticletitle \undefined \def \showarticletitle #1{#1}   \fi
\ifx \showURL      \undefined \def \showURL       {\relax}        \fi
\providecommand\bibfield[2]{#2}
\providecommand\bibinfo[2]{#2}
\providecommand\natexlab[1]{#1}
\providecommand\showeprint[2][]{arXiv:#2}

\bibitem[\protect\citeauthoryear{Abebe, Chan, Kleinberg, Liang, Parkes, Sozio,
  and Tsourakakis}{Abebe et~al\mbox{.}}{2021}]%
        {abebe2021opinion}
\bibfield{author}{\bibinfo{person}{Rediet Abebe}, \bibinfo{person}{T-H~Hubert
  Chan}, \bibinfo{person}{Jon Kleinberg}, \bibinfo{person}{Zhibin Liang},
  \bibinfo{person}{David Parkes}, \bibinfo{person}{Mauro Sozio}, {and}
  \bibinfo{person}{Charalampos~E Tsourakakis}.}
  \bibinfo{year}{2021}\natexlab{}.
\newblock \showarticletitle{Opinion Dynamics Optimization by Varying
  Susceptibility to Persuasion via Non-Convex Local Search}.
\newblock \bibinfo{journal}{\emph{TKDD}} \bibinfo{volume}{16},
  \bibinfo{number}{2} (\bibinfo{year}{2021}), \bibinfo{pages}{1--34}.
\newblock


\bibitem[\protect\citeauthoryear{Acemoglu and Ozdaglar}{Acemoglu and
  Ozdaglar}{2011}]%
        {acemoglu2011opinion}
\bibfield{author}{\bibinfo{person}{Daron Acemoglu} {and}
  \bibinfo{person}{Asuman Ozdaglar}.} \bibinfo{year}{2011}\natexlab{}.
\newblock \showarticletitle{Opinion dynamics and learning in social networks}.
\newblock \bibinfo{journal}{\emph{Dynamic Games and Applications}}
  \bibinfo{volume}{1} (\bibinfo{year}{2011}), \bibinfo{pages}{3--49}.
\newblock


\bibitem[\protect\citeauthoryear{Ageev and Sviridenko}{Ageev and
  Sviridenko}{1999}]%
        {ageev1999approximation}
\bibfield{author}{\bibinfo{person}{Alexander~A Ageev} {and}
  \bibinfo{person}{Maxim~I Sviridenko}.} \bibinfo{year}{1999}\natexlab{}.
\newblock \showarticletitle{Approximation algorithms for maximum coverage and
  max cut with given sizes of parts}. In \bibinfo{booktitle}{\emph{IPCO}}.
  \bibinfo{pages}{17--30}.
\newblock


\bibitem[\protect\citeauthoryear{Amelkin, Bullo, and Singh}{Amelkin
  et~al\mbox{.}}{2017}]%
        {amelkin17polar}
\bibfield{author}{\bibinfo{person}{Victor Amelkin}, \bibinfo{person}{Francesco
  Bullo}, {and} \bibinfo{person}{Ambuj~K. Singh}.}
  \bibinfo{year}{2017}\natexlab{}.
\newblock \showarticletitle{Polar Opinion Dynamics in Social Networks}.
\newblock \bibinfo{journal}{\emph{{IEEE} Trans. Autom. Control.}}
  \bibinfo{volume}{62}, \bibinfo{number}{11} (\bibinfo{year}{2017}),
  \bibinfo{pages}{5650--5665}.
\newblock


\bibitem[\protect\citeauthoryear{Arora and Barak}{Arora and Barak}{2009}]%
        {arora2009computational}
\bibfield{author}{\bibinfo{person}{Sanjeev Arora} {and} \bibinfo{person}{Boaz
  Barak}.} \bibinfo{year}{2009}\natexlab{}.
\newblock \bibinfo{booktitle}{\emph{Computational complexity: a modern
  approach}}.
\newblock \bibinfo{publisher}{Cambridge University Press}.
\newblock
\urldef\tempurl%
\url{https://theory.cs.princeton.edu/complexity/ appendixchap.pdf}
\showURL{%
\tempurl}


\bibitem[\protect\citeauthoryear{Austrin, Benabbas, and Georgiou}{Austrin
  et~al\mbox{.}}{2016}]%
        {austrin2016better}
\bibfield{author}{\bibinfo{person}{Per Austrin}, \bibinfo{person}{Siavosh
  Benabbas}, {and} \bibinfo{person}{Konstantinos Georgiou}.}
  \bibinfo{year}{2016}\natexlab{}.
\newblock \showarticletitle{Better balance by being biased: A
  0.8776-approximation for max bisection}.
\newblock \bibinfo{journal}{\emph{TALG}} \bibinfo{volume}{13},
  \bibinfo{number}{1} (\bibinfo{year}{2016}), \bibinfo{pages}{1--27}.
\newblock


\bibitem[\protect\citeauthoryear{Barber{\'a}}{Barber{\'a}}{2015}]%
        {barbera2015birds}
\bibfield{author}{\bibinfo{person}{Pablo Barber{\'a}}.}
  \bibinfo{year}{2015}\natexlab{}.
\newblock \showarticletitle{Birds of the same feather tweet together: Bayesian
  ideal point estimation using Twitter data}.
\newblock \bibinfo{journal}{\emph{Political analysis}} \bibinfo{volume}{23},
  \bibinfo{number}{1} (\bibinfo{year}{2015}), \bibinfo{pages}{76--91}.
\newblock


\bibitem[\protect\citeauthoryear{Bindel, Kleinberg, and Oren}{Bindel
  et~al\mbox{.}}{2015}]%
        {bindel2015bad}
\bibfield{author}{\bibinfo{person}{David Bindel}, \bibinfo{person}{Jon
  Kleinberg}, {and} \bibinfo{person}{Sigal Oren}.}
  \bibinfo{year}{2015}\natexlab{}.
\newblock \showarticletitle{How bad is forming your own opinion?}
\newblock \bibinfo{journal}{\emph{Games and Economic Behavior}}
  \bibinfo{volume}{92} (\bibinfo{year}{2015}), \bibinfo{pages}{248--265}.
\newblock


\bibitem[\protect\citeauthoryear{Boutyline and Willer}{Boutyline and
  Willer}{2017}]%
        {boutyline2017social}
\bibfield{author}{\bibinfo{person}{Andrei Boutyline} {and}
  \bibinfo{person}{Robb Willer}.} \bibinfo{year}{2017}\natexlab{}.
\newblock \showarticletitle{The social structure of political echo chambers:
  Variation in ideological homophily in online networks}.
\newblock \bibinfo{journal}{\emph{Political psychology}} \bibinfo{volume}{38},
  \bibinfo{number}{3} (\bibinfo{year}{2017}), \bibinfo{pages}{551--569}.
\newblock


\bibitem[\protect\citeauthoryear{Brady, Wills, Jost, Tucker, and
  Van~Bavel}{Brady et~al\mbox{.}}{2017}]%
        {brady2017emotion}
\bibfield{author}{\bibinfo{person}{William~J Brady}, \bibinfo{person}{Julian~A
  Wills}, \bibinfo{person}{John~T Jost}, \bibinfo{person}{Joshua~A Tucker},
  {and} \bibinfo{person}{Jay~J Van~Bavel}.} \bibinfo{year}{2017}\natexlab{}.
\newblock \showarticletitle{Emotion shapes the diffusion of moralized content
  in social networks}.
\newblock \bibinfo{journal}{\emph{Proceedings of the National Academy of
  Sciences}} \bibinfo{volume}{114}, \bibinfo{number}{28}
  (\bibinfo{year}{2017}), \bibinfo{pages}{7313--7318}.
\newblock


\bibitem[\protect\citeauthoryear{Castellano, Fortunato, and Loreto}{Castellano
  et~al\mbox{.}}{2009}]%
        {castellano2009statistical}
\bibfield{author}{\bibinfo{person}{Claudio Castellano}, \bibinfo{person}{Santo
  Fortunato}, {and} \bibinfo{person}{Vittorio Loreto}.}
  \bibinfo{year}{2009}\natexlab{}.
\newblock \showarticletitle{Statistical physics of social dynamics}.
\newblock \bibinfo{journal}{\emph{Reviews of modern physics}}
  \bibinfo{volume}{81}, \bibinfo{number}{2} (\bibinfo{year}{2009}),
  \bibinfo{pages}{591}.
\newblock


\bibitem[\protect\citeauthoryear{Chen and Racz}{Chen and Racz}{2021}]%
        {chen2020network}
\bibfield{author}{\bibinfo{person}{Mayee~F. Chen} {and}
  \bibinfo{person}{Miklos~Z Racz}.} \bibinfo{year}{2021}\natexlab{}.
\newblock \showarticletitle{An Adversarial Model of Network Disruption:
  Maximizing Disagreement and Polarization in Social Networks}.
\newblock \bibinfo{journal}{\emph{IEEE Transactions on Network Science and
  Engineering}} (\bibinfo{year}{2021}), \bibinfo{pages}{1--1}.
\newblock
\urldef\tempurl%
\url{https://doi.org/10.1109/TNSE.2021.3131416}
\showDOI{\tempurl}


\bibitem[\protect\citeauthoryear{De, Bhattacharya, Bhattacharya, Ganguly, and
  Chakrabarti}{De et~al\mbox{.}}{2019}]%
        {de2019learning}
\bibfield{author}{\bibinfo{person}{Abir De}, \bibinfo{person}{Sourangshu
  Bhattacharya}, \bibinfo{person}{Parantapa Bhattacharya},
  \bibinfo{person}{Niloy Ganguly}, {and} \bibinfo{person}{Soumen Chakrabarti}.}
  \bibinfo{year}{2019}\natexlab{}.
\newblock \showarticletitle{Learning Linear Influence Models in Social Networks
  from Transient Opinion Dynamics}.
\newblock \bibinfo{journal}{\emph{{ACM} Trans. Web}} \bibinfo{volume}{13},
  \bibinfo{number}{3} (\bibinfo{year}{2019}), \bibinfo{pages}{16:1--16:33}.
\newblock


\bibitem[\protect\citeauthoryear{DeGroot}{DeGroot}{1974}]%
        {degroot1974reaching}
\bibfield{author}{\bibinfo{person}{Morris~H DeGroot}.}
  \bibinfo{year}{1974}\natexlab{}.
\newblock \showarticletitle{Reaching a consensus}.
\newblock \bibinfo{journal}{\emph{J Am Stat Assoc}} \bibinfo{volume}{69},
  \bibinfo{number}{345} (\bibinfo{year}{1974}), \bibinfo{pages}{118--121}.
\newblock


\bibitem[\protect\citeauthoryear{DiResta, Shaffer, Ruppel, Sullivan, Matney,
  Fox, Albright, and Johnson}{DiResta et~al\mbox{.}}{2019}]%
        {diresta2019tactics}
\bibfield{author}{\bibinfo{person}{Renee DiResta}, \bibinfo{person}{Kris
  Shaffer}, \bibinfo{person}{Becky Ruppel}, \bibinfo{person}{David Sullivan},
  \bibinfo{person}{Robert Matney}, \bibinfo{person}{Ryan Fox},
  \bibinfo{person}{Jonathan Albright}, {and} \bibinfo{person}{Ben Johnson}.}
  \bibinfo{year}{2019}\natexlab{}.
\newblock \showarticletitle{The tactics \& tropes of the {I}nternet {R}esearch
  {A}gency}.
\newblock  (\bibinfo{year}{2019}).
\newblock


\bibitem[\protect\citeauthoryear{Edenberg}{Edenberg}{2021}]%
        {edenberg2021problem}
\bibfield{author}{\bibinfo{person}{Elizabeth Edenberg}.}
  \bibinfo{year}{2021}\natexlab{}.
\newblock \showarticletitle{The problem with disagreement on social media:
  Moral not epistemic}.
\newblock  (\bibinfo{year}{2021}).
\newblock


\bibitem[\protect\citeauthoryear{Feige and Langberg}{Feige and
  Langberg}{2001}]%
        {feige2001approximation}
\bibfield{author}{\bibinfo{person}{Uriel Feige} {and} \bibinfo{person}{Michael
  Langberg}.} \bibinfo{year}{2001}\natexlab{}.
\newblock \showarticletitle{Approximation algorithms for maximization problems
  arising in graph partitioning}.
\newblock \bibinfo{journal}{\emph{Journal of Algorithms}} \bibinfo{volume}{41},
  \bibinfo{number}{2} (\bibinfo{year}{2001}), \bibinfo{pages}{174--211}.
\newblock


\bibitem[\protect\citeauthoryear{Friedkin and Johnsen}{Friedkin and
  Johnsen}{1990}]%
        {friedkin1990social}
\bibfield{author}{\bibinfo{person}{Noah~E Friedkin} {and}
  \bibinfo{person}{Eugene~C Johnsen}.} \bibinfo{year}{1990}\natexlab{}.
\newblock \showarticletitle{Social influence and opinions}.
\newblock \bibinfo{journal}{\emph{Journal of Mathematical Sociology}}
  \bibinfo{volume}{15}, \bibinfo{number}{3-4} (\bibinfo{year}{1990}),
  \bibinfo{pages}{193--206}.
\newblock


\bibitem[\protect\citeauthoryear{Frieze and Jerrum}{Frieze and Jerrum}{1997}]%
        {frieze1997improved}
\bibfield{author}{\bibinfo{person}{Alan Frieze} {and} \bibinfo{person}{Mark
  Jerrum}.} \bibinfo{year}{1997}\natexlab{}.
\newblock \showarticletitle{Improved approximation algorithms for MAX k-CUT and
  MAX BISECTION}.
\newblock \bibinfo{journal}{\emph{Algorithmica}} \bibinfo{volume}{18},
  \bibinfo{number}{1} (\bibinfo{year}{1997}), \bibinfo{pages}{67--81}.
\newblock


\bibitem[\protect\citeauthoryear{Gaitonde, Kleinberg, and Tardos}{Gaitonde
  et~al\mbox{.}}{2020}]%
        {gaitonde2020adversarial}
\bibfield{author}{\bibinfo{person}{Jason Gaitonde}, \bibinfo{person}{Jon~M.
  Kleinberg}, {and} \bibinfo{person}{{\'{E}}va Tardos}.}
  \bibinfo{year}{2020}\natexlab{}.
\newblock \showarticletitle{Adversarial Perturbations of Opinion Dynamics in
  Networks}. In \bibinfo{booktitle}{\emph{EC}}.
\newblock


\bibitem[\protect\citeauthoryear{Garey, Johnson, and Stockmeyer}{Garey
  et~al\mbox{.}}{1976}]%
        {GAREY1976237}
\bibfield{author}{\bibinfo{person}{M.R. Garey}, \bibinfo{person}{D.S. Johnson},
  {and} \bibinfo{person}{L. Stockmeyer}.} \bibinfo{year}{1976}\natexlab{}.
\newblock \showarticletitle{{Some simplified NP-complete graph problems}}.
\newblock \bibinfo{journal}{\emph{Theoretical Computer Science}}
  \bibinfo{volume}{1}, \bibinfo{number}{3} (\bibinfo{year}{1976}),
  \bibinfo{pages}{237--267}.
\newblock


\bibitem[\protect\citeauthoryear{Garey, Johnson, and Stockmeyer}{Garey
  et~al\mbox{.}}{1974}]%
        {garey1974some}
\bibfield{author}{\bibinfo{person}{Michael~R Garey}, \bibinfo{person}{David~S
  Johnson}, {and} \bibinfo{person}{Larry Stockmeyer}.}
  \bibinfo{year}{1974}\natexlab{}.
\newblock \showarticletitle{Some simplified NP-complete problems}. In
  \bibinfo{booktitle}{\emph{Proceedings of the sixth annual ACM symposium on
  Theory of computing}}. \bibinfo{pages}{47--63}.
\newblock


\bibitem[\protect\citeauthoryear{Garimella and Weber}{Garimella and
  Weber}{2017}]%
        {garimella2017long}
\bibfield{author}{\bibinfo{person}{Venkata Rama~Kiran Garimella} {and}
  \bibinfo{person}{Ingmar Weber}.} \bibinfo{year}{2017}\natexlab{}.
\newblock \showarticletitle{A long-term analysis of polarization on Twitter}.
  In \bibinfo{booktitle}{\emph{ICWSM}}.
\newblock


\bibitem[\protect\citeauthoryear{Gionis, Terzi, and Tsaparas}{Gionis
  et~al\mbox{.}}{2013}]%
        {gionis2013opinion}
\bibfield{author}{\bibinfo{person}{Aristides Gionis}, \bibinfo{person}{Evimaria
  Terzi}, {and} \bibinfo{person}{Panayiotis Tsaparas}.}
  \bibinfo{year}{2013}\natexlab{}.
\newblock \showarticletitle{Opinion maximization in social networks}. In
  \bibinfo{booktitle}{\emph{SDM}}. SIAM, \bibinfo{pages}{387--395}.
\newblock


\bibitem[\protect\citeauthoryear{Goemans and Williamson}{Goemans and
  Williamson}{1995}]%
        {goemans1995improved}
\bibfield{author}{\bibinfo{person}{Michel~X Goemans} {and}
  \bibinfo{person}{David~P Williamson}.} \bibinfo{year}{1995}\natexlab{}.
\newblock \showarticletitle{Improved approximation algorithms for maximum cut
  and satisfiability problems using semidefinite programming}.
\newblock \bibinfo{journal}{\emph{Journal of the ACM (JACM)}}
  \bibinfo{volume}{42}, \bibinfo{number}{6} (\bibinfo{year}{1995}),
  \bibinfo{pages}{1115--1145}.
\newblock


\bibitem[\protect\citeauthoryear{Han, Ye, and Zhang}{Han et~al\mbox{.}}{2002}]%
        {han02improved}
\bibfield{author}{\bibinfo{person}{Qiaoming Han}, \bibinfo{person}{Yinyu Ye},
  {and} \bibinfo{person}{Jiawei Zhang}.} \bibinfo{year}{2002}\natexlab{}.
\newblock \showarticletitle{An improved rounding method and semidefinite
  programming relaxation for graph partition}.
\newblock \bibinfo{journal}{\emph{Math. Program.}} \bibinfo{volume}{92},
  \bibinfo{number}{3} (\bibinfo{year}{2002}), \bibinfo{pages}{509--535}.
\newblock


\bibitem[\protect\citeauthoryear{Hassaniyan}{Hassaniyan}{2022}]%
        {hassaniyan2022long}
\bibfield{author}{\bibinfo{person}{Allan Hassaniyan}.}
  \bibinfo{year}{2022}\natexlab{}.
\newblock \showarticletitle{How long-standing Iranian disinformation tactics
  target protests}.
\newblock \bibinfo{journal}{\emph{Washington Institute, https://www.
  washingtoninstitute. org/policy-analysis/by-expert/17733}}
  (\bibinfo{year}{2022}).
\newblock


\bibitem[\protect\citeauthoryear{Jackson}{Jackson}{2008}]%
        {jackson2008social}
\bibfield{author}{\bibinfo{person}{Matthew~O Jackson}.}
  \bibinfo{year}{2008}\natexlab{}.
\newblock \bibinfo{booktitle}{\emph{Social and Economic Networks}}.
\newblock \bibinfo{publisher}{Princeton University Press}.
\newblock


\bibitem[\protect\citeauthoryear{Kempe, Kleinberg, and Tardos}{Kempe
  et~al\mbox{.}}{2015}]%
        {kempe2015maximizing}
\bibfield{author}{\bibinfo{person}{David Kempe}, \bibinfo{person}{Jon
  Kleinberg}, {and} \bibinfo{person}{{\'E}va Tardos}.}
  \bibinfo{year}{2015}\natexlab{}.
\newblock \showarticletitle{Maximizing the Spread of Influence through a Social
  Network}.
\newblock \bibinfo{journal}{\emph{Theory Of Computing}} \bibinfo{volume}{11},
  \bibinfo{number}{4} (\bibinfo{year}{2015}), \bibinfo{pages}{105--147}.
\newblock


\bibitem[\protect\citeauthoryear{Klofstad, Sokhey, and McClurg}{Klofstad
  et~al\mbox{.}}{2013}]%
        {klofstad2013disagreeing}
\bibfield{author}{\bibinfo{person}{Casey~A Klofstad},
  \bibinfo{person}{Anand~Edward Sokhey}, {and} \bibinfo{person}{Scott~D
  McClurg}.} \bibinfo{year}{2013}\natexlab{}.
\newblock \showarticletitle{Disagreeing about disagreement: How conflict in
  social networks affects political behavior}.
\newblock \bibinfo{journal}{\emph{American Journal of Political Science}}
  \bibinfo{volume}{57}, \bibinfo{number}{1} (\bibinfo{year}{2013}),
  \bibinfo{pages}{120--134}.
\newblock


\bibitem[\protect\citeauthoryear{Kunegis}{Kunegis}{2013}]%
        {konect}
\bibfield{author}{\bibinfo{person}{J\'{e}r\^{o}me Kunegis}.}
  \bibinfo{year}{2013}\natexlab{}.
\newblock \showarticletitle{KONECT: The Koblenz Network Collection}. In
  \bibinfo{booktitle}{\emph{Proceedings of the 22nd International Conference on
  World Wide Web}} \emph{(\bibinfo{series}{WWW '13 Companion})}.
  \bibinfo{publisher}{Association for Computing Machinery},
  \bibinfo{address}{New York, NY, USA}, \bibinfo{pages}{1343–1350}.
\newblock
\showISBNx{9781450320382}
\urldef\tempurl%
\url{https://doi.org/10.1145/2487788.2488173}
\showDOI{\tempurl}


\bibitem[\protect\citeauthoryear{Leskovec and Krevl}{Leskovec and
  Krevl}{2014}]%
        {snapnets}
\bibfield{author}{\bibinfo{person}{Jure Leskovec} {and} \bibinfo{person}{Andrej
  Krevl}.} \bibinfo{year}{2014}\natexlab{}.
\newblock \bibinfo{title}{{SNAP Datasets}: {Stanford} Large Network Dataset
  Collection}.
\newblock \bibinfo{howpublished}{\url{http://snap.stanford.edu/data}}.
\newblock


\bibitem[\protect\citeauthoryear{Lorenz}{Lorenz}{2007}]%
        {lorenz2007continuous}
\bibfield{author}{\bibinfo{person}{Jan Lorenz}.}
  \bibinfo{year}{2007}\natexlab{}.
\newblock \showarticletitle{Continuous opinion dynamics under bounded
  confidence: A survey}.
\newblock \bibinfo{journal}{\emph{International Journal of Modern Physics C}}
  \bibinfo{volume}{18}, \bibinfo{number}{12} (\bibinfo{year}{2007}),
  \bibinfo{pages}{1819--1838}.
\newblock


\bibitem[\protect\citeauthoryear{Matakos, Terzi, and Tsaparas}{Matakos
  et~al\mbox{.}}{2017}]%
        {matakos2017measuring}
\bibfield{author}{\bibinfo{person}{Antonis Matakos}, \bibinfo{person}{Evimaria
  Terzi}, {and} \bibinfo{person}{Panayiotis Tsaparas}.}
  \bibinfo{year}{2017}\natexlab{}.
\newblock \showarticletitle{Measuring and moderating opinion polarization in
  social networks}.
\newblock \bibinfo{journal}{\emph{Data Min Knowl Discov}} \bibinfo{volume}{31},
  \bibinfo{number}{5} (\bibinfo{year}{2017}), \bibinfo{pages}{1480--1505}.
\newblock


\bibitem[\protect\citeauthoryear{Musco, Musco, and Tsourakakis}{Musco
  et~al\mbox{.}}{2018}]%
        {musco2018minimizing}
\bibfield{author}{\bibinfo{person}{Cameron Musco}, \bibinfo{person}{Christopher
  Musco}, {and} \bibinfo{person}{Charalampos~E Tsourakakis}.}
  \bibinfo{year}{2018}\natexlab{}.
\newblock \showarticletitle{Minimizing polarization and disagreement in social
  networks}. In \bibinfo{booktitle}{\emph{WebConf}}. \bibinfo{pages}{369--378}.
\newblock


\bibitem[\protect\citeauthoryear{Parsegov, Proskurnikov, Tempo, and
  Friedkin}{Parsegov et~al\mbox{.}}{2016}]%
        {parsegov2016novel}
\bibfield{author}{\bibinfo{person}{Sergey~E Parsegov}, \bibinfo{person}{Anton~V
  Proskurnikov}, \bibinfo{person}{Roberto Tempo}, {and} \bibinfo{person}{Noah~E
  Friedkin}.} \bibinfo{year}{2016}\natexlab{}.
\newblock \showarticletitle{Novel multidimensional models of opinion dynamics
  in social networks}.
\newblock \bibinfo{journal}{\emph{IEEE Trans. Automat. Control}}
  \bibinfo{volume}{62}, \bibinfo{number}{5} (\bibinfo{year}{2016}),
  \bibinfo{pages}{2270--2285}.
\newblock


\bibitem[\protect\citeauthoryear{Select Committee~on Intelligence}{Select
  Committee~on Intelligence}{2019}]%
        {senatereport}
\bibfield{author}{\bibinfo{person}{United States~Senate Select Committee~on
  Intelligence}.} \bibinfo{year}{2019}\natexlab{}.
\newblock \bibinfo{title}{Russian Active Measures Campaigns and Interference in
  the 2016 U.S. Election, Volume 2: Russia's Use of Social Media with
  Additional Views}.
\newblock
  \bibinfo{howpublished}{\url{https://www.intelligence.senate.gov/sites/default/files/documents/Report_Volume2.pdf}}.
\newblock


\bibitem[\protect\citeauthoryear{Srivastav and Wolf}{Srivastav and
  Wolf}{1998}]%
        {srivastav1998finding}
\bibfield{author}{\bibinfo{person}{Anand Srivastav} {and}
  \bibinfo{person}{Katja Wolf}.} \bibinfo{year}{1998}\natexlab{}.
\newblock \showarticletitle{Finding dense subgraphs with semidefinite
  programming}. In \bibinfo{booktitle}{\emph{Approximation Algorithms for
  Combinatiorial Optimization: International Workshop APPROX'98 Aalborg,
  Denmark, July 18--19, 1998 Proceedings 1}}. Springer,
  \bibinfo{pages}{181--191}.
\newblock


\bibitem[\protect\citeauthoryear{Tang, Shi, and Xiao}{Tang
  et~al\mbox{.}}{2015}]%
        {tang2015influence}
\bibfield{author}{\bibinfo{person}{Youze Tang}, \bibinfo{person}{Yanchen Shi},
  {and} \bibinfo{person}{Xiaokui Xiao}.} \bibinfo{year}{2015}\natexlab{}.
\newblock \showarticletitle{Influence maximization in near-linear time: A
  martingale approach}. In \bibinfo{booktitle}{\emph{Proceedings of the 2015
  ACM SIGMOD international conference on management of data}}.
  \bibinfo{pages}{1539--1554}.
\newblock


\bibitem[\protect\citeauthoryear{Tu and Neumann}{Tu and Neumann}{2022}]%
        {tu22viral}
\bibfield{author}{\bibinfo{person}{Sijing Tu} {and} \bibinfo{person}{Stefan
  Neumann}.} \bibinfo{year}{2022}\natexlab{}.
\newblock \showarticletitle{A Viral Marketing-Based Model For Opinion Dynamics
  in Online Social Networks}. In \bibinfo{booktitle}{\emph{WebConf}}.
  \bibinfo{publisher}{{ACM}}, \bibinfo{pages}{1570--1578}.
\newblock


\bibitem[\protect\citeauthoryear{Williamson and Shmoys}{Williamson and
  Shmoys}{2011}]%
        {williamson2011design}
\bibfield{author}{\bibinfo{person}{David~P Williamson} {and}
  \bibinfo{person}{David~B Shmoys}.} \bibinfo{year}{2011}\natexlab{}.
\newblock \bibinfo{booktitle}{\emph{The design of approximation algorithms}}.
\newblock \bibinfo{publisher}{Cambridge university press}.
\newblock


\bibitem[\protect\citeauthoryear{Xia, Wang, and Xuan}{Xia
  et~al\mbox{.}}{2011}]%
        {xia2011opinion}
\bibfield{author}{\bibinfo{person}{Haoxiang Xia}, \bibinfo{person}{Huili Wang},
  {and} \bibinfo{person}{Zhaoguo Xuan}.} \bibinfo{year}{2011}\natexlab{}.
\newblock \showarticletitle{Opinion dynamics: A multidisciplinary review and
  perspective on future research}.
\newblock \bibinfo{journal}{\emph{International Journal of Knowledge and
  Systems Science (IJKSS)}} \bibinfo{volume}{2}, \bibinfo{number}{4}
  (\bibinfo{year}{2011}), \bibinfo{pages}{72--91}.
\newblock


\bibitem[\protect\citeauthoryear{Xu, Bao, and Zhang}{Xu et~al\mbox{.}}{2021}]%
        {xu2021fast}
\bibfield{author}{\bibinfo{person}{Wanyue Xu}, \bibinfo{person}{Qi Bao}, {and}
  \bibinfo{person}{Zhongzhi Zhang}.} \bibinfo{year}{2021}\natexlab{}.
\newblock \showarticletitle{Fast Evaluation for Relevant Quantities of Opinion
  Dynamics}. In \bibinfo{booktitle}{\emph{WebConf}}.
  \bibinfo{pages}{2037--2045}.
\newblock


\bibitem[\protect\citeauthoryear{Zhu, Bao, and Zhang}{Zhu
  et~al\mbox{.}}{2021}]%
        {zhu2021minimizing}
\bibfield{author}{\bibinfo{person}{Liwang Zhu}, \bibinfo{person}{Qi Bao}, {and}
  \bibinfo{person}{Zhongzhi Zhang}.} \bibinfo{year}{2021}\natexlab{}.
\newblock \showarticletitle{Minimizing Polarization and Disagreement in Social
  Networks via Link Recommendation}.
\newblock \bibinfo{journal}{\emph{NeurIPS}} (\bibinfo{year}{2021}).
\newblock


\end{thebibliography}

\appendix
\clearpage 
\section{Omitted Experiments}
\label{sec:omit-experiments}
We present further details of our experiments. 
In Section~\ref{sec:experiments-datasets}, we present details about our
datasets.  In Section~\ref{sec:experiments-polarization}, we present how our
algorithms perform when the goal is to maximize the polarization.
In Section~\ref{sec:experiments-influential} we compare different algorithms for
finding users that are influential on the disagreement in the network
(Problem~\eqref{eq:our-problem}).
In Section~\ref{sec:experiments-stability} we discuss the stability 
of $\SDPalgo\LimitedInfo$ and $\Random\LimitedInfo$, which use randomization.
In Section~\ref{sec:experiments-running-time} we present the running time of
our algorithms.

Our algorithms are implemented in Python, except IMM~\cite{tang2015influence}
(related to our \IM\LimitedInfo algorithm) which is implemented in Julia.  We
use Mosek to solve semidefinite programs.  

\subsection{Datasets}
\label{sec:experiments-datasets}
The datasets \TweetSTWO, \TweetSFOUR, \TweetMFIVE, \TweetLTWO are sampled from
a Twitter dataset with innate opinions, which we obtained from Tianyi Zhou.
The original Twitter dataset is collected in the following way.
We start from a list of Twitter accounts who actively engage in political discussions in the US, 
which was compiled by Garimella and Weber~\cite{garimella2017long}.
Then we randomly sample a smaller subset of 50\,000 accounts. For these active
accounts, we obtained the entire list of followers, except for users with more
than 100\,000 followers for whom we got only the 100\,000 most recent
followers (users with more than 100\,000 followers account for less than 2\% of
our dataset). Based on this obtained information, we construct a graph in which the nodes
correspond to Twitter accounts and the edges correspond to the accounts'
following relationships. Then we consider only the largest connected component
in the network. To obtain the innate opinions of the nodes in the graphs, we proceed as follows.
First, we compute the political polarity score for each account using the method
proposed by Barber\'a~\citep{barbera2015birds}, which has been used widely in
the literature~\cite{brady2017emotion,boutyline2017social}.  The polarity scores
range from -2 to 2 and are computed based on following known political accounts. 
Then we re-scale them into the interval $[0,1]$.  
To create our smaller datasets, we select a seed node uniformly at random, and
run breadth-first search (BFS) from this seed node, until a given number of nodes
have been explored.

We note that for \TweetMFIVE and \TweetLTWO, the innate opinions were very large
($\bar{\begop}_0 \geq 0.85$) and thus for these datasets we flipped the innate
opinions around~0.5 (i.e., we set $\begop_0 = \v+1 - \begop_0$).  In other
words, we assume that initially most people are not on the extreme side of the
opinion spectrum.  By flipping the innate opinions, we guarantee that the
attacker can still radicalize the opinions.  We note that this has no influence
on the the initial indices for polarization and disagreement (since for 
$\m+A\in\{\MasIdx{\DisIdx{}}, \MasIdx{\PolIdx{}}\}$ it holds that
$\begop_0^\intercal \m+A \begop_0 = (\v+1 - \begop_0)^\intercal \m+A (\v+1 - \begop_0)$
which is implied by Lemma~\ref{lem:matrices}).

The dataset \GplusLTWO is sampled from the ego-Gplus dataset obtained from
SNAP~\cite{snapnets} using the same BFS-approach as above.  The innate opinions
for \GplusLTWO are drawn independently from $N(0.3, 0.1)$.  Here,
$N(\mu,\sigma)$ denotes the Gaussian distribution with mean~$\mu$ and standard
deviation~$\sigma$.

We also use the public\cref{foot:chen-code} datasets \Twitter and \Reddit from De et
al.~\cite{de2019learning}, which have previously been used by Musco et
al.~\cite{musco2018minimizing} and Chen and Racz~\cite{chen2020network}.  The
\Twitter dataset was obtained from tweets about the Delhi legislative assembly
elections of 2013 and contains ground-truth opinions. The opinions for the
\Reddit dataset were generated by Musco et al.~\cite{musco2018minimizing} using a power law
distribution.

Furthermore, we consider the datasets \karate, \books, \blogs, which we obtained
from KONECT~\cite{konect} and which do not contain ground-truth opinions.
However, these datasets contain two ground-truth communities. For the first
community, we sample the innate opinions of the users from the Gaussian
distribution $N(0.1,0.1)$, and for the second community we use $N(0.3,0.1)$.

Last, we consider graphs generated from the Stochastic Block Model.  We generate
a Stochastic Block Model graph that consists of $1000$~nodes divided equally
into $4$ communities.  The intra-community edge probability is $0.4$ and an the
inter-community edge probability is $0.1$.  The innate opinions for each
community are drawn from $N(0.2,0.1)$, $N(0.3,0.1)$, $N(0.4, 0.1)$ and
$N(0.5, 0.1)$, respectively. 

\subsection{Maximizing the polarization}
\label{sec:experiments-polarization}
Next, we use our algorithms to maximize the polarization. We remark that all of
our results extend to this setting, including the guarantees from
Theorem~\ref{thm:unbalanced}.

We report our results on larger datasets with
$k=1\%n$ in Table~\ref{tab:polarization-large}.
For each of the datasets, we provide the number of vertices and edges. We also
report the \emph{normalized polarization index} $\PolIdx{G, \begop}' =
\frac{\PolIdx{G, \begop} \cdot 10^5}{\abs{V}}$, where we normalize
by the number of vertices for better comparison across different
datasets. Further, we report average innate opinions~$\bar{\begop}_0$ and the
standard deviation of the innate opinions~$\sigma(\begop_0)$.
Finally, as before, for each algorithm we report the score
$\frac{\begop^{\intercal} \MasIdx{\PolIdx{}} \begop - \begop_0^{\intercal} \MasIdx{\PolIdx{}} \begop_0}{\begop_0^{\intercal} \MasIdx{\PolIdx{}} \begop_0}$.

\begin{table*}[t]
\centering
\caption{Results on the larger datasets for the relative increase of the
	polarization, where we set $k = 1\% n$. For each dataset, we have marked the
	highest value in bold and we have made the highest value for each setting italic.}
\label{tab:polarization-large}
\resizebox{0.9\textwidth}{!}{%
	\begin{tabular}{@{}RRRRRRRRRRRRRRR@{}}
  \toprule
  \multirow{2}{*}{\textsf{dataset}} &
  \multirow{2}{*}{$\abs{V}$} & 
  \multirow{2}{*}{$\abs{E}$} &  
  \multirow{2}{*}{$\PolIdx{G, {\begop_0}}'$} & 
  \multirow{2}{*}{$\bar{\begop}_0$} & 
  \multirow{2}{*}{$\sigma(\begop_0)$} & 
  \multicolumn{2}{c}{full information} & 
  \multicolumn{5}{c}{limited information}\\
  \cmidrule(lr){7-8} \cmidrule(l){9-13} 
  	& & & & & & \NAGFull & \AGFull & \NAG\LimitedInfo & \AG\LimitedInfo & \Deg\LimitedInfo & \IM\LimitedInfo & \Random\LimitedInfo \\
  \midrule
\TweetSTWO  & 
1\,999 &
40\,498 &
182.879 &
0.216 &
0.258 &
0.790 & \textbf{0.792} & \emph{0.692} & \emph{0.692} & -0.016 & -0.014 & 0.053\\
\TweetSFOUR & 
3\,999 &
222\,268 &
176.525 &
0.568 &
0.302 &
\textbf{0.273} & \textbf{0.273} & \emph{0.188} & 0.186 & -0.008 & -0.006 & 0.003\\
\TweetMFIVE &
4\,999 &
245\,085 &
35.005 &
0.075 &
0.077 &
5.875 & \textbf{5.876} & \emph{5.258} & \emph{5.258} & -0.035 & -0.035 & 0.160\\
\TweetLTWO  &
21\,999 &
860\,884 &
117.790 &
0.106 &
0.136 &
 1.859 & \textbf{1.865} & \emph{1.579} & 1.481 & -0.029 & -0.022 & 0.140\\
\GplusLTWO  &
22\,999 &
5\,474\,133 &
1.803 &
0.300 &
0.101 &
30.351 & \textbf{30.804} & 30.338 & \emph{30.786} & 0.002 & 0.001 & 0.441\\
\bottomrule
\end{tabular}

}
\end{table*}

We see that the results for polarization are somewhat similar to those for
disagreement: algorithms with full information are the best, but the best
algorithm that only knows the topology still achieves similar performance.

Furthermore, the best algorithms with limited information, i.e., \AG\LimitedInfo
and \NAG\LimitedInfo, consistently outperform the baselines \Deg\LimitedInfo,
\IM\LimitedInfo, and \Random\LimitedInfo; this shows that our strategy leads to
non-trivial results.  Furthermore, we observe that across all settings, simply
picking high-degree vertices, or picking nodes with large influence in the
independent cascade model are poor strategies. 

In addition, we show results for the increase of polarization in the small
datasets in Table~\ref{tab:polarization-small-addition}.  Again, the methods
with full information perform best, and again our methods generally perform
quite well and clearly outperform the baseline methods.

\begin{table*}[t]
	\centering
	\caption{Results on the small datasets for the relative increase of the
		polarization, where we set $k = 10\%\,n$. For each dataset, we have marked the
		highest value in bold and we have made the highest value for each
		setting italic.}
	\label{tab:polarization-small-addition}
	\resizebox{0.9\textwidth}{!}{%
		\begin{tabular}{@{}RRRRRRRRRRRRRRRR@{}}
  \toprule
  \multirow{2}{*}{\textsf{dataset}} & 
  \multirow{2}{*}{$\abs{V}$} & 
  \multirow{2}{*}{$\abs{E}$} &  
  \multirow{2}{*}{$\PolIdx{G, {\begop_0}}'$} & 
  \multirow{2}{*}{$\bar{\begop}_0$} & 
  \multirow{2}{*}{$\sigma(\begop_0)$} &  
  \multicolumn{2}{c}{full information} & 
  \multicolumn{6}{c}{limited information} \\
  \cmidrule(lr){7-8} \cmidrule(l){9-14} 
  & & & & & &\NAGFull & \AGFull & \SDPalgo\LimitedInfo & \NAG\LimitedInfo & \AG\LimitedInfo & \Deg\LimitedInfo &\IM\LimitedInfo & \Random\LimitedInfo \\
  \midrule
\karate & 34 & 78           & 479.556 & 0.194 & 0.146  & 2.407 & \textbf{2.412} & \emph{1.702} & 1.137 & 1.346 & 0.391 & 0.064 & 0.870\\
\books & 105 & 441          & 519.208 & 0.217 & 0.147  & 2.955 & \textbf{3.009} & \emph{2.149} & 0.282 & \emph{2.149} & -0.229 & -0.179 & 0.267\\
\Twitter & 548 & 3\,638     & 30.372 & 0.602 & 0.080    & 6.996 & \textbf{8.941} & 8.505 & 6.695 & \emph{8.526} & 1.899 & 0.248 & 1.563\\
\Reddit & 556 & 8\,969      & 0.952 & 0.498 & 0.042    & 132.834 & \textbf{133.258} & \emph{133.225} & 132.759 & \emph{133.225} & 1.741 & 2.529 & 14.524\\
\SBM & 
1000 &
74568 &
0.166 &
0.346 &
0.147 &
2.000 & \textbf{2.146} & \emph{2.014} & 1.182 & 1.867 & 0.676 & 0.660 & 0.954\\
\blogs & 1\,222 & 16\,717   & 544.049 & 0.205 & 0.131   & 12.686 & \textbf{12.789} & 12.576 & \emph{12.614} & 12.217 & 0.211 & 0.205 & 2.686\\
  \bottomrule
\end{tabular}

	}
\end{table*}

\subsection{Finding influential users}
\label{sec:experiments-influential}
In this section, we evaluate different methods for finding the influential users
which can maximize the disagreement.
More specifically, we evaluate different methods for solving 
Problem~\eqref{eq:our-problem}. 

We report our results in Figure~\ref{fig:evaluate-disagreement}.
Notice that when $\begop = \ind$ and $k=0$, the disagreement is 0. 
Thus, instead of evaluating the relative gain of the \disidx, we 
report absolute values of the \disidx. 

We observe that the baselines which pick random seed nodes, high degree nodes,
and nodes with high influence in the independent cascade model are clearly the
worst methods. Among the other methods, the \SDPalgo-based methods are typically
the best.  We observe that when $k$ is below $0.25n$, the greedy methods \AG{}
and \NAG{} often perform as well as \SDPalgo; however, when $k$ is larger than
$0.3n$, the \SDPalgo-based algorithm performs better. These observations
are in line with the analysis of Theorem~\ref{thm:unbalanced} which achieves the
best approximation ratios when $k$ is close to $0.5n$ (see also
Figure~\ref{fig:approx-all}).

\begin{figure}[t]
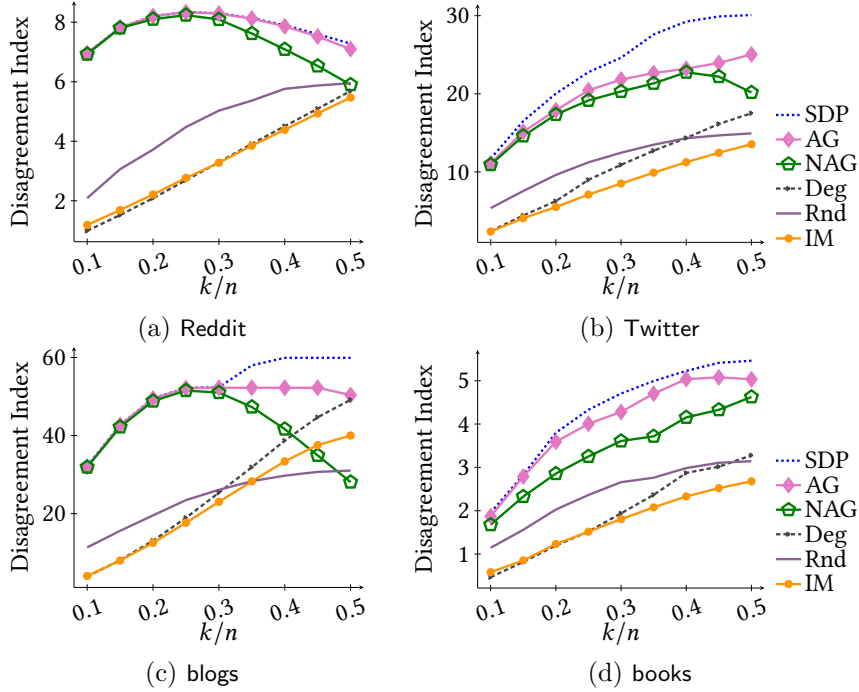

	\centering 
    \begin{tabular}{cc}
        \resizebox{0.3\columnwidth}{!}{%
			\inputtikz{tikz/reddit}
		}&
		\resizebox{0.37\columnwidth}{!}{%
			\inputtikz{tikz/real_twitter}
		}\\
		(a)~\Reddit & (b)~\Twitter \\
		\resizebox{0.3\columnwidth}{!}{%
			\inputtikz{tikz/polblogs}
		}&
		\resizebox{0.37\columnwidth}{!}{%
			\inputtikz{tikz/polbooks}
		}\\
		(c)~\blogs & (d)~\books \\
	\end{tabular}
	\caption{Results for Problem~\eqref{eq:our-problem} on different datasets.
		We varied $k = 10\%n,15\%n,...,50\%n$.}
	\label{fig:evaluate-disagreement}
\end{figure}

\subsection{Stability of randomized algorithms}
\label{sec:experiments-stability}
In this section, we study how the randomization involved in some of the
algorithms affects their results. In particular,  $\Random\LimitedInfo$ and
$\SDPalgo\LimitedInfo$ randomly select nodes and we wish to study how this
impacts their performance.  We report the \disidx and output the mean over
5~runs of the algorithms, together with error bars that indicate standard deviations.

In Figure~\ref{fig:evaluate-disagreement-std} we present the \disidx and the 
standard deviation with randomized algorithms on small datasets. 
We observe that as the number of nodes increases, the standard deviations of
different algorithms becomes relatively small (note that the largest dataset
below is \blogs).  Besides, we also observe that the outputs of
\SDPalgo\LimitedInfo are stable, with standard deviations close to 0. 

\begin{figure}[t]
	\centering 
    \begin{tabular}{cc}
        \resizebox{0.3\columnwidth}{!}{%
			\inputtikz{tikz/reddit_std}
		}&
		\resizebox{0.3\columnwidth}{!}{%
			\inputtikz{tikz/twitter_std}
		}\\
		(a)~\Reddit & (b)~\Twitter \\
		\resizebox{0.3\columnwidth}{!}{%
			\inputtikz{tikz/polblogs_std}
		}&
		\resizebox{0.3\columnwidth}{!}{%
			\inputtikz{tikz/polbooks_std}
		}\\
		(c)~\blogs & (d)~\books \\
	\end{tabular}
	\caption{Results of our randomized algorithms for
		Problem~\eqref{eq:our-problem} on different datasets.  We varied
		$k = 10\%n,15\%n,...,30\%n$ and report averages and standard deviations 
		over 5~runs of the algorithms.}
	\label{fig:evaluate-disagreement-std}
\end{figure}

In Figure~\ref{fig:evaluate-disagreement-large-std} we present results on larger graphs, 
\TweetLTWO and \GplusLTWO; here, we omit \SDPalgo\LimitedInfo due to scalability
issues.

\begin{figure}[t]
	\centering 
    \begin{tabular}{cc}
        \resizebox{0.3\columnwidth}{!}{%
			\inputtikz{tikz/twitterl2_std}
		}&
		\resizebox{0.3\columnwidth}{!}{%
			\inputtikz{tikz/gplusl2_std}
		}\\
		(a)~\TweetLTWO & (b)~\GplusLTWO \\
	\end{tabular}
	\caption{Results of $\Random\LimitedInfo$ for Problem~\eqref{eq:our-problem}
		on different datasets.  We varied $k = 0.5\%n, 1\%n,...,2.5\%n$ and
		report averages and standard deviations over 5~runs of the algorithms.}
	\label{fig:evaluate-disagreement-large-std}
\end{figure}

\subsection{Running time of algorithms}
\label{sec:experiments-running-time}
Next, we present the running time of the algorithms to maximize the disagreement
on different datasets.  Note that for full information algorithms we directly
present the running time, while for limited information algorithms we report the
running time for solving Problem~\eqref{eq:our-problem}.  This is because the
running time for setting the innate opinions to $1$ is negligible.  In addition,
since the running times of \AGFull and \AG\LimitedInfo are almost the same, we
only report the running time of \AGFull.  The same holds for \NAGFull and
\NAG\LimitedInfo.

In Figure~\ref{fig:large-dis-time}, we notice that \AGFull and \IM\LimitedInfo are 
the two most costly algorithms, but even for those  algorithm the running time
increases moderately in terms of $k$.  However, on the less dense graph
\TweetLTWO, \IM\LimitedInfo runs faster than on the denser graph \GplusLTWO,
even though they have almost the same number of vertices.  This is consistent
with the time complexity of IMM~\cite{tang2015influence}.  The graph density
does not influence the running time of the adaptive greedy algorithm \AGFull.
Interestingly, we observe that that \NAGFull is orders of magnitude faster than
\AGFull.

\begin{figure}[t]
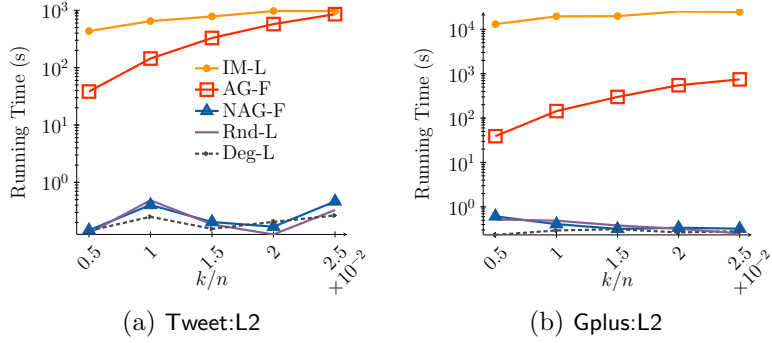

	\centering 
    \begin{tabular}{cc}
        \resizebox{0.3\columnwidth}{!}{%
			\inputtikz{tikz/twitterl2_dis_time}
		}&
		\resizebox{0.3\columnwidth}{!}{%
			\inputtikz{tikz/gplusl2_dis_time}
		}\\
		(a)~\TweetLTWO & (b)~\GplusLTWO \\
	\end{tabular}
	\caption{Running time (in seconds) of our algorithms for increasing the
			disagreement on \TweetLTWO and \GplusLTWO. Here, we vary
			$k=0.5\%n$, $1\%n$, $1.5\%n$, $2\%n$, $2.5\%n$.}
	\label{fig:large-dis-time}
\end{figure}

In Table~\ref{tab:small_running_time} we report the absolute running times (in
seconds) of our algorithms on the small datasets. We notice that
\SDPalgo\LimitedInfo is the slowest algorithm and on \blogs it is almost
30~times slower than any other algorithm. This is within our expectation, since
solving semidefinite programs is costly.  We again observe that that \NAGFull is
orders of magnitude faster than \AGFull.

\begin{table}[t]
\centering
\caption{Running time (in seconds) of our algorithms for maximizing the disagreement, where we set $k = 10\% n$.}
\label{tab:small_running_time}
\resizebox{0.6\textwidth}{!}{%
	\begin{tabular}{@{}RRRRRRRRRRRRRRRR@{}}
  \toprule
  \textsf{dataset} &
  \AGFull & \NAGFull & \SDPalgo\LimitedInfo & \IM\LimitedInfo & \Deg\LimitedInfo & \Random\LimitedInfo \\
  \midrule
 \karate
 & 0.0002
 & 0.0000
 & 0.0818
 & 0.4506
 & 0.0005
 & 0.0001
\\
\books
 & 0.0022
 & 0.0001
 & 0.4083
 & 0.5041
 & 0.0006
 & 0.0001
\\
\Twitter
 & 0.1832
 & 0.0008
 & 10.2997
 & 1.3165
 & 0.0018
 & 0.0002
\\
\Reddit
 & 0.1914
 & 0.0009
 & 9.6022
 & 2.8257
 & 0.0019
 & 0.0002
\\
\SBM 
& 1.2505
& 0.0030
& 72.8223
& 30.1188 
& 0.0036
& 0.0006
\\
\blogs
 & 3.9573
 & 0.0052
 & 171.4392
 & 5.8961
 & 0.0055
 & 0.0007
\\
  \bottomrule
\end{tabular}

}
\end{table}


\section{Omitted Proofs and Discussions}
\label{sec:omitted-proofs}
We present proofs and discussions which are not contained in the main content below.

\subsection{Rescaling Opinions}
\label{sec:scaling}

In Section~\ref{sec:preliminaries}, we mention that we consider
the opinions in the interval $[0, 1]$. 
In this appendix we prove that scaling the opinions from
an interval~$[a,b]$ to an interval~$[x,y]$ only influences the disagreement in
the network by a fixed factor of $\left(\frac{y-x}{b-a}\right)^2$. In particular,
we show that the optimizers of optimization problems are maintained under
scaling. This implies that all
\NPhardness results we derive in this paper carry over to the setting with
$[-1,1]$-opinions and our $O(1)$-approximation algorithms for $[-1,1]$-opinions
also yield $O(1)$-approximation algorithms for $[0,1]$-opinions.

Consider real numbers $a < b$ and $x < y$.
Suppose that we have innate opinions $\begop_u\in[a,b]$ and we wish to rescale
them into the interval $[x,y]$. Then we set
\begin{align*}
	\begop_u' = \begop_u \cdot \frac{y-x}{b-a}
				+ \frac{1}{2}\left( x+y - \frac{a+b}{b-a}(y-x) \right).
\end{align*}
For convenience we set $\alpha = \frac{y-x}{b-a}$ and 
$\beta = \frac{1}{2}\left( x+y - \frac{a+b}{b-a}(y-x) \right)$.
Observe that $\begop_u' = \alpha \begop_u + \beta$.
We also set $\finop_u'^{(0)} = \alpha \finop_u'^{(0)} + \beta$.

Indeed, let $f(\xi) = \alpha \xi + \beta$. Then we note that under this
transformation we have that
\begin{align*}
	f(a) &= a \frac{y-x}{b-a} + \frac{1}{2}\left( x+y - \frac{a+b}{b-a}(y-x) \right) \\
		&= \frac{1}{2} \left( 2a \frac{y-x}{b-a} + x + y - \frac{a+b}{b-a}(y-x)\right) \\
		&= \frac{1}{2} \left( x + y - \frac{b-a}{b-a} (y-x) \right) \\
		&= x,
\end{align*}
and
\begin{align*}
	f(b) &= b \frac{y-x}{b-a} + \frac{1}{2}\left( x+y - \frac{a+b}{b-a}(y-x) \right) \\
		&= \frac{1}{2} \left( 2b \frac{y-x}{b-a} + x + y - \frac{a+b}{b-a}(y-x)\right) \\
		&= \frac{1}{2} \left( x + y - \frac{a-b}{b-a} (y-x) \right) \\
		&= y.
\end{align*}
Additionally, note that $f(\xi)$ is an affine linear function. Hence, $f$ maps
$[a,b]$ bijectively into $[x,y]$.

Next, consider the expressed opinion $\finop_u'^{(t+1)}$ then by the update
rule of the FJ model and by induction we have that
\begin{align*}
	\finop_u'^{(t+1)}
	&= \frac{\begop_u' + \sum_{v \in N(u)} w_{uv} \finop_u'^{(t)}}{1 + \sum_{v\in N(u)} w_{uv}} \\
	&= \frac{\alpha \begop_u + \beta + \sum_{v \in N(u)} w_{uv} (\alpha \finop_u^{(t)} + \beta)}{1 + \sum_{v\in N(u)} w_{uv}} \\
	&= \alpha \frac{\begop_u' + \sum_{v \in N(u)} w_{uv} \finop_u'^{(t)}}{1 + \sum_{v\in N(u)} w_{uv}} + \beta \\
	&= \alpha \finop_u^{(t+1)} + \beta.
\end{align*}
In particular, in the limit we have that
$\finop' = \lim_{t\to\infty} \finop'^{(t+1)} 
	= \lim_{t\to\infty} (\alpha \finop^{(t+1)} + \beta)
	= \alpha \finop + \beta$.

Next, for the disagreement in the network we have that:
\begin{align*}
	\DisIdx{G, \begop'}
	&= \sum_{(u,v)\in E} w_{u,v} (\efinop{u}'-\efinop{v}')^2 \\
	&= \sum_{(u,v)\in E} w_{u,v} (\alpha \efinop{u} + \beta - \alpha \efinop{v} - \beta)^2 \\
	&= \alpha^2 \sum_{(u,v)\in E} w_{u,v} (\efinop{u} - \efinop{v})^2 \\
	&= \alpha^2 \DisIdx{G, \begop'}.
\end{align*}

Now we consider the mean opinion $\bar{\finop}'$:
\begin{align*}
	\bar{\finop}'
	&= \frac{1}{n} \sum_u \finop_u' \\
	&= \frac{1}{n} \sum_u (\alpha \finop_u + \beta) \\
	&= \alpha \left(\frac{1}{n} \sum_u \finop_u\right) + \beta \\
	&= \alpha \bar{\finop} + \beta.
\end{align*}
Hence, for the network polarization we obtain:
\begin{align*}
	\PolIdx{G, \begop'}
	&= \sum_{u\in V} (\finop_u' - \bar{\finop}')^2 \\
	&= \sum_{u\in V} (\alpha \finop_u + \beta - \alpha \bar{\finop} - \beta)^2 \\
	&= \alpha^2 \sum_{u\in V} (\finop_u - \bar{\finop})^2 \\
	&= \alpha^2 \PolIdx{G, \begop'}.
\end{align*}

We note that the results from above hold for all vectors $\begop\in[a,b]^n$. In
particular, this implies that if $\begop^*$ is the optimizer for an optimization
problem of the form $\max_{\begop\in[a,b]^n} \DisIdx{G, \begop'}$ then the vector
$f(\begop)\in[x,y]^n$ is the maximizer for the optimization problem 
$\max_{\begop\in[x,y]^n} \DisIdx{G, \begop'}$.

\subsection{Proof of Lemma~\ref{lem:matrices}}
We start by recalling two facts about positive semi-definite matrices. First, a
matrix $\m+A$ is positive semi-definite if $\m+A = \m+B^{\intercal} \m+C \m+B$,
where $\m+C$ is a positive semi-definite matrix. Second, $\m+A$ is positive
semi-definite if we can write it as $\m+A = \m+B^{\intercal} \m+B$.

Let us consider the matrix 
$\MasIdx{\PolIdx{}} = (\ID + \laplacian)^{-1} \left(\ID - \frac{\ind \ind^\intercal}{n}\right) (\ID + \laplacian)^{-1}$
for polarization. Observe that $\ID - \frac{\ind \ind^\intercal}{n}$ is the
Laplacian of the full graph with edge weights $1/n$ and, hence, this matrix
positive semi-definite. By our first property from above and the fact that $(\ID + \laplacian)^{-1}$ 
is symmetric, this implies that $\MasIdx{\PolIdx{}}$ is positive semidefinite.
Proving that $\MasIdx{\DisIdx{}} = (\laplacian + \ID)^{-1} \laplacian (\laplacian + \ID)^{-1}$ 
is positive semi-definite works in the same way.

Next, we observe that $(\ID+\laplacian)^{-1}$ satisfies 
$(\ID+\laplacian)^{-1} \v+1 = \v+1$ since
$(\ID+\laplacian)\v+1 = \v+1 + \laplacian\v+1 = \v+1$ and by multiplying with
$(\ID+\laplacian)^{-1}$ from both sides we obtain the claim.

Now we apply the previous observation for our matrices from the table and obtain
\begin{align*}
	\MasIdx{\PolIdx{}}\v+1
	= (\ID + \laplacian)^{-1} \left(\ID - \frac{\ind \ind^\intercal}{n}\right) (\ID + \laplacian)^{-1}\v+1
	= (\ID + \laplacian)^{-1} \left(\ID - \frac{\ind \ind^\intercal}{n}\right) \v+1
	= \v+0.
\end{align*}
And, 
\begin{align*}
	\MasIdx{\DisIdx{}} \v+1
	= (\laplacian + \ID)^{-1} \laplacian (\laplacian + \ID)^{-1} \v+1
	= (\laplacian + \ID)^{-1} \laplacian \v+1
	= \v+0.
\end{align*}

\subsection{Proof of Theorem~\ref{thm:relationship}}
	Consider the optimal solution $\begopopt$ for
	Problem~\ref{problem:max-disagreement} and let $\begopalg$ denote the
	$\beta$-approximate solution for Problem~\ref{eq:our-problem}. 
	Furthermore, set $\Deltaopt = \begopopt - \begop_0$ and
	$\Deltaalg=\begopalg-\begop_0$.

	Then we get that
	\begin{align*}
		&\frac{\begopalg^\intercal \m+A \begopalg}{\begopopt^\intercal \m+A \begopopt} \\
		&=
			\frac{(c\v+1+\epsvec+\Deltaalg)^\intercal \m+A (c\v+1+\epsvec+\Deltaalg)}
				{(c\v+1+\epsvec+\Deltaopt)^\intercal \m+A (c\v+1+\epsvec+\Deltaopt)} \\
		&=
			\frac{(\epsvec+\Deltaalg)^\intercal \m+A (\epsvec+\Deltaalg)}
				{(\epsvec+\Deltaopt)^\intercal \m+A (\epsvec+\Deltaopt)} \\
		&=
			\frac{\Deltaalg^\intercal \m+A \Deltaalg
				+ 2 \epsvec^\intercal \m+A \Deltaalg
				+ \epsvec^\intercal \m+A \epsvec}
				{\Deltaopt^\intercal \m+A \Deltaopt
				+ 2 \epsvec^\intercal \m+A \Deltaopt
				+\epsvec^\intercal \m+A \epsvec} \\
		&\geq
			\frac{\Deltaalg^\intercal \m+A \Deltaalg
				+ (1-2\gamma_1)\epsvec^\intercal \m+A \epsvec}
				{\Deltaopt^\intercal \m+A \Deltaopt
				+ 2 \epsvec^\intercal \m+A \Deltaopt
				+\epsvec^\intercal \m+A \epsvec} \\
		&\geq
			\frac{\Deltaalg^\intercal \m+A \Deltaalg
				+ (1-2\gamma_1)\epsvec^\intercal \m+A \epsvec}
				{2 (\Deltaopt^\intercal \m+A \Deltaopt
				+\epsvec^\intercal \m+A \epsvec)},
	\end{align*}
	where in the first step we used the assumption
	$\begop_0=c\v+1+\epsvec$ and the definitions of $\Deltaalg$ and $\Deltaopt$.
	In the second step we used that $\m+A \v+1 = \v+0$ by
	Lemma~\ref{lem:matrices} and that $\m+A$ is symmetric.
	In the fourth step we used Assumption~(1) and the observations that
	$\begopalg - \begop_0 = \Deltaalg$
	and
	$\begop_0^{\intercal} \m+A
		= (c\v+1 + \epsvec)^{\intercal} \m+A
		= \epsvec^{\intercal} \m+A$
	using Lemma~\ref{lem:matrices}.
	In the fifth step we used that
	$2 \epsvec^\intercal \m+A \Deltaopt \leq 
				\Deltaopt^\intercal \m+A \Deltaopt
				+\epsvec^\intercal \m+A \epsvec$.
	The last fact can be seen by letting $\m+A = D^\intercal D$ for a suitable
	matrix~$D$ (which exists since $\m+A$ is positive semi-definite by
	Lemma~\ref{lem:matrices}) and observing that
	$0 \leq || D(\Deltaopt - \epsvec) ||_2^2
		= (\Deltaopt - \epsvec)^\intercal D^\intercal D (\Deltaopt - \epsvec)
		= \Deltaopt^\intercal \m+A \Deltaopt
				- 2 \epsvec^\intercal \m+A \Deltaopt
				+\epsvec^\intercal \m+A \epsvec$;
	by rearranging terms we obtain the claimed inequality.

	Next, we let $\OPT\subseteq V$ denote the set of nodes such that
	$\begop_0(u) \neq \begopopt(u)$ and similarly we set $\ALG\subseteq V$
	to the set of nodes with $\begop_0(u) \neq \begopalg(u)$.
	Observe that since
	$\begopopt = c\v+1 + \epsvec + \Deltaopt$ and $\begopopt(u)=1$ for all
	$u\in\OPT$, we have that
	$\Deltaopt = (1-c)\v+1_{\vert \OPT} - \epsvec_{\vert \OPT}$.
	Similarly, 
	$\Deltaalg = (1-c)\v+1_{\vert \ALG} - \epsvec_{\vert \ALG}$.

	Then we get that
	\begin{align*}
		&\Deltaalg^\intercal \m+A \Deltaalg \\
		&=
			(1-c)^2 \v+1_{\vert \ALG}^\intercal \m+A \v+1_{\vert \ALG}
			- 2(1-c) \epsvec_{\vert \ALG}^\intercal \m+A \v+1_{\vert \ALG} \\
			&\quad + \epsvec_{\vert \ALG}^\intercal \m+A \epsvec_{\vert \ALG} \\
		&\geq 
			(1-c)^2 \v+1_{\vert\ALG}^\intercal \m+A \v+1_{\vert\ALG}
			- 2(1-c) \epsvec_{\vert\ALG}^\intercal \m+A \v+1_{\vert\ALG} \\
		&\geq 
			(1-c)^2 \v+1_{\vert\ALG}^\intercal \m+A \v+1_{\vert\ALG}
			- 2 (1-c) \gamma_3 \epsvec^\intercal \m+A \epsvec,
	\end{align*}
	where in the second step we used that 
	$\epsvec_{\vert\ALG} \m+A \epsvec_{\vert\ALG} \geq 0$
	and in the third step we used Assumption~(3).

	Furthermore, we obtain that
	\begin{align*}
		& \Deltaopt^\intercal \m+A \Deltaopt \\
		&=
			(1-c)^2 \v+1_{\vert\OPT}^\intercal \m+A \v+1_{\vert\OPT}
			- 2(1-c) \epsvec_{\vert\OPT}^\intercal \m+A \v+1_{\vert\OPT} \\
			&\quad + \epsvec_{\vert\OPT}^\intercal \m+A \epsvec_{\vert\OPT} \\
		&\leq
			(1-c)^2 \v+1_{\vert\OPT}^\intercal \m+A \v+1_{\vert\OPT}
			+ (2(1-c)\gamma_3+\gamma_2) \epsvec^\intercal \m+A \epsvec,
	\end{align*}
	where we used Assumptions~(3) and~(2).

	By combining our derivations from above, we obtain that
	\begin{align*}
		&\frac{\begopalg^\intercal \m+A \begopalg}{\begopopt^\intercal \m+A \begopopt} \\
		&\geq
			\frac{
				(1-c)^2 \v+1_{\vert\ALG}^\intercal \m+A \v+1_{\vert\ALG}
				+ (1 - 2\gamma_1 - 2 (1-c) \gamma_3) \epsvec^\intercal \m+A \epsvec
			}{
				2( (1-c)^2 \v+1_{\vert\OPT}^\intercal \m+A \v+1_{\vert\OPT}
					+ (1 + 2(1-c)\gamma_3+\gamma_2) \epsvec^\intercal \m+A \epsvec)
			}.
	\end{align*}
	
	Next, let $\begopoptp\in\{0,1\}^n$ denote the optimal solution for
	Problem~\ref{eq:our-problem}. 
	Furthermore, observe that $\v+1_{\vert\OPT}$ and $\v+1_{\vert\ALG}$ are
	feasible solutions for Problem~\ref{eq:our-problem}.
	Hence, we obtain
	$$ \v+1_{\vert\OPT}^\intercal\m+A\v+1_{\vert\OPT}
		\leq \begopoptp^\intercal\m+A\begopoptp.$$
	Furthermore, since in our algorithm we use an $\beta$-approximation algorithm for
	Problem~\ref{eq:our-problem} to pick the set of nodes $\ALG$, it also holds that
	$$ \v+1_{\vert\ALG}^\intercal\m+A\v+1_{\vert\ALG}
		\geq \beta \cdot \begopoptp^\intercal\m+A\begopoptp.$$
	This implies that
	\begin{align*}
		&\frac{\begopalg^\intercal \m+A \begopalg}{\begopopt^\intercal \m+A \begopopt} \\
		&\geq
			\frac{
				(1-c)^2 \beta \begopoptp^\intercal \m+A \begopoptp
				+ (1 - 2\gamma_1 - 2 (1-c) \gamma_3) \epsvec^\intercal \m+A \epsvec,
			}{
				2( (1-c)^2 \begopoptp^\intercal \m+A \begopoptp
					+ (1 + 2(1-c)\gamma_3+\gamma_2) \epsvec^\intercal \m+A \epsvec)
			}.
	\end{align*}
	
	To obtain our approximation, 
	observe that if
	$$(1-c)^2 \begopoptp^\intercal \m+A \begopoptp
				\geq (1 + 2(1-c)\gamma_3+\gamma_2) \epsvec^\intercal \m+A \epsvec)$$
	then 
	\begin{align*}
		&\frac{\begopalg^\intercal \m+A \begopalg}{\begopopt^\intercal \m+A \begopopt} \\
		&\geq
			\frac{
				(1-c)^2 \beta \begopoptp^\intercal \m+A \begopoptp
				+ (1 - 2\gamma_1 - 2 (1-c) \gamma_3) \epsvec^\intercal \m+A \epsvec,
			}{
				4(1-c)^2 \begopoptp^\intercal \m+A \begopoptp
			} \\
		&\geq \frac{\beta}{4}.
	\end{align*}
	Similarly, if 
	$$(1-c)^2 \begopoptp^\intercal \m+A \begopoptp
				< (1 + 2(1-c)\gamma_3+\gamma_2) \epsvec^\intercal \m+A \epsvec)$$
	then
	\begin{align*}
		&\frac{\begopalg^\intercal \m+A \begopalg}{\begopopt^\intercal \m+A \begopopt} \\
		&>
			\frac{
				(1-c)^2 \beta \begopoptp^\intercal \m+A \begopoptp
				+ (1 - 2\gamma_1 - 2 (1-c) \gamma_3) \epsvec^\intercal \m+A \epsvec
			}{
				4 (1 + 2(1-c)\gamma_3+\gamma_2) \epsvec^\intercal \m+A \epsvec)
			} \\
		&\geq 
			\frac{
				(1 - 2\gamma_1 - 2 (1-c) \gamma_3)
			}{
				4 (1 + 2(1-c)\gamma_3+\gamma_2)
			}.
	\end{align*}
	We conclude that the approximation ratio of our algorithm is given by
	$\frac{1}{4} \min\{\beta,
		   \frac{ 1-2\gamma_1-2(1-c)\gamma_3 }{ 1+2(1-c)\gamma_3+\gamma_2 } \}$.

\subsection{Convexity Implies Extreme Values}
\label{sec:convexity}
	Let $\m+A\in\mathbb{R}^{n\times n}$ be a symmetric matrix, let
	$\begop_0\in[-1,1]^n$ and let $k>0$ be an integer.
	Consider the following optimization problem, which generalizes
	Problems~\eqref{problem:max-disagreement},~\eqref{eq:our-problem},
	and~\eqref{problem:maxcut-unbalanced}:
	\begin{equation}
    \label{eq:convex}
		\begin{aligned}
			\max_{\begop} \quad &  \begop^{\intercal} \m+A \begop,\\
			\st \quad & \begop \in [-1,1]^n, \text{ and}\\
			& \lVert \begop - \begop_0 \rVert_0 \leq k.
		\end{aligned}
	\end{equation}   

	Now we show prove a lemma about optimal solutions of
	Problem~\eqref{eq:convex}, where we write $\v+x(i)$ to denote the $i$'th
	entry of a vector~$\v+x$.
	\begin{lemma}
	\label{lemma:max-disagreement-extreme}
	\label{lemma:convex}
		Suppose that $\m+A$ is a positive semi-definite matrix. Then there
		exists an optimal solution~$\begop$ for Problem~\ref{eq:convex} such
		that $\begop(i)\in\{-1,1\}$ for all entries~$i$ with $\begop(i)\neq\begop_0(i)$.
		In particular, if $\begop_0\in\{-1,1\}^n$ or $k=n$ then there exists an
		optimal solution~$\begop\in\{-1,1\}^n$.
	\end{lemma}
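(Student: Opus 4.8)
The plan is to use that $\m+A$ positive semidefinite makes the objective $f(\begop)=\begop^\intercal\m+A\begop$ convex, so that over a box the maximum is attained at a vertex, whose coordinates in the free directions take the values $\pm 1$. The only obstacle is that the cardinality constraint $\lVert\begop-\begop_0\rVert_0\le k$ renders the feasible region non-convex (it is a union of boxes, one per admissible support), so I cannot apply the vertex argument to the whole region directly; the key step is therefore to pin down the support first.

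First I would take any optimal solution $\begop^*$ of Problem~\eqref{eq:convex} and set $S=\{i:\begop^*(i)\neq\begop_0(i)\}$, so $\abs{S}\le k$. I then consider the restricted problem of maximizing $f$ over the box $B=\{\begop\in[-1,1]^n:\begop(i)=\begop_0(i)\text{ for all }i\notin S\}$. Every $\begop\in B$ modifies only coordinates in $S$, hence $\lVert\begop-\begop_0\rVert_0\le\abs{S}\le k$ and $\begop$ is feasible for Problem~\eqref{eq:convex}; moreover $\begop^*\in B$, so the optimum of $f$ over $B$ equals the global optimum of Problem~\eqref{eq:convex}.

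Next I would invoke the standard fact that a convex function on a polytope attains its maximum at an extreme point: writing any $\begop\in B$ as a convex combination $\sum_j\lambda_j\v+v_j$ of the finitely many vertices $\v+v_j$ of $B$, convexity gives $f(\begop)\le\sum_j\lambda_j f(\v+v_j)\le\max_j f(\v+v_j)$. Since the vertices of $B$ are exactly the points equal to $\begop_0$ off $S$ and equal to $\pm1$ on $S$, there is a maximizer $\begop'$ of the restricted problem with $\begop'(i)\in\{-1,1\}$ for $i\in S$ and $\begop'(i)=\begop_0(i)$ for $i\notin S$. This $\begop'$ is feasible and optimal for Problem~\eqref{eq:convex}, and any coordinate with $\begop'(i)\neq\begop_0(i)$ necessarily lies in $S$ and hence satisfies $\begop'(i)\in\{-1,1\}$, which is the first claim.

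The two special cases follow at once. If $\begop_0\in\{-1,1\}^n$, then $\begop'(i)\in\{-1,1\}$ on $S$ by construction and $\begop'(i)=\begop_0(i)\in\{-1,1\}$ off $S$, so $\begop'\in\{-1,1\}^n$. If $k=n$ the cardinality constraint is vacuous, so I may take $S=V$ and $B=[-1,1]^n$, whose vertices are precisely $\{-1,1\}^n$, again giving an optimal $\begop'\in\{-1,1\}^n$. I expect the genuinely delicate point to be precisely this reduction to a convex feasible set by fixing the support before applying the vertex argument, since applying the extreme-point argument to the raw $\ell_0$-constrained region would be unsound.
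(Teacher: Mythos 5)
Your proof is correct and follows essentially the same route as the paper: both arguments exploit that positive semidefiniteness makes $\begop^{\intercal}\m+A\begop$ convex and then push the coordinates where $\begop$ differs from $\begop_0$ to $\pm 1$, with feasibility preserved because only coordinates already in the support are modified. The only cosmetic difference is that the paper carries this out one coordinate at a time via an explicit convex combination of the two endpoint vectors, whereas you fix the support box up front and invoke the extreme-point principle for convex functions on polytopes in a single step.
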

	\begin{proof}
		First, note that since $\m+A$ is positive semi-definite, the quadratic
		form $f(\begop) = \begop^{\intercal} \m+A \begop$ is convex.

		Second, consider an optimal solution $\begop$. If $\begop$ satisfies
		the property from the lemma, we are done. Otherwise, there exists at least
		one entry~$i$ such that $\begop(i)\neq\begop_0(i)$ and
		$\begop(i)\in(0,1)$. Now let $\v+t_{-1}$ denote the vector which has
		its $i$'th entry set to $-1$ and in which all other entries are the
		same as in $\begop$, i.e., 
		$\v+t_{-1}(j)=\begop(j)$ for all $j\neq i$ and $\v+t_{-1}(i)=-1$.
		Similarly, we set $\v+t_1$ to the vector with
		$\v+t_{1}(j)=\begop(j)$ for all $j\neq i$ and $\v+t_{1}(i)=1$.
		Note that $\v+t_{-1}$ and $\v+t_{1}$ are feasible solutions to
		Problem~\eqref{eq:convex}.

		Third, observe that there exists an $\alpha\in(0,1)$ such that
		$\begop = \alpha \v+t_{-1} + (1-\alpha)\v+t_1$. Now by the convexity of
		$f(\begop)$ we get that
		\begin{align*}
			f(\begop)
			&= f(\alpha \v+t_{-1} + (1-\alpha)\v+t_1) \\
			&\leq \alpha f(\v+t_{-1}) + (1-\alpha) f(\v+t_1) \\
			&\leq \max\{ f(\v+t_{-1}), f(\v+t_1)\}.
		\end{align*}
		Thus, at least one of $\v+t_{-1}$ and $\v+t_1$ achieves an objective
		function value that is at least as large as that of $\begop$. Hence, we
		can assume that the $i$'th entry of $\begop$ is from the set $\{-1,1\}$.
		Repeating the above procedure for all entries with $\begop(i)\neq\begop_0(i)$ and
		$\begop(i)\in(0,1)$ proves the first part of the lemma.

		The second part of the lemma (if $\begop_0\in\{-1,1\}^n$ or $k=n$)
		follows immediately from the first part.
	\end{proof}

\subsection{An illustration of graphs with mixed weights}
\label{sec:illustration-worst-case}
Figure~\ref{fig:cut-example} shows how the cut of a graph can be influenced by
positive \emph{and negative} weights. We use this example to show that
``badly-behaved" graphs with negative weights can make the cut value drop
significantly, whereas in graphs with only positive edges this is not the case. 
The graphs we discuss in the paper are in the class of ``well-behaved'' graphs.
\begin{figure}[h]
	\centering 
    \begin{tabular}{ccc}
        \resizebox{0.1\textwidth}{!}{%
			\includegraphics{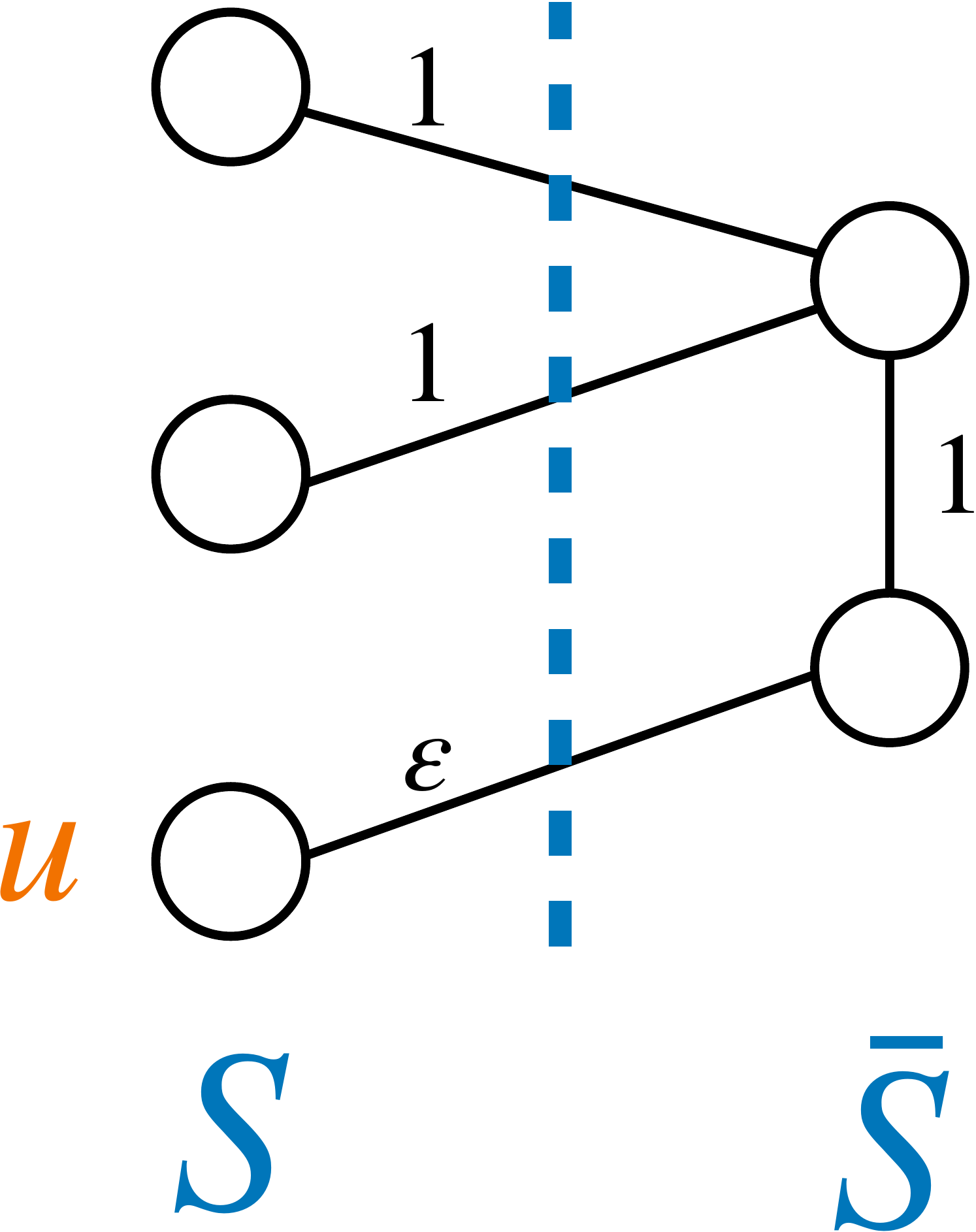}
		}&
		\resizebox{0.1\textwidth}{!}{%
			\includegraphics{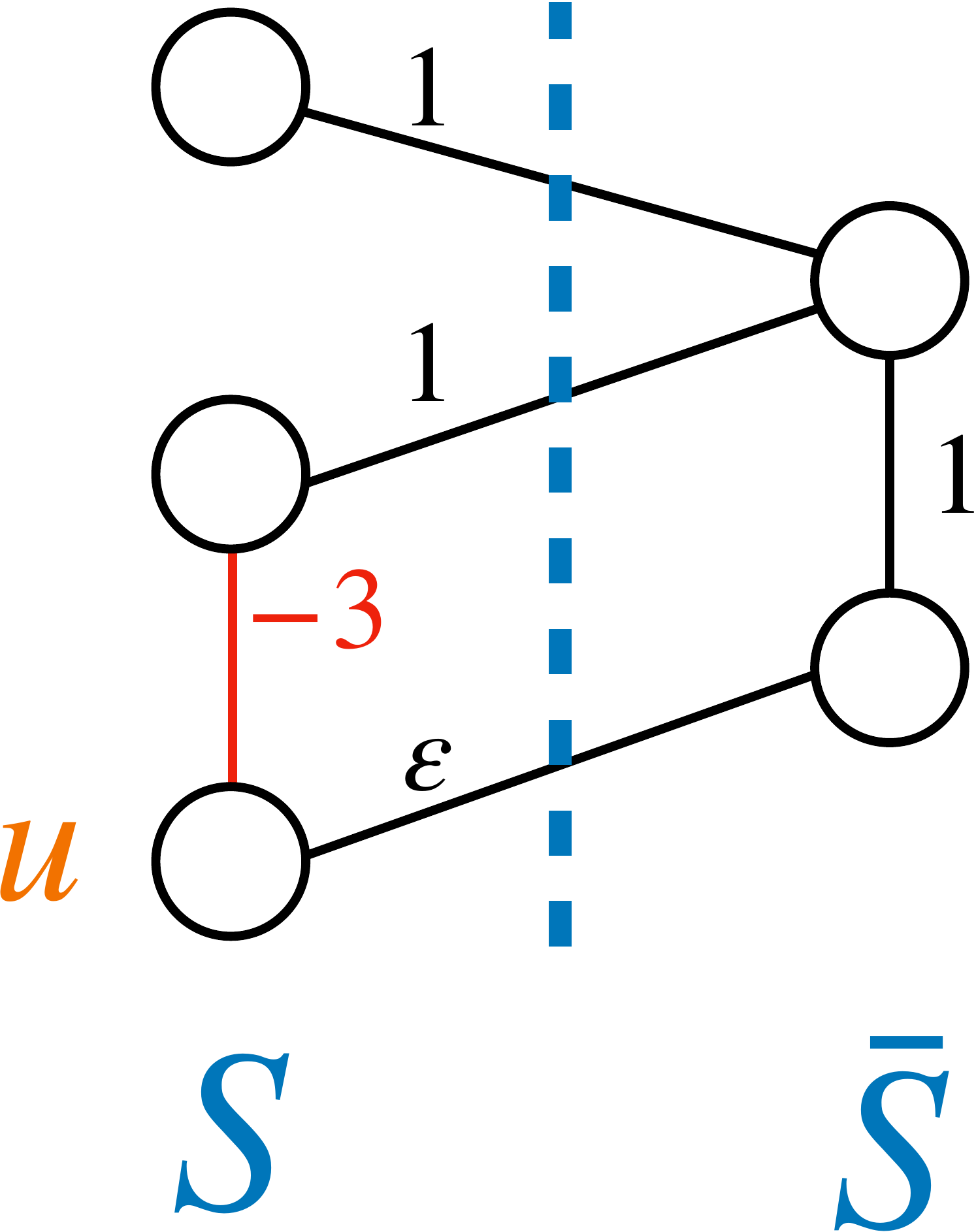}
		}&
		\resizebox{0.1\textwidth}{!}{%
			\includegraphics{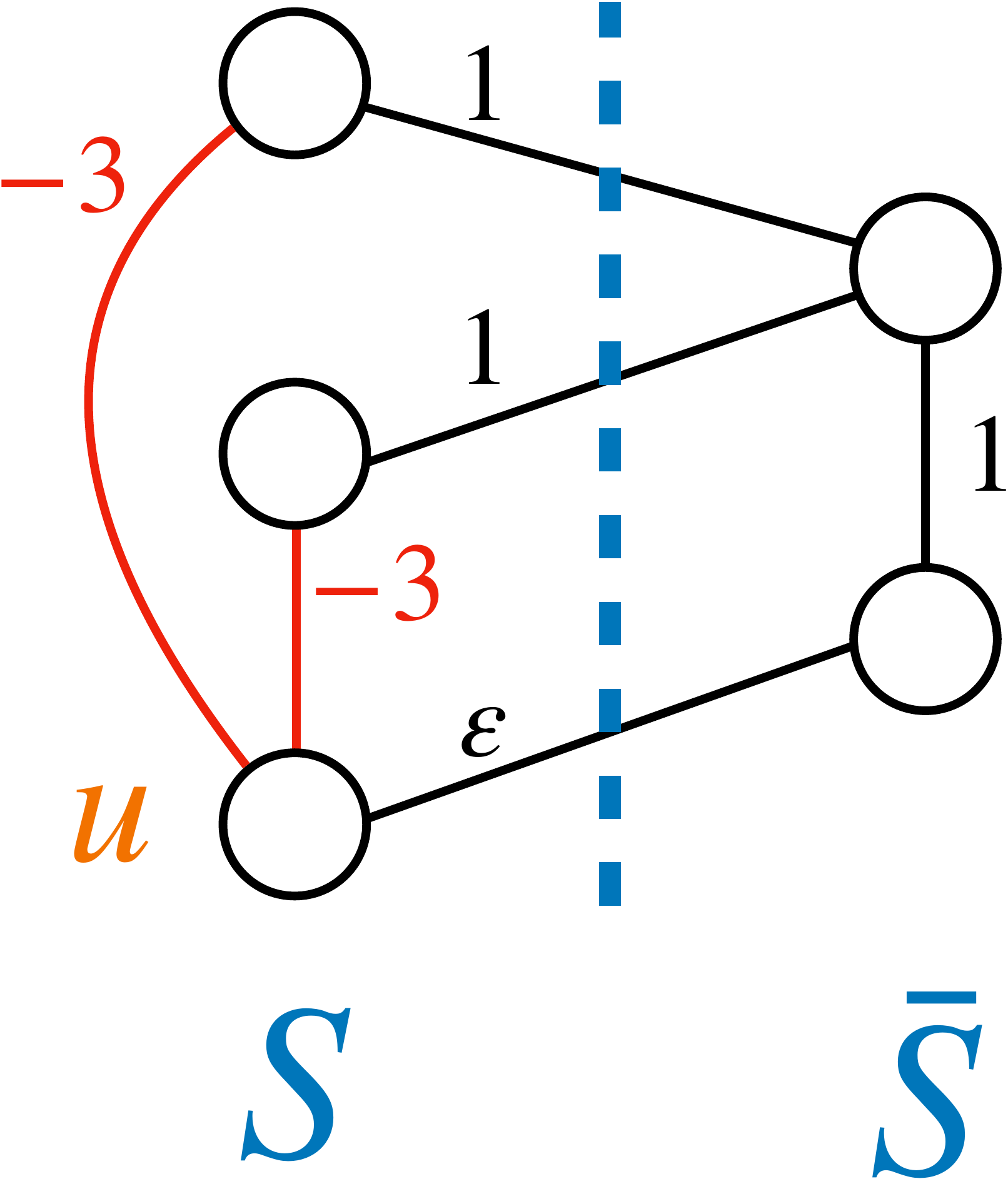}
		}\\
		(a)~Graph with
		&(b)~``Well-behaved'' 
		&(c)~``Badly-behaved''  \\
		~~positive weights &~~graph with &~~graph with \\
		&~~negative weights &~~negative weights\\
	\end{tabular}
	\caption{We visualize cuts in a graph that partition the vertices into
		sets $S$ and $\bar{S}$; edge weights are also illustrated.
		The cut shown here has value~$2+\varepsilon$.
		\emph{Figure~(a):} A graph with positive edge weights. An
				  averaging argument reveals that there must exist a vertex
				  in $S$ that we can move to $\bar{S}$ such that the cut value
				  is at least $\frac{2+\varepsilon}{\abs{S}} \geq \frac{2}{3}$.
				  Indeed, if we move $u$ from $S$ to $\bar{S}$, the cut value
				  drops only by~$\varepsilon$.
		\emph{Figure~(b):} In graphs with both positive and negative edge weights, 
				  the situation is less clear: if we move
				  $u$ from $S$ to $\bar{S}$, the cut value becomes~$-1$.
				  However, in this example we could move the top-left vertex from
				  $S$ to $\bar{S}$. Then we would still maintain a
				  positive cut value, but it would drop to $1+\varepsilon$.
				  Note that this new cut value is strictly less than what the
				  averaging argument from above revealed.
		\emph{Figure~(c):} In the worst case, it can happen that there exists no vertex
				  in~$S$, which we can move from $S$ to $\bar{S}$ without
				  obtaining a negative cut value. In our proofs, we have to show
				  that this scenario cannot occur in our setting. 
				  We prove this in Lemma~\ref{lemma:boundSUmodify}.
		}
	\label{fig:cut-example}
\end{figure}

\subsection{Proof of Theorem~\ref{thm:unbalanced}}
	We start by defining notation. 
	Let $\OPT$ denote the optimal solution for $\alpha$-{\sc Balanced-Max\-Cut},
	let $M_0$ be the objective function value obtained through randomized rounding after solving
	Problem~\eqref{problem:maxcut-unbalaced-relax}, 
	and let $M$ be the cut value for $(T, \bar{T})$ we obtain in the end. 
	Let $M^*$ be the optimal solution for Problem~\eqref{problem:maxcut-unbalaced-relax}. 

	We start with an overview of our analysis which is similar to the one by
	Frieze and Jerrum~\cite{frieze1997improved}.
	By solving the SDP and applying the hyperplane rounding enough times,
	we show that Lemma~\ref{lem:standard} implies that 
	$M_0$ is close $(1-\varepsilon) \frac{2}{\pi} \OPT$
	and simultaneously
	$\abs{S_0} \abs{\bar{S_0}} = \abs{S_0} (n - \abs{S_0})$ does not differ too
	much from $\alpha (1-\alpha) n^2$. This then implies that
	that the loss from the greedy procedure for ensuring the
	$\alpha$-balancedness constraint is not too large.
	To bound the loss from our greedy procedure for the size adjustments,
	we apply Lemma~\ref{lemma:cut-value-after-moving} with $(S_0,\bar{S}_0)$
	corresponding to the (unbalanced) solution $(S,\bar{S})$ from the hyperplane
	rounding and $(T,\bar{T})$ corresponding to the $\alpha$-balanced solution
	that we return.

	Next, we proceed with the concrete details of the proof.
	First, observe that since the objective function of the optimization problem
	is convex (see
	Section~\ref{sec:convexity}) there exists an
	optimal solution with $\begop\in\{-1,1\}^n$.  Hence, we can focus on
	solutions with $\begop\in\{-1,1\}^n$.
	
	Consider the $p$-th iteration of Algorithm~\ref{alg:SDP-based}.
	Let $X_p$ denote the cut value of $(S, \bar{S})$, and let
	$Y_p=\abs{S}\abs{\bar{S}}$. 
	Let $Z_p = \frac{X_p}{M^*} + \frac{Y_p}{N}$, where $N = \frac{n^2}{\beta}$
	and $\beta\in(0,7]$ is a parameter that depends on $\alpha$ and that we will
	pick below.  We use a similar approach as~\cite{frieze1997improved} to do the analysis.

	By Lemma~\ref{lem:standard}, $\Exp[Z_p] \geq \frac{2}{\pi} + 0.878 \alpha (1-\alpha) \cdot \beta$.
	As $X_{p} \leq M^*$ and $Y_{p} \leq \frac{n^2}{4}$, we obtain $Z_p \leq 1 + \frac{\beta}{4}$.
	We will prove that in the $\kappa$~iterations of Algorithm~\ref{alg:SDP-based}, there
	exists a $\tau$ where $Z_{\tau} = \max_{p} Z_p$ such that $Z_{\tau} \geq
	(1-\epsilon)(\frac{2}{\pi} + 0.878 \alpha (1-\alpha) \cdot \beta)$ with
	probability at least $1-\epsilon$.

	We first bound the probability of
	$Z_p \leq (1-\epsilon)\left(\frac{2}{\pi} + 0.878 \alpha (1-\alpha) \cdot \beta)\right)$
	for a single iteration~$p$ as follows:
	\begin{equation*}
		\begin{aligned}
		&\Prob{ Z_p \leq (1-\epsilon)\left(\frac{2}{\pi} + 0.878 \alpha (1-\alpha) \cdot \beta)\right) } \\
		&= \Prob{\frac{\beta}{4} + 1 - Z_p \geq \frac{\beta}{4} + 1 -
			\left((1-\epsilon)\left(\frac{2}{\pi} + 0.878 \alpha (1-\alpha) \cdot \beta
						\right)\right)} \\
		&\leq \frac{\frac{\beta}{4} + 1 - \Exp[Z_p]}{\frac{\beta}{4} + 1 - ((1-\epsilon)(\frac{2}{\pi} + 0.878 \alpha (1-\alpha) \cdot \beta))} \\
		&\leq \frac{\frac{\beta}{4} + 1 - (\frac{2}{\pi} + 0.878 \alpha (1-\alpha) \cdot \beta)}{\frac{\beta}{4} + 1 - ((1-\epsilon)( \frac{2}{\pi} + 0.878 \alpha (1-\alpha) \cdot \beta))} \\
		&= \frac{1 - c}{1 - (1-\epsilon)c},
		\end{aligned}
	\end{equation*}
	where $c = \frac{\frac{2}{\pi} + 0.878 \alpha (1-\alpha) \cdot \beta}{1 + \frac{\beta}{4}}$.
	The first equality holds as we multiply with $-1$ and add $\frac{\beta}{4} +1$ to both sides;
	the first inequality holds by Markov inequality;
	the second inequality holds by $\Exp[Z_p] \geq 0.878 \alpha (1-\alpha) \cdot \beta + \frac{2}{\pi}$.
	In the end we simplify the formula by introducing $c$. 
	Notice that since we assume $\beta > 0$, $c$ is in $(0,1)$. 
	Moreover, by Lemma~\ref{lem:helper}, we can bound $c$ between $\frac{\frac{2}{\pi}}{1}$ and $\frac{0.878 \alpha (1-\alpha) \cdot \beta}{\frac{\beta}{4}}$.
	Namely, either $c \in [{0.878 \alpha (1-\alpha) \cdot 4}, \frac{2}{\pi}]$ or $c \in [\frac{2}{\pi}, {0.878 \alpha (1-\alpha) \cdot 4}]$.

	\begin{lemma}
		\label{lem:helper}
		Let $a_1, a_2, a_3, a_4$ be positive real numbers, 
		then either $\frac{a_2}{a_4} \leq \frac{a_1 + a_2}{a_3 + a_4} \leq \frac{a_1}{a_3}$ 
		or $\frac{a_2}{a_4} \geq \frac{a_1 + a_2}{a_3 + a_4} \geq \frac{a_1}{a_3}$.
	\end{lemma}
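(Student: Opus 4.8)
The plan is to recognize that $\frac{a_1+a_2}{a_3+a_4}$ is the \emph{mediant} of the two fractions $\frac{a_1}{a_3}$ and $\frac{a_2}{a_4}$, and to prove the standard fact that the mediant always lies (weakly) between the two fractions; the dichotomy in the statement then just records which of the two fractions is the larger endpoint. First I would compute the difference between the mediant and $\frac{a_1}{a_3}$ over the common denominator $a_3(a_3+a_4)$, obtaining
\[
	\frac{a_1+a_2}{a_3+a_4} - \frac{a_1}{a_3}
	= \frac{a_2 a_3 - a_1 a_4}{a_3(a_3+a_4)}.
\]
Similarly, over the common denominator $a_4(a_3+a_4)$ I would get
\[
	\frac{a_1+a_2}{a_3+a_4} - \frac{a_2}{a_4}
	= \frac{a_1 a_4 - a_2 a_3}{a_4(a_3+a_4)}.
\]
Since all $a_i$ are positive, both denominators $a_3(a_3+a_4)$ and $a_4(a_3+a_4)$ are strictly positive, so the signs of these two differences are controlled entirely by their numerators $a_2 a_3 - a_1 a_4$ and $a_1 a_4 - a_2 a_3$, which are negatives of one another.

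Next I would set $\Delta = a_2 a_3 - a_1 a_4$ and split into two cases. If $\Delta \geq 0$, the first display is nonnegative and the second is nonpositive, which together yield $\frac{a_1}{a_3} \leq \frac{a_1+a_2}{a_3+a_4} \leq \frac{a_2}{a_4}$, i.e.\ the second alternative of the lemma. If instead $\Delta \leq 0$, both inequalities reverse and I obtain $\frac{a_2}{a_4} \leq \frac{a_1+a_2}{a_3+a_4} \leq \frac{a_1}{a_3}$, i.e.\ the first alternative. Since $\Delta$ is a real number, one of these two cases always holds, so the two alternatives together exhaust all possibilities.

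The argument is entirely elementary and I do not anticipate a genuine obstacle; the only points requiring care are the algebraic simplification of the two numerators and the use of positivity of the $a_i$ to ensure the denominators are positive, so that clearing them does not flip the direction of any inequality. For completeness I would note how the lemma is applied in the proof of Theorem~\ref{thm:unbalanced}: it is invoked with $a_1 = \frac{2}{\pi}$, $a_2 = 0.878\,\alpha(1-\alpha)\beta$, $a_3 = 1$, and $a_4 = \frac{\beta}{4}$, so that $\frac{a_1+a_2}{a_3+a_4} = c$, and the lemma sandwiches $c$ between $\frac{a_1}{a_3} = \frac{2}{\pi}$ and $\frac{a_2}{a_4} = 0.878\,\alpha(1-\alpha)\cdot 4$, exactly as claimed there.
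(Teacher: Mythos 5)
Your proposal is correct and is essentially the same argument as the paper's: both reduce the claim to the sign of $a_1 a_4 - a_2 a_3$ and use positivity of the $a_i$ to clear denominators, the only cosmetic difference being that you compute differences of fractions over a common denominator while the paper cross-multiplies and adds $a_2 a_4$ (resp.\ $a_1 a_3$) to both sides. No gap.
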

	\begin{proof}
		We prove the lemma with two case distinctions. 

		Case 1: Assume $a_1 a_4 \geq a_2 a_3$.
		If we add $a_2 a_4$ on both sides, the formula becomes $(a_1 + a_2) a_4 \geq a_2 (a_3 + a_4)$, which implies $\frac{a_1 + a_2}{a_3 + a_4} \geq \frac{a_2}{a_4}$;
		if we add $a_1 a_3$ on both sides, the formula becomes $a_1 (a_3 + a_4) \geq (a_1 + a_2) a_3$, which implies $\frac{a_1}{a_3} \geq \frac{a_1 + a_2}{a_3 + a_4}$;
		thus $\frac{a_1}{a_3} \geq \frac{a_1 + a_2}{a_3 + a_4} \geq \frac{a_2}{a_4}$.

		Case 2: Assume $a_1 a_4 \leq a_2 a_3$.
		If we add $a_2 a_4$ on both sides, the formula becomes $(a_1 + a_2) a_4 \leq a_2 (a_3 + a_4)$, which implies $\frac{a_1 + a_2}{a_3 + a_4} \leq \frac{a_2}{a_4}$;
		if we add $a_1 a_3$ on both sides, the formula becomes $a_1 (a_3 + a_4) \leq (a_1 + a_2) a_3$, which implies $\frac{a_1}{a_3} \leq \frac{a_1 + a_2}{a_3 + a_4}$;
		thus $\frac{a_1}{a_3} \leq \frac{a_1 + a_2}{a_3 + a_4} \leq \frac{a_2}{a_4}$.
	\end{proof}
	
	Notice that as the algorithm repeats the procedure $\kappa$ times, and the
	runs of the procedure are independent from each other, the
	probability that $Z_{p} \leq (1-\epsilon)(\frac{2}{\pi} + 0.878 \alpha (1-\alpha) \cdot \beta)$
	for all $p$ is then bounded from above by
	\begin{equation*}
		\begin{aligned}
			\left(\frac{1 - c}{1 - (1-\epsilon)c}\right)^\kappa
			&= \left(1 - \frac{1}{1+ \frac{1-c}{\epsilon c}}\right)^\kappa \\
			&= \left(1 - \frac{1}{1+ \frac{1-c}{\epsilon c}}\right)^{(1+ \frac{1-c}{\epsilon c}) \frac{\epsilon c}{1 - c + \epsilon c}\kappa} \\
			&\leq \exp\left(-\frac{\epsilon c}{1 - c + \epsilon c}\kappa \right),
		\end{aligned}
	\end{equation*}
	where the inequality holds through $(1-\frac{1}{x})^x \leq \frac{1}{e}$ for any $x > 1$.

	As a result, if we choose $\kappa  \geq \frac{1 - c + \epsilon c}{\epsilon c} \log (\frac{1}{\epsilon}) \in \bigO(\frac{1}{\epsilon} \log (\frac{1}{\epsilon}))$, 
	with probability at least $1 - \epsilon$, $Z_{\tau} \geq (1-\epsilon)(\frac{2}{\pi} + 0.878 \alpha (1-\alpha) \cdot \beta)$.
	
	Now consider the (non-$\alpha$-balanced) solution $(S,\bar{S})$ from the
	$\tau$-th iteration with cut-value $M_0 := X_\tau$. We set
	$s=\frac{\abs{S}}{n}$ and $t=\alpha$ and apply
	Lemma~\ref{lemma:cut-value-after-moving} to obtain a solution that satisfies
	the $\alpha$-balancedness constraint.
	Suppose that in the $\tau$-th run, $X_{\tau} = \lambda M^*$ for suitable
	$\lambda$.  Then it follows that $Y_{\tau} \geq ((1-\epsilon)(0.878 \alpha (1-\alpha) \cdot \beta + \frac{2}{\pi}) - \lambda) N$.
	By replacing $Y_{\tau}$ with $Y_{\tau} = n^2(1-s)s$, we obtain
	$\lambda \geq (1 - \epsilon)(\frac{\pi}{2} + 0.878 \alpha (1-\alpha) \cdot \beta) - \beta (1-s)s$. 

	We distinguish the two cases $0 < s \leq t \leq 0.5$ and $0< t < s \leq 0.5$.
	
	\textbf{Case 1:} $0 < s < t \leq 0.5$. By
	Lemma~\ref{lemma:cut-value-after-moving} we have that
	$M \geq \frac{(1-t)^2 -7(1-t)/n + 12/n^2}{(1-s)^2  + (1-s)/n} M_0$.
	Now notice that $\lim_{n \rightarrow \infty} \frac{(1-t)^2 -7(1-t)/n + 12/n^2}{(1-s)^2  + (1-s)/n} = \frac{(1-t)^2}{(1-s)^2}$. 
	Which indicates that for any $\epsilon' > 0$, 
	there exists a constant $C$, such that for any $n \geq C$,
	$M \geq \left(\frac{(1-t)^2}{(1-s)^2} - \epsilon'\right) M_{0}$. 
	Using the above analysis, and setting $t = \alpha$, 
	we obtain that $M \geq \lambda \frac{(1-\alpha)^2}{(1-s)^2} \OPT
	\geq [\frac{\pi}{2} + 0.878 \alpha (1-\alpha) \cdot \beta - \beta (1-s)s - \varepsilon'']\frac{(1-\alpha)^2}{(1-s)^2} \OPT$. 

	\textbf{Case 2:} $0 < t < s  \leq 0.5$. By Lemma~\ref{lemma:cut-value-after-moving}, 
	we obtain $M \geq \frac{t^2 - t/n}{s^2 - s/n} M_0$.
	Notice that $\lim_{n \rightarrow \infty} \frac{t^2 - t/n}{s^2 - s/n} = \frac{t^2}{s^2}$,
	which indicates that for any $\epsilon' > 0$, there exists a constant $C$,
	such that for any $n \geq C$, $M \geq \left(\frac{t^2}{s^2} - \epsilon'\right) M_{0}$. 
	Using the above analysis, and setting $t = \alpha$,
	we obtain that $M \geq \lambda \frac{\alpha^2}{s^2} \OPT
	= [\frac{\pi}{2} + 0.878 \alpha (1-\alpha) \cdot \beta - \beta (1-s)s - \varepsilon''']\frac{\alpha^2}{s^2} \OPT$. 
	
	Now we combine the two cases together, i.e., we take the minimum of the two
	solutions given any $\alpha \leq 0.5$.  To do so, we numerically solve the
	following problem for any $\alpha\leq 0.5$, 
	\begin{align*}
	\max_{0<\beta <7}
	\quad
	\min_{0< s_1 < \alpha \leq 0.5, \,\, 0<\alpha \leq s_2 \leq 0.5}
	&\left\{
		\left( \frac{\pi}{2} + 0.878 \alpha (1-\alpha) \cdot \beta - \beta
				(1-s_1)s_1\right)\frac{(1-\alpha)^2}{(1-s_1)^2}, \right. \\
		&\quad \left. \left(\frac{\pi}{2} + 0.878 \alpha (1-\alpha) \cdot \beta - \beta (1-s_2)s_2\right)\frac{\alpha^2}{s_2^2}
	\right\}.
	\end{align*}
	Notice that we only consider values for $\beta$ in a small domain $(0, 7)$
	since the running time of our algorithm depends on $\beta$.  We present the
	approximation ratio for given $\alpha$ and the choice of $\beta$ in
	Table~\ref{tab:selected_ratio}. 

	\begin{table*}[t]
	\centering
	\caption{The approximation ratio of Algorithm~\ref{alg:SDP-based} for some
		values of $\alpha$, and the corresponding choice of $\beta$.}
	\label{tab:selected_ratio}
	\resizebox{0.3\textwidth}{!}{%
		\begin{tabular}{@{}RRRRRRRRRRRR@{}}
\toprule
{\alpha} & \textsf{Approximation Ratio} & {\beta} \\
\midrule
0.01 & 0.0002 & 0.01\\
0.05 & 0.0063 & 0.01\\
0.09 & 0.0205 & 0.01\\
0.13 & 0.0429 & 0.01\\
0.17 & 0.0734 & 0.01\\
0.21 & 0.1121 & 0.01\\
0.25 & 0.1589 & 0.01\\
0.29 & 0.2139 & 0.01\\
0.33 & 0.2770 & 0.01\\
0.37 & 0.3268 & 0.877\\
0.41 & 0.3754 & 2.081\\
0.45 & 0.4076 & 4.075\\
0.49 & 0.3635 & 5.390\\
0.525 & 0.3837 & 5.341\\
0.565 & 0.4017 & 3.088\\
0.605 & 0.3570 & 1.598\\
0.645 & 0.3090 & 0.478\\
0.685 & 0.2524 & 0.01\\
0.725 & 0.1923 & 0.01\\
0.765 & 0.1404 & 0.01\\
0.805 & 0.0966 & 0.01\\
0.845 & 0.0610 & 0.01\\
0.885 & 0.0335 & 0.01\\
0.925 & 0.0142 & 0.01\\
0.965 & 0.0031 & 0.01\\
\bottomrule
\end{tabular}
	}
	\end{table*}

	Note that similar analysis holds when $\alpha > 0.5$, since essentially 
	in this case, our greedy procedure starts from a $S_0$ where $\abs{S_0} > \frac{1}{2}n$ (otherwise, 
	we take $\bar{S}_0$ as $S_0$). 
	The approximation ratio holds a symmetric property. 

	We plot the approximation ratio of our algorithm for different values of
	$\alpha$ in Figure~\ref{fig:approx-all}.

\subsubsection{Proof of Lemma~\ref{lem:standard}}
\begin{proof} 
	The analysis by Williamson and 
	Shmoys~{\cite[Theorem 6.16]{williamson2011design}} shows that the expected
	cut of $(S, \bar{S})$ is not less than $\frac{2}{\pi} M^*$, 
	where $M^*$ denotes the optimal solution for Problem~\ref{problem:maxcut-unbalaced-relax}. 
	Their analysis assumes that there is no $\ell_0$ constraint; however, 
	their proof still holds in our setting, since the $\ell_0$ constraint and its relaxation 
	do not influence the analysis and since
	Problem~\eqref{problem:maxcut-unbalaced-relax} is a relaxation of
	Problem~\eqref{eq:our-problem}.
	To prove that $\Exp[\abs{S}\abs{\bar{S}}] \geq 0.878 \cdot \alpha (1- \alpha) n^2$, 
	we use the same method as {\cite[Section 3]{frieze1997improved}}.

	For the sake of completeness, we now present the details of the analysis.
	
	\begin{lemma}[Lemma 6.12~\cite{williamson2011design}]
		\label{lemma:maxbisectionrelax}
		$\Exp[\x_i\x_j] = \frac{2}{\pi} \arcsin(\v+v_i \cdot \v+v_j)$. 
	\end{lemma}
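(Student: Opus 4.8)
The plan is to follow the classic random-hyperplane analysis of Goemans and Williamson. First I would observe that since each $\x_i \in \{-1,1\}$, the product $\x_i\x_j$ equals $+1$ exactly when the rounding assigns $\v+v_i$ and $\v+v_j$ to the same side of the hyperplane (i.e.\ $\langle \v+v_i,\v+r\rangle$ and $\langle \v+v_j,\v+r\rangle$ have the same sign) and equals $-1$ otherwise. Writing $p$ for the probability that the hyperplane \emph{separates} $\v+v_i$ and $\v+v_j$, linearity of expectation over these two cases gives
\begin{equation*}
	\Exp[\x_i\x_j] = (1-p)\cdot 1 + p\cdot(-1) = 1 - 2p.
\end{equation*}

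The core step is to compute $p$. Let $\theta\in[0,\pi]$ denote the angle between the unit vectors $\v+v_i$ and $\v+v_j$. I would project the Gaussian vector $\v+r$ onto the two-dimensional plane spanned by $\v+v_i$ and $\v+v_j$; the sign of $\langle \v+v_i,\v+r\rangle$ depends only on this projection, since the component of $\v+r$ orthogonal to the plane contributes nothing to either inner product. By the rotational invariance of the standard Gaussian, the direction of the projected vector is uniform on the circle. The conditions $\langle \v+v_i,\v+r\rangle\ge 0$ and $\langle \v+v_j,\v+r\rangle\ge 0$ define two half-planes whose bounding lines pass through the origin and make an angle $\theta$ with each other; these lines cut the plane into four sectors, and the signs disagree precisely when the projected direction lands in one of the two opposite sectors of angular width $\theta$. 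Hence $p = \frac{2\theta}{2\pi} = \frac{\theta}{\pi}$. I expect this rotational-invariance argument to be the only genuine obstacle, as it is the geometric heart of the rounding; everything else is bookkeeping.

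Finally, substituting $p=\theta/\pi$ and using that $\v+v_i,\v+v_j$ are unit vectors, so that $\cos\theta = \v+v_i\cdot\v+v_j$ and $\theta = \arccos(\v+v_i\cdot\v+v_j)$, I would obtain
\begin{equation*}
	\Exp[\x_i\x_j] = 1 - \frac{2\theta}{\pi} = 1 - \frac{2}{\pi}\arccos(\v+v_i\cdot\v+v_j).
\end{equation*}
Applying the identity $\arccos(x) = \frac{\pi}{2} - \arcsin(x)$ then cancels the constant term and yields $\Exp[\x_i\x_j] = \frac{2}{\pi}\arcsin(\v+v_i\cdot\v+v_j)$, as claimed. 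Since this is an entirely standard computation once the separation probability $p=\theta/\pi$ is established, I anticipate no further difficulties.
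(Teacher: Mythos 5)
Your argument is correct and is precisely the standard random-hyperplane computation that the paper defers to by citing Lemma~6.12 of Williamson and Shmoys rather than proving it: the separation probability $p=\theta/\pi$ via rotational invariance of the projected Gaussian, followed by $\Exp[\x_i\x_j]=1-2p$ and the identity $\arccos(x)=\frac{\pi}{2}-\arcsin(x)$. There is nothing to add; the approach matches the cited source exactly.
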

	\begin{corollary}
		\label{corollary:maxbisectionrelcut}
		$\frac{1}{2}\Exp[1 - \x_i\x_j] = \frac{1}{\pi}\arccos(\v+v_i \cdot \v+v_j)$. 
	\end{corollary}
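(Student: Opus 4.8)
The plan is to derive Corollary~\ref{corollary:maxbisectionrelcut} directly from Lemma~\ref{lemma:maxbisectionrelax}, using only linearity of expectation together with the complementary-angle identity $\arcsin(x) + \arccos(x) = \frac{\pi}{2}$. First I would expand the left-hand side by linearity of expectation, writing $\frac{1}{2}\Exp[1 - \x_i\x_j] = \frac{1}{2} - \frac{1}{2}\Exp[\x_i\x_j]$, which isolates the quantity whose value is already supplied by the preceding lemma.

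Next I would substitute the expression for $\Exp[\x_i\x_j]$ from Lemma~\ref{lemma:maxbisectionrelax}, namely $\frac{2}{\pi}\arcsin(\v+v_i \cdot \v+v_j)$, obtaining $\frac{1}{2} - \frac{1}{\pi}\arcsin(\v+v_i \cdot \v+v_j)$. Then I would eliminate the $\arcsin$ term in favor of $\arccos$ via $\arcsin(x) = \frac{\pi}{2} - \arccos(x)$; after distributing the factor $\frac{1}{\pi}$ the two constant contributions of $\frac{1}{2}$ cancel, and what remains is exactly $\frac{1}{\pi}\arccos(\v+v_i \cdot \v+v_j)$, which is the claimed identity.

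Since every step is an elementary algebraic or trigonometric manipulation, there is essentially no obstacle to overcome here. The only point worth verifying is that the identity $\arcsin(x) + \arccos(x) = \frac{\pi}{2}$ is applied to a valid argument, i.e., that $\v+v_i \cdot \v+v_j \in [-1,1]$; this holds because the SDP relaxation in Problem~\eqref{problem:maxcut-unbalaced-relax} enforces $\|\v+v_i\|_2 = 1$, so by Cauchy--Schwarz the inner product lies in $[-1,1]$ and the identity is valid.
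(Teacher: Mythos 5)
Your proof is correct and matches the paper's (one-line) argument exactly: both derive the corollary from Lemma~\ref{lemma:maxbisectionrelax} by linearity of expectation and the identity $\arccos(x) = \frac{\pi}{2} - \arcsin(x)$. Your added check that $\v+v_i \cdot \v+v_j \in [-1,1]$ is a harmless extra precaution the paper omits.
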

	\begin{proof}
		This is because $\arccos(x) = \frac{\pi}{2} - \arcsin(x)$ for any $x$. 
	\end{proof}

	\begin{lemma}[Corollary 6.15~\cite{williamson2011design}]
		\label{lemma:maxbisectionrelaxcorollary}
		If $\m+X \succcurlyeq \m+0$, $X_{ij} \leq 1$ for all $i, j$ and 
		$\m+Z = (Z_{ij})$ such that $Z_{ij} = \arcsin(X_{ij}) - X_{ij}$, 
		then $\m+Z \succcurlyeq \m+0$. 
	\end{lemma}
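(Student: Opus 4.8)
The plan is to exploit the power-series expansion of $\arcsin$ together with the Schur product theorem. First I would recall that $\arcsin$ has the Taylor expansion $\arcsin(x) = \sum_{k=0}^{\infty} a_k x^{2k+1}$, where $a_0 = 1$ and $a_k = \frac{(2k)!}{4^k (k!)^2 (2k+1)} > 0$ for every $k \geq 1$. Subtracting the linear term gives $\arcsin(x) - x = \sum_{k=1}^{\infty} a_k x^{2k+1}$, a power series with \emph{nonnegative} coefficients involving only odd powers of $x$ of degree at least three. In our setting $\m+X$ is the Gram matrix of unit vectors, so its entries satisfy $\abs{X_{ij}} \leq 1$; hence each entry lies in $[-1,1]$ and the series converges absolutely there (indeed $\sum_{k=0}^{\infty} a_k = \arcsin(1) = \frac{\pi}{2} < \infty$).

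Next I would apply this expansion entrywise. By the definition of $\m+Z$ we have $Z_{ij} = \arcsin(X_{ij}) - X_{ij} = \sum_{k=1}^{\infty} a_k X_{ij}^{2k+1}$, and therefore $\m+Z = \sum_{k=1}^{\infty} a_k \m+X^{\circ(2k+1)}$, where $\m+X^{\circ m}$ denotes the $m$-fold Hadamard (entrywise) power of $\m+X$. The key structural fact is the Schur product theorem: the Hadamard product of two positive semidefinite matrices is again positive semidefinite. Applying it repeatedly shows that every Hadamard power $\m+X^{\circ m}$ of the PSD matrix $\m+X$ is itself PSD. Consequently each term $a_k \m+X^{\circ(2k+1)}$ in the expansion of $\m+Z$ is a nonnegative scalar times a PSD matrix, and is therefore PSD.

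Finally I would pass from the partial sums to the limit. For each $N$, the partial sum $\m+Z^{(N)} = \sum_{k=1}^{N} a_k \m+X^{\circ(2k+1)}$ is a finite nonnegative combination of PSD matrices and hence PSD. Because the entries of $\m+X$ lie in $[-1,1]$, the series converges entrywise, and thus (in finite dimension) in norm, so $\m+Z^{(N)} \to \m+Z$. Since the cone of positive semidefinite matrices is closed, the limit $\m+Z$ is also PSD, which is the claim. The step I would treat most carefully — and the only real subtlety — is this closure/convergence argument: one must verify that the entrywise limit of PSD matrices remains PSD, which follows because $\v+v^{\intercal} \m+Z^{(N)} \v+v \geq 0$ for every vector $\v+v$ and the left-hand side converges to $\v+v^{\intercal} \m+Z \v+v$. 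Establishing the nonnegativity of the coefficients $a_k$ is routine once the closed form above is written down.
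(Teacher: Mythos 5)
Your proof is correct and is exactly the standard argument behind Corollary~6.15 of Williamson and Shmoys, which the paper simply cites without reproving: expand $\arcsin(x)-x$ as a power series with nonnegative coefficients, apply the Schur product theorem to each Hadamard power, and pass to the limit using closedness of the PSD cone. The only point worth making explicit is that $\abs{X_{ij}}\leq 1$ (needed for convergence) follows from $\m+X \succcurlyeq \m+0$ together with $X_{ii}\leq 1$, since $X_{ij}^2 \leq X_{ii}X_{jj}$; your Gram-matrix remark covers this.
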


	We are ready to prove the first part of Lemma~\ref{lem:standard}.
	Note the expected cut of $(S, \bar{S})$ can be formulated as $\Exp[\sum_{i, j}\frac{1}{4}A_{ij} \x_i\x_j]$. 
	Then we have:
	\begin{displaymath}
		\begin{aligned}
			\Exp[\sum_{i, j}\frac{1}{4}A_{ij} \x_i\x_j]
			&= \sum_{i, j}\frac{1}{4}A_{ij} \Exp[\x_i\x_j] \\
			&= \frac{2}{\pi} \frac{1}{4} \sum_{i,j}A_{ij}\arcsin(\v+v_i \cdot \v+v_j) \\
			&\geq \frac{2}{\pi} \frac{1}{4} \sum_{i,j}A_{ij}\v+v_i \cdot \v+v_j \\
			&= \frac{2}{\pi} M^* \geq \frac{2}{\pi} \OPT,
		\end{aligned}
	\end{displaymath}
	where the second step is based on Lemma~\ref{lemma:maxbisectionrelax}, 
	and the third step is based on Lemma~\ref{lemma:maxbisectionrelaxcorollary}.

	To prove the second part of Lemma~\ref{lem:standard}, we bound the imbalance
	of the partition that we get from the random rounding procedure.
	We will show that $\abs{S}\abs{\bar{S}}$ does not deviate from $\alpha (1-\alpha) n^2$ too much.

	\begin{lemma}[Lemma 6.8~\cite{williamson2011design}]
		\label{lemma:maxbisectionrelcutbounc}
		For $x \in [-1, 1]$, $\frac{1}{\pi} \arccos(x) \geq 0.878 \cdot \frac{1}{2} (1 - x)$.  
	\end{lemma}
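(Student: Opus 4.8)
The plan is to reduce the claimed inequality to a one-variable minimization and then locate the minimizer by elementary calculus. Since both sides vanish at $x = 1$, it suffices to treat $x \in [-1, 1)$ and show that the ratio
$$g(x) = \frac{2\arccos(x)}{\pi(1-x)}$$
satisfies $g(x) \geq 0.878$ throughout this range. To make the calculus transparent, I would substitute $x = \cos\theta$ with $\theta \in (0, \pi]$, which turns the problem into minimizing
$$h(\theta) = \frac{2\theta}{\pi(1 - \cos\theta)}.$$

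First I would inspect the endpoints to understand the behavior. As $\theta \to 0^+$ (equivalently $x \to 1^-$), a Taylor expansion gives $1 - \cos\theta \approx \theta^2/2$, so $h(\theta) \approx 4/(\pi\theta) \to +\infty$; and at $\theta = \pi$ (i.e. $x = -1$) one has $h(\pi) = 1$. Thus any minimum must be attained at an interior critical point. Differentiating, the sign of $h'(\theta)$ is governed by the numerator $(1 - \cos\theta) - \theta\sin\theta$; applying the half-angle identities $1 - \cos\theta = 2\sin^2(\theta/2)$ and $\sin\theta = 2\sin(\theta/2)\cos(\theta/2)$ reduces the stationarity condition to the transcendental equation
$$\tan(\theta/2) = \theta.$$

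I would then argue that this equation has a unique root $\theta^\ast \in (0, \pi)$ — for instance by noting that $\tan(\theta/2) - \theta$ is negative for small $\theta$, tends to $+\infty$ as $\theta \to \pi^-$, and is monotone on the relevant interval — and that this root is the global minimizer of $h$ on $(0, \pi]$ in view of the endpoint behavior computed above. The final step is to evaluate $h(\theta^\ast)$: numerically $\theta^\ast \approx 2.331$ and $h(\theta^\ast) \approx 0.8786$, which is strictly larger than $0.878$, giving the lemma after translating back through the substitution $x = \cos\theta$.

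The main obstacle I anticipate is this last step: because $\tan(\theta/2) = \theta$ admits no closed-form solution, the bound ultimately rests on a numerical evaluation, and one must carry the computation with enough precision (ideally with rigorous error bounds bracketing $\theta^\ast$) to certify the \emph{strict} inequality $h(\theta^\ast) > 0.878$ rather than merely $h(\theta^\ast) \geq 0.878$. A secondary, more routine subtlety is confirming that the unique interior critical point is a minimum rather than a maximum, which follows immediately once the endpoint limits ($+\infty$ at one end and $1$ at the other) are established.
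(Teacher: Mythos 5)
Your argument is correct, but note that the paper does not prove this lemma at all: it imports it verbatim as Lemma~6.8 of Williamson and Shmoys, so there is no in-paper proof to compare against. What you have written is essentially the standard derivation of the Goemans--Williamson constant from that reference: substitute $x=\cos\theta$, reduce to minimizing $h(\theta)=\frac{2\theta}{\pi(1-\cos\theta)}$ on $(0,\pi]$, locate the interior critical point via $\tan(\theta/2)=\theta$, and evaluate $h(\theta^\ast)\approx 0.8786>0.878$. All the individual steps check out, including the endpoint limits $h(\theta)\to\infty$ as $\theta\to 0^+$ and $h(\pi)=1$, and the half-angle reduction of the stationarity condition. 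Two small points of care. First, $\tan(\theta/2)-\theta$ is \emph{not} monotone on all of $(0,\pi)$: its derivative $\tfrac{1}{2}\sec^2(\theta/2)-1$ is negative for $\theta<\pi/2$ and positive for $\theta>\pi/2$, so the function first decreases from $0$ to a negative minimum and then increases to $+\infty$; uniqueness of the root follows because the function is nonpositive on $(0,\pi/2]$ and strictly increasing thereafter, not because it is globally monotone. Second, as you already flag, the final step is irreducibly numerical since $\theta^\ast$ has no closed form; bracketing $\theta^\ast$ by a sign change of $\tan(\theta/2)-\theta$ on a short interval and bounding $h$ there suffices to certify the strict inequality. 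With those two refinements your proof is complete and matches how the constant is established in the cited source.
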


	Now we can calculate $\Exp[\abs{S}\abs{\bar{S}}]$:
	\begin{displaymath}
		\begin{aligned}
		\Exp[\abs{S}\abs{\bar{S}}]
		&= \sum_{i<j}\frac{1}{2} \Exp[1 - \x_i\x_j] \\
		&= \sum_{i<j} \frac{1}{\pi} \arccos(\v+v_i \cdot \v+v_j) \\
	   	&\geq 0.878 \sum_{i<j}\frac{1}{2} (1 - \v+v_i \cdot \v+v_j)  \\
		&= 0.878 \left(\frac{n^2 - n}{4} - \frac{1}{2}\sum_{i<j}\v+v_i^{\intercal}\v+v_j\right) \\
		&= 0.878 \left(\frac{n^2 - n}{4} - \frac{1}{4}(n^2(1-2\alpha)^2 - n) \right) \\
		&= 0.878\cdot \alpha (1-\alpha) n^2,
		\end{aligned}
	\end{displaymath}
	where the second step is based on
	Corollary~\ref{corollary:maxbisectionrelcut}, the third step is based on
	Lemma~\ref{lemma:maxbisectionrelcutbounc}, and the fourth step is based on
	the fact that Problem~\ref{problem:maxcut-unbalaced-relax} is a semidefinite
	relaxation of Problem~\ref{problem:maxcut-unbalanced} and the fifth step
	uses the constraint from the SDP relaxation.

\end{proof}

\subsubsection{Proof of Lemma~\ref{lemma:cut-value-after-moving}}
\begin{proof}
	We repeatedly apply Lemma~\ref{lemma:boundSUmodify} until our set has size~$tn$.
	More concretely, if $\abs{S_0}>tn$, we use Lemma~\ref{lemma:boundSUmodify} to remove vertices from
	$S_0$ one by one, 
	and if $\abs{S_0}< tn$, we use Lemma~\ref{lemma:boundSUmodify} to remove vertices from $\bar{S}_0$ one by one. 
	We let $S_i$ be the set of vertices after removing
	$i$~vertices. When this process terminates we denote the resulting set by~$T$.
	Let $M_i$ denote the value of the cut given by the partition
	$(S_i,\bar{S_i})$. 
	We will distinguish the two cases $s < t$ and $s > t$. 
	
	First, suppose $s > t$. Observe that by Lemma~\ref{lemma:boundSUmodify},
	\begin{align*}
		M_i \geq M_{i-1} - \frac{2 M_{i-1}}{\abs{S_{i-1}}}
			= \left(1- \frac{2}{\abs{S_{i-1}}}\right) M_{i-1},
	\end{align*} 
	for all $i\geq 1$. Note that for $k=\abs{s-t}n$, it is $T=S_k$, and thus $M_k$
	is the value of the cut $(T,\bar{T})$.  
	By recursively applying the above inequality, we obtain that
	\begin{align}
	\label{eq:Mk}
		M_k \geq M_0 \prod_{i=1}^k \left(1- \frac{2}{\abs{S_{i-1}}}\right).
	\end{align}

	Now let us consider the term $\prod_{i=1}^k \left(1- \frac{2}{\abs{S_{i-1}}}\right)$.
	We are removing vertices from the $S_i$, and thus $\abs{S_i}=\abs{S_{i-2}}-2$. 
	Hence,
	\begin{align*}
		\prod_{i=1}^k \left(1- \frac{2}{\abs{S_{i-1}}}\right)
		    &=\prod_{i=0}^{k-1} \frac{\abs{S_{i}}-2}{\abs{S_{i}}} \\
			&= \frac{\abs{S_{0}}-2}{\abs{S_{0}}} \cdot
					\frac{\abs{S_{1}}-2}{\abs{S_{1}}} \cdot
					\frac{\abs{S_{2}}-2}{\abs{S_{2}}}
					\cdots
					\frac{\abs{S_{k-1}}-2}{\abs{S_{k-1}}} \\
			&= \frac{\abs{S_{k-2}}-2}{\abs{S_{0}}} \cdot
					\frac{\abs{S_{k-1}}-2}{\abs{S_{1}}} \\
			&= \frac{tn(tn-1)}{sn(sn-1)}
			= \frac{t^2n^2 - tn}{s^2n^2 - sn} = \frac{t^2 - t/n}{s^2 - s/n}.
	\end{align*}

	Combining the result for $\prod_{i=1}^k \left(1- \frac{2}{\abs{S_{i-1}}}\right)$
	with Equation~\eqref{eq:Mk} we obtain the claim of the lemma for $s>t$. 

	Second, consider the case $s<t$. 
	We apply Lemma~\ref{lemma:boundSUmodify} to we remove vertices from 
	$\bar{S}_i$. Since $\abs{\bar{S}_i} = n - \abs{S_i}$, we have that 
	\begin{align*}
		M_i \geq M_{i-1} - \frac{2 M_{i-1}}{n - \abs{S_{i-1}}}
			= \left(1- \frac{2}{n - \abs{S_{i-1}}}\right) M_{i-1},
	\end{align*} 
	for all $i\geq 1$. Let $k=\abs{s-t}n$, so  $T=S_k$, and $M_k$
	is the value of the cut $(T,\bar{T})$.  
	By recursively applying this inequality,
	\begin{align}
	\label{eq:Mk-2}
		M_k \geq M_0 \prod_{i=1}^k \left(1- \frac{2}{n - \abs{S_{i-1}}}\right).
	\end{align}

	Removing vertices from $\bar{S}_i$ is equivalent with the procedure of 
	adding vertices to the $S_i$ and thus $\abs{S_{i}} = \abs{S_{i-2}}+2$. 
	Hence,
	\begin{align*}
		\prod_{i=1}^k \left(1\,- \frac{2}{n - \abs{S_{i-1}}}\right)
			&= \prod_{i=0}^{k-1} \frac{n - \abs{S_{i}}-2}{n - \abs{S_{i}}} \\
			&= \frac{n - \abs{S_{0}}-2}{n - \abs{S_{0}}} \cdot
					\frac{n - \abs{S_{1}}-2}{n - \abs{S_{1}}}
					\cdots
					\frac{n - \abs{S_{k-1}}-2}{n - \abs{S_{k-1}}} \\
			&= \frac{n - \abs{S_{k-2}}-2}{n - \abs{S_{0}}} \cdot
					\frac{n-\abs{S_{k-1}}-2}{n -\abs{S_{1}}} \\
			&= \frac{(n - tn-4)(n-tn-3)}{(n-sn)(n-sn+1)} \\
			&= \frac{(1-t)^2n^2 -7(1-t)n + 12}{(1-s)^2 n^2 + (1-s)n} \\
			&= \frac{(1-t)^2 -7(1-t)/n + 12/n^2}{(1-s)^2  + (1-s)/n}.
	\end{align*}

	Combining the result for $\prod_{i=1}^k \left(1- \frac{2}{\abs{S_{i-1}}}\right)$
	with Equation~\eqref{eq:Mk}, 
	and the result for $\prod_{i=1}^k \left(1- \frac{2}{n - \abs{S_{i-1}}}\right)$
	with Equation~\eqref{eq:Mk-2},
	we obtain the claim of the lemma.
\end{proof}

\subsection{Proof of Theorem~\ref{thm:disagreement-np-hard}}
We dedicate the rest of this section to the proof of this theorem.
For technical reasons, it will be
convenient for us to consider opinion vectors $\begop,\begop_0\in[-1,1]^n$. Our
hardness results still hold for opinions vectors $\begop,\begop_0\in[0,1]^n$ by
the results from Section~\ref{sec:scaling} which shows that we only lose a fixed
constant factor and that maximizers of optimization problems are the same under
a simple bijective transformation.

We also note that in the following, we prove that Problem~\eqref{problem:max-disagreement-without-constraint} is \NPhard. 
Notice that Problem~\eqref{problem:max-disagreement-without-constraint} is a variant of 
Problem~\eqref{problem:max-disagreement} by removing cardinality constraint, setting $\m+A = \MasIdx{\DisIdx{}}$, 
and setting $\begop_0 = -\v+1$:
\begin{equation}
\label{problem:max-disagreement-without-constraint}
	\begin{aligned}
		\max_{\begop} \quad &  \begop^{\intercal} \MasIdx{\DisIdx{}} \begop,\\
		\st \quad & \begop(u) \in \{-1, 1\} \text{ for all } u\in V.
	\end{aligned}
\end{equation}  

We thus answer the question by 
Chen and Racz~\cite{chen2020network}, by setting the $k$ in their problem to be $n$.  
We remark the hardness of Problem~\eqref{problem:max-disagreement-without-constraint} 
implies hardness for Problem~\eqref{problem:max-disagreement} as follows: 
If we can solve the Problem~\eqref{problem:max-disagreement}, i.e. the problem
with an equality constraint $\lVert\begop-\begop_0\rVert=k'$, then we can solve
the problem with $\begop_0 = \mathbf{0}$ and for all values $k'=1,\dots,n$, and 
take the maximum over all answers.
This gives us an optimal solution for Problem~\eqref{problem:max-disagreement-without-constraint}. 
Since there are only $n$~choices
for~$k'$ and since we show hardness for Problem~\eqref{problem:max-disagreement-without-constraint},
we obtain hardness for Problem~\eqref{problem:max-disagreement}.

We first prove the hardness of two
auxiliary problems and then give the proof of the theorem. We start by introducing
Problem~\eqref{problem:max-cut-variant}, which is a variant of {\sc Max\-Cut}; 
compared to classic {\sc Max\-Cut}, we scale the objective function by factor~4 and consider the
constraints $\begop\in[-1,1]^n$ rather than $\begop\in\{-1,1\}^n$. We show that
the problem is \NPhard.

\begin{problem}
    \label{problem:max-cut-variant}
	Let $G=(V,E,w)$ be an undirected weighted graph with integer edge
	weights and let $\laplacian$ be the Laplacian of $G$. We want to solve the
	following problem:
\begin{equation*}
    \begin{aligned}
        \max_{\begop} \quad & \begop^{\intercal} \laplacian \begop\\
        \st \quad & \begop \in [-1,1]^n.
    \end{aligned}
\end{equation*}   
\end{problem}

\begin{lemma}[{\cite{GAREY1976237}}]
\label{problem:two-hard}
	Problem~\eqref{problem:max-cut-variant} is \NPhard, even in unweighted graphs.
\end{lemma}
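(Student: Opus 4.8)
The plan is to reduce from the classical {\sc Max\-Cut} problem, whose \NPhardness even on unweighted graphs is established in~\cite{GAREY1976237}. The crucial observation is that the objective $\begop^{\intercal}\laplacian\begop$ is a \emph{convex} quadratic form, because the Laplacian of a graph with positive edge weights is positive semidefinite. Consequently, I would first invoke Lemma~\ref{lemma:convex} with $\m+A=\laplacian$ and $k=n$ (so that the cardinality constraint is vacuous and the choice of $\begop_0$ is irrelevant): this guarantees that Problem~\eqref{problem:max-cut-variant} admits an optimal solution $\begop^{\star}\in\{-1,1\}^n$. Hence, without changing the optimal value, I may restrict attention to vectors $\begop\in\{-1,1\}^n$.

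Second, I would rewrite the objective on such binary vectors as a cut weight. Using the standard identity $\begop^{\intercal}\laplacian\begop=\sum_{(u,v)\in E}w_{u,v}(\begop(u)-\begop(v))^2$, observe that for $\begop\in\{-1,1\}^n$ each summand equals $0$ when $\begop(u)=\begop(v)$ and equals $4$ when $\begop(u)\neq\begop(v)$. Thus, writing $S=\{u:\begop(u)=1\}$, we obtain
\begin{equation*}
	\begop^{\intercal}\laplacian\begop
	= 4 \sum_{(u,v)\in E \,:\, \begop(u)\neq\begop(v)} w_{u,v},
\end{equation*}
which is exactly four times the weight of the cut induced by the partition $(S,\bar{S})$.

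Combining the two steps, the optimal value of Problem~\eqref{problem:max-cut-variant} equals four times the maximum cut weight of $G$, and any optimal $\pm 1$ vector yields a maximum cut (and conversely, any maximum cut yields an optimal $\pm 1$ vector). Therefore a polynomial-time algorithm for Problem~\eqref{problem:max-cut-variant} would solve {\sc Max\-Cut} in polynomial time, and since the latter is \NPhard even on unweighted graphs by~\cite{GAREY1976237}, the lemma follows. The only genuinely nontrivial point in this argument is the passage from the continuous domain $[-1,1]^n$ to the vertices $\{-1,1\}^n$; this is precisely what the convexity of the objective (via Lemma~\ref{lemma:convex}) provides, and without it the equivalence with {\sc Max\-Cut} would not be immediate.
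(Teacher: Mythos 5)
Your proposal is correct and follows essentially the same route as the paper: a reduction from unweighted {\sc MaxCut} via the identity that, on $\{-1,1\}^n$ vectors, $\begop^{\intercal}\laplacian\begop$ equals four times the induced cut weight. The only difference is cosmetic --- you make the convexity argument (Lemma~\ref{lemma:convex}) for passing from $[-1,1]^n$ to $\{-1,1\}^n$ explicit, a step the paper leaves implicit in this particular proof but justifies in Section~\ref{sec:convexity}.
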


Next, in Problem~\eqref{problem:cut-middle-problem} we consider a version
of Problem~\eqref{problem:max-disagreement-without-constraint}.
\begin{problem}
    \label{problem:cut-middle-problem}
    Let $Q$ be an integer. 
	Let $G=(V,E,w)$ be an undirected weighted graph with integer edge
	weights and let $\laplacian$ be the Laplacian of $G$.
    We want to solve the following problem:
\begin{equation*}
    \begin{aligned}
        \max_{\begop} \quad & \finop^{\intercal} \laplacian \finop\\
        \st \quad & \begop \in [-1,1]^n, \text{ and}\\
        & \finop = (\ID + \frac{1}{Q}\laplacian)^{-1} \begop.
    \end{aligned}
\end{equation*}   
\end{problem}
Note that by substituting the constraint
$\finop=(\ID + \frac{1}{Q}\laplacian)^{-1} \begop$ in the objective function, we
obtain our original objective function from
Problem~\eqref{problem:max-disagreement} and Problem~\eqref{problem:max-disagreement-without-constraint} 
for $Q=1$.

Next, we show that Problem~\eqref{problem:cut-middle-problem} is \NPhard.
Here, our proof strategy is as follows.
Consider an instance of Problem~\eqref{problem:max-cut-variant}. Then,
intuitively, for large $Q$ it should hold that 
$\finop = (\ID + \frac{1}{Q}\laplacian)^{-1} \begop
	\approx (\ID + 0)^{-1} \begop = \begop$
and in this case the optimal solutions of
Problems~\eqref{problem:cut-middle-problem} and~\eqref{problem:max-cut-variant} 
should be almost identical.
Indeed, we will be able to show that both problems have the same maximizer and 
that their objective function values are almost identical (up to rounding).
This implies the hardness of Problem~\eqref{problem:cut-middle-problem} via
Lemma~\ref{problem:two-hard}. 
We summarize our result in the following lemma, where we use the following
notation: 
$\OPTTWO(V, E, w)$ and
$\OPTTHREE(V, E, w)$ denote the optimal objective values of
Problems~\eqref{problem:max-cut-variant} and~\eqref{problem:cut-middle-problem}, respectively,
and $[\cdot]$ denotes rounding to the nearest integer.

\begin{lemma}
    \label{lemma:cut-middle-cut-variat}
	There exists an integer $Q$ such that $\OPTTWO = [\OPTTHREE]$.
	Moreover, the optimal solution~$\begop^*$ for Problem~\eqref{problem:cut-middle-problem}
	is also the optimal solution for Problem~\eqref{problem:max-cut-variant}. 
    This implies that Problem~\eqref{problem:cut-middle-problem} is \NPhard. 
\end{lemma}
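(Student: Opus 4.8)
The plan is to realize Problem~\eqref{problem:cut-middle-problem} as a tiny perturbation of Problem~\eqref{problem:max-cut-variant} for a large but polynomially bounded integer~$Q$, so tiny that the two problems share an optimal solution and their optimal values coincide after rounding. Write $f(\begop) = \begop^{\intercal}\laplacian\begop$ for the objective of Problem~\eqref{problem:max-cut-variant} and let $\m+B_Q = (\ID + \frac{1}{Q}\laplacian)^{-1}$, which is symmetric since $\ID + \frac{1}{Q}\laplacian$ is (and invertible, as $\laplacian\succeq\m+0$ forces its eigenvalues to be at least~$1$). Substituting $\finop = \m+B_Q\begop$, the objective of Problem~\eqref{problem:cut-middle-problem} is $g_Q(\begop) = \finop^{\intercal}\laplacian\finop = \begop^{\intercal}\m+B_Q\laplacian\m+B_Q\begop$. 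Because $\laplacian$ is positive semidefinite and $\m+B_Q$ is symmetric, $\m+B_Q\laplacian\m+B_Q$ is positive semidefinite, so both $f$ and $g_Q$ are convex. By Lemma~\ref{lemma:convex}, each problem attains its optimum at a vertex $\begop\in\{-1,1\}^n$ of the hypercube, so it suffices to compare $f$ and $g_Q$ on the finite set $\{-1,1\}^n$.

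First I would record a discreteness property of~$f$: for $\begop\in\{-1,1\}^n$ we have $f(\begop) = \sum_{(u,v)\in E} w_{u,v}(\begop_u - \begop_v)^2$, which equals four times the weight of the cut induced by $\begop$ and is therefore a nonnegative integer, the weights being integers. Consequently any two distinct values taken by $f$ over the hypercube vertices differ by at least~$1$.

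Next I would quantify the perturbation $g_Q - f$. Diagonalizing $\laplacian = \sum_i \lambda_i\, \v+u_i\v+u_i^{\intercal}$ gives $\m+B_Q\laplacian\m+B_Q - \laplacian = \sum_i \lambda_i\bigl(\frac{1}{(1+\lambda_i/Q)^2}-1\bigr)\v+u_i\v+u_i^{\intercal}$, and since $0 \leq 1 - \frac{1}{(1+\lambda_i/Q)^2} \leq 2\lambda_i/Q + \lambda_i^2/Q^2$, the spectral norm of this matrix is $O(\lambda_{\max}^2/Q)$, where $\lambda_{\max}$ is polynomial in the input. As $\lVert\begop\rVert_2^2 = n$ for $\begop\in\{-1,1\}^n$, this yields $\lvert g_Q(\begop) - f(\begop)\rvert \leq n\,\lVert\m+B_Q\laplacian\m+B_Q - \laplacian\rVert_2 = O(n\lambda_{\max}^2/Q)$ uniformly over all vertices. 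Choosing $Q$ to be a suitable polynomial in $n$ and $\lambda_{\max}$ therefore guarantees $\lvert g_Q(\begop) - f(\begop)\rvert < \frac{1}{2}$ for every $\begop\in\{-1,1\}^n$; such a $Q$ has polynomial bit length and is computable in polynomial time.

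Finally I would combine the two ingredients. Let $\begop^*$ be a vertex maximizing $g_Q$ and $\begop_{\mathrm{opt}}$ a vertex maximizing $f$, so $f(\begop_{\mathrm{opt}}) = \OPTTWO$. If $\begop^*$ were not optimal for $f$, then $f(\begop^*)\leq\OPTTWO - 1$ by the integrality gap, and hence $g_Q(\begop^*) < f(\begop^*) + \frac{1}{2} \leq \OPTTWO - \frac{1}{2} = f(\begop_{\mathrm{opt}}) - \frac{1}{2} < g_Q(\begop_{\mathrm{opt}})$, contradicting the optimality of $\begop^*$ for $g_Q$. Thus $\begop^*$ is also optimal for Problem~\eqref{problem:max-cut-variant}, which is the second claim; and from $\OPTTHREE = g_Q(\begop^*)$, $f(\begop^*) = \OPTTWO$, and $\lvert g_Q(\begop^*) - f(\begop^*)\rvert < \frac{1}{2}$ we get $[\OPTTHREE] = \OPTTWO$, the first claim. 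Since $Q$ is constructed in polynomial time and the optimizers of the two problems coincide, the \NPhardness of Problem~\eqref{problem:max-cut-variant} from Lemma~\ref{problem:two-hard} transfers to Problem~\eqref{problem:cut-middle-problem}. The hard part is precisely this last step, passing from closeness of the two objectives to an \emph{exact} coincidence of maximizers: this is where the integrality of $f$ on the hypercube vertices is essential, since it lets a perturbation smaller than half the minimum gap preserve the argmax exactly rather than only approximately.
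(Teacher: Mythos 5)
Your proposal is correct and shares the paper's overall architecture (restrict to hypercube vertices by convexity, show the two objectives differ by less than $\tfrac12$ on every vertex, then invoke integrality of the {\sc MaxCut}-type objective to transfer both the argmax and the rounded optimal value), but the technical core is carried out by a genuinely different argument. The paper's proof hinges on its Claim on the perturbation bound, which it establishes by an element-wise computation driven by the FJ update rule: it writes each $\finop_i = \delta_i \begop_i + (1-\delta_i)\Delta_i$ with $\delta_i = Q/(Q+\sum_j w_{ij})$, expands $\finop^{\intercal}\laplacian\finop$ into four cross terms, and bounds each separately to obtain $\begop^{\intercal}\laplacian\begop - \tfrac{8}{M} - \tfrac{2}{M^3} \leq \finop^{\intercal}\laplacian\finop \leq \begop^{\intercal}\laplacian\begop + \tfrac{4}{M} + \tfrac{5}{2M^3}$ for $Q \geq (M^2-1)\max_i D_{ii}$ and $M \geq 17$, where $M = \sum_{i,j} w_{ij}$. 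You replace this entire computation by a spectral argument: since $(\ID+\tfrac{1}{Q}\laplacian)^{-1}$ commutes with $\laplacian$, the difference of the two quadratic forms is controlled by the spectral norm of $(\ID+\tfrac{1}{Q}\laplacian)^{-1}\laplacian(\ID+\tfrac{1}{Q}\laplacian)^{-1} - \laplacian$, which is $O(\lambda_{\max}^2/Q)$, and $\lVert\begop\rVert_2^2 = n$ on the hypercube gives a uniform $O(n\lambda_{\max}^2/Q)$ bound. This is shorter, avoids the case analysis over edges with $\begop_i = \begop_j$ versus $\begop_i \neq \begop_j$, and makes the required magnitude of $Q$ (polynomial in $n$ and $\lambda_{\max} \leq 2M$, hence of polynomial bit length) completely transparent; what it gives up is only the explicit constants of the paper's bound, which are not needed anywhere downstream. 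Your final argmax-transfer step is logically identical to the paper's Step~(4), and your observation that the integrality of the objective on $\{-1,1\}^n$ is what upgrades ``approximately equal objectives'' to ``exactly equal maximizers'' is precisely the crux the paper also relies on.
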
 

\subsubsection{Proof of Lemma~\ref{problem:two-hard}}
    Since $\frac{1}{4} \begop^{\intercal} \laplacian \begop = \frac{1}{2} \sum_{i < j}w_{ij}(1 - \ebegop{i}\ebegop{j})$ for $\begop \in \{-1, 1\}^n$, we 
    can formulate the MaxCut problem~\cite{goemans1995improved} as
	\begin{equation*}
		\begin{aligned}
			\max_{\begop} \quad & \frac{1}{4}\begop^\intercal \laplacian \begop\\
			\st \quad & \begop \in \{-1,1\}^n
		\end{aligned}
	\end{equation*}   
	Thus, $\OPT_{MaxCut}(V, E, w) = \frac{1}{4} \OPTTWO(V, E, w)$. 
	Given that MaxCut is \NPhard in unweighted graphs, we obtain that
	Problem~\ref{problem:max-cut-variant} is \NPhard in unweighted graphs.

\subsubsection{Proof of Lemma~\ref{lemma:cut-middle-cut-variat}}
    The proof has four steps. 
	(1)~We argue that the optimal solution $\begop$ for
	Problem~\ref{problem:cut-middle-problem} must be from the set~$\{-1,1\}^n$.
	(2)~We derive a useful characterization of the entries $\finop_i$ of
	$\finop$.
	(3)~We exploit the previous characterization by showing that there exists a
	large enough $Q$ such that $[\finop^{\intercal} \laplacian \finop] = \begop \laplacian \begop$. 
	(4)~We prove that $\begop^*$ is also an optimal solution for
	Problem~\ref{problem:max-cut-variant}.

	Step~(1):~We observe that Problem~\ref{problem:cut-middle-problem} is a
	convex maximization problem over the hypercube, where the objective function
	is a quadratic form with a positive semidefinite matrix.
	Thus the optimal solution must be from the set $\{-1,1\}^n$ (see
	Section~\ref{sec:convexity}).
	Thus, for the remainder of the proof we consider the optimization with
	$\begop \in \{-1, 1\}^n$. 

    Step~(2): We derive a useful characterization of $\efinop{i}$, which we
	obtain through the updating rule of the FJ model. 
    Given an undirected graph $G' = (V, E, \frac{1}{Q}w)$, and innate opinion $\begop$,  
	if we apply the FJ model on this graph $G'$, 
    the updates of the expressed opinions follow the rule
    \begin{equation*}
        \begin{aligned}
            z_i^{(t+1)} &= \frac{s_i + \frac{1}{Q}\sum_j w_{ij}z_j^{(t)}}{1 + \frac{1}{Q}\sum_j w_{ij}} \\
            &= \frac{Q}{Q + \sum_j w_{ij}} s_i + \frac{\sum_j w_{ij}}{Q + \sum_j w_{ij}}\frac{\sum_j w_{ij}z_j^{(t)}}{\sum_j w_{ij}}.
        \end{aligned}
    \end{equation*}
	
	Next, since
	$\finop = \lim_{t \rightarrow \infty} \finop^{(t)}
		= (\ID + \frac{1}{Q}\laplacian)^{-1} \begop$
	is the vector of expressed opinions in the equilibrium, we obtain that for
	all $i$,
	\begin{align*}
		z_i = \delta_i s_i + (1 - \delta_i) \Delta_i,
	\end{align*}
	where we set $\delta_i = \frac{Q}{Q + \sum_j w_{ij}}$, and $\Delta_i =
	\frac{\sum_{j}w_{ij}z_j}{\sum_j w_{ij}}$. 

	Step~(3): We present a technical claim which bounds
	$\finop^{\intercal} \laplacian \finop$ by
	$\begop^{\intercal} \laplacian \begop$ and some small additive terms. We
	prove the claim at the end of the section.
	\begin{claim}
	\label{claim:technical}
		Suppose $\begop\in\{-1,1\}^n$ and let
		$\finop=(\ID+\frac{1}{Q}\laplacian)^{-1} \begop$. Let $M = \sum_{i,j=1} w_{ij}$.
		Then
		$$\begop^{\intercal} \laplacian \begop - \frac{8}{M} - \frac{2}{M^3}
			\leq \finop^{\intercal} \laplacian \finop
			\leq \begop^{\intercal} \laplacian \begop + \frac{4}{M} + \frac{5}{2M^3}.$$
	\end{claim}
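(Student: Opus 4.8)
The plan is to treat $\finop$ as a small perturbation of $\begop$ and to control the resulting change in the quadratic form $\begop^{\intercal}\laplacian\begop$. The cleanest starting point is to invert the defining relation rather than work with it directly: since $\finop=(\ID+\frac1Q\laplacian)^{-1}\begop$, we have $\begop=(\ID+\frac1Q\laplacian)\finop=\finop+\frac1Q\laplacian\finop$. Substituting this into $\begop^{\intercal}\laplacian\begop$ and expanding yields the exact identity
\begin{equation*}
\begop^{\intercal}\laplacian\begop=\finop^{\intercal}\laplacian\finop+\frac{2}{Q}\finop^{\intercal}\laplacian^2\finop+\frac{1}{Q^2}\finop^{\intercal}\laplacian^3\finop.
\end{equation*}
Because $\laplacian$ is the Laplacian of a graph with nonnegative integer weights it is positive semidefinite, and hence so are $\laplacian^2$ and $\laplacian^3$; thus both correction terms are nonnegative. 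This immediately gives $\finop^{\intercal}\laplacian\finop\le\begop^{\intercal}\laplacian\begop$, so the claimed upper bound (with its extra additive slack) holds for free, and it reduces the lower bound to showing that $\frac{2}{Q}\finop^{\intercal}\laplacian^2\finop+\frac{1}{Q^2}\finop^{\intercal}\laplacian^3\finop\le\frac{8}{M}+\frac{2}{M^3}$.

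Next I would bound the two correction terms. The key inputs are: (i) $\finop$ is the FJ equilibrium for innate opinions in $\{-1,1\}$, so it lies in the convex hull of those opinions and therefore $\|\finop\|_\infty\le1$; and (ii) each coordinate of $\laplacian\finop$ satisfies $|(\laplacian\finop)_i|=|\sum_j w_{ij}(\finop_i-\finop_j)|\le 2d_i$, where $d_i=\sum_j w_{ij}$ and $M=\sum_{i,j}w_{ij}=\sum_i d_i$. From (ii) we get $\finop^{\intercal}\laplacian^2\finop=\|\laplacian\finop\|_2^2\le 4\sum_i d_i^2\le 4M^2$, and writing $\finop^{\intercal}\laplacian^3\finop=(\laplacian\finop)^{\intercal}\laplacian(\laplacian\finop)=\sum_{(i,j)\in E}w_{ij}\bigl((\laplacian\finop)_i-(\laplacian\finop)_j\bigr)^2$ and reusing $|(\laplacian\finop)_i|\le 2d_i$ gives a bound of order $M^3$. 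Choosing the integer $Q$ to be a fixed polynomial in $M$ (concretely $Q\asymp M^3$) then makes the first term $O(1/M)$ and the second $O(1/M^3)$, as required. An equivalent route, which directly reuses Step~(2), starts from the entrywise perturbation bound $|\finop_i-\begop_i|=(1-\delta_i)|\Delta_i-\begop_i|\le 2d_i/Q$, expands $\finop^{\intercal}\laplacian\finop=\begop^{\intercal}\laplacian\begop+2\begop^{\intercal}\laplacian(\finop-\begop)+(\finop-\begop)^{\intercal}\laplacian(\finop-\begop)$, and bounds the cross term while discarding the nonnegative quadratic term for the lower bound.

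The main obstacle is purely quantitative: pinning down the exact constants $8,2$ in the lower bound and $4,\tfrac52$ in the upper bound requires bounding $\sum_i d_i^2$, $\sum_i d_i^3$ and the discrete gradients $|(\laplacian\finop)_i-(\laplacian\finop)_j|$ tightly rather than through the crude estimates above, and then fixing the precise integer $Q$ (chosen once and for all as a function of $M$) so that every error term lands below the stated thresholds. The reason the constants matter is the downstream use in Lemma~\ref{lemma:cut-middle-cut-variat}: since $\begop^{\intercal}\laplacian\begop$ is an integer (four times a cut value of an integer-weighted graph) whenever $\begop\in\{-1,1\}^n$, it suffices that the total additive error $\frac{8}{M}+\frac{2}{M^3}$ is strictly below $\frac12$, so that rounding $\finop^{\intercal}\laplacian\finop$ to the nearest integer recovers $\begop^{\intercal}\laplacian\begop$ exactly; hence one only needs the additive bounds to fall below $\frac12$, which holds for all sufficiently large $M$.
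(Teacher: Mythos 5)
Your proof is correct in its essentials, but it takes a genuinely different route from the paper's. The paper works coordinate-wise: it writes each equilibrium opinion as $\efinop{i}=\delta_i\ebegop{i}+(1-\delta_i)\Delta_i$ with $\delta_i=\frac{Q}{Q+\sum_j w_{ij}}$, chooses $Q\geq(M^2-1)\max_i D_{ii}$ so that $\delta_i\geq\frac{M^2-1}{M^2}$, and then bounds the main term $\sum_{i,j}\delta_i\delta_j\ebegop{i}\ebegop{j}\laplacian_{ij}$ and the cross/remainder terms separately from above and below; this is where the specific constants $8,2,4,\tfrac52$ come from. Your primary route instead inverts the defining relation to get the exact identity $\begop^{\intercal}\laplacian\begop=\finop^{\intercal}\laplacian\finop+\frac{2}{Q}\finop^{\intercal}\laplacian^{2}\finop+\frac{1}{Q^{2}}\finop^{\intercal}\laplacian^{3}\finop$, observes that both correction terms are nonnegative because powers of a PSD matrix are PSD, and thereby gets the upper bound with \emph{zero} additive slack -- strictly stronger than what the claim asserts, and it makes Step~(4) of the surrounding lemma (the inequality $1-\frac{8}{M}-\frac{2}{M^3}\geq\frac{4}{M}+\frac{5}{2M^3}$) trivial. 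The lower bound then reduces to bounding two explicit quadratic forms via $\abs{(\laplacian\finop)_i}\leq 2d_i$ and $\sum_i d_i^2\leq M\max_i d_i\leq M^2$, which gives $\finop^{\intercal}\laplacian^{2}\finop\leq 4M^2$ and $\finop^{\intercal}\laplacian^{3}\finop\leq 8M^3$; taking, say, $Q=2M^3$ yields error at most $\frac{4}{M}+\frac{2}{M^3}$, within the claimed $\frac{8}{M}+\frac{2}{M^3}$. Two caveats, both of which you essentially flag: the claim (in both your version and the paper's) is only true for $Q$ chosen sufficiently large as a function of $M$ -- with the paper's minimal choice $Q=(M^2-1)\max_i D_{ii}$ your crude bound gives $\frac{8M}{M^2-1}$, marginally above $\frac{8}{M}$, so you must either tighten the estimate or take $Q$ a constant factor larger, which is permitted since the enclosing lemma only asserts existence of such a $Q$; and you correctly identify that the precise constants are immaterial downstream, since the rounding argument only needs the total additive error below $\tfrac12$. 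On balance your decomposition is cleaner and yields a sharper upper bound; the paper's is more self-contained in that it fixes one explicit $Q$ and verifies every constant by hand.
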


    The claim implies that for any graph with sum of weights larger than 8.5,
	i.e., $M \geq 17$, it holds that
	$\begop^{\intercal} \laplacian \begop - 0.5
		< \begop^{\intercal} \laplacian \begop - \frac{8}{M} - \frac{2}{M^3}$ 
    and that
	$\finop^{\intercal} \laplacian \finop
		\leq \begop^{\intercal} \laplacian \begop + \frac{4}{M} + \frac{5}{2M^3}
		< \begop^{\intercal} \laplacian \begop + 0.5$. 
    Thus $[\finop^{\intercal} \laplacian \finop] = \begop^{\intercal} \laplacian \begop$. 
	This proves that the optimal solutions for Problem~\ref{problem:cut-middle-problem}
	and~\ref{problem:max-cut-variant} are identical up to rounding. This also
	implies that Problem~\ref{problem:cut-middle-problem} is \NPhard.
    
	Step~(4): We prove that the optimal solution $\begop^*$ for
	Problem~\ref{problem:cut-middle-problem} is also the optimal 
	solution for Problem~\ref{problem:max-cut-variant}. We prove this by
	contradiction. Assume $\begop^o$ is the optimal solution for
	Problem~\ref{problem:max-cut-variant}, and $\begop^{o\intercal} \laplacian
	\begop^{o} > \begop^{*\intercal} \laplacian \begop^{*}$. 
	Since the feasible areas for Problem~\ref{problem:max-cut-variant} and
	Problem~\ref{problem:cut-middle-problem} are the same, $\begop^{o}$ is also
	a feasible solution for the latter. 
	Let $\finop^o = (\ID + \frac{1}{Q} \laplacian)^{-1} \begop^o$ and observe that since
	$\begop^{o} \in \{-1, 1\}^n$, it holds that  
    $\begop^{o\intercal} \laplacian \begop^o - \frac{8}{M} - \frac{2}{M^2}
		\leq \finop^{o\intercal} \laplacian \finop^o$
	by Claim~\ref{claim:technical}.
    Thus, if $M\geq17$,
   	\begin{align*}
		\finop^{o \intercal} \laplacian \finop^{o}
		&\geq \begop^{o\intercal} \laplacian \begop^o - \frac{8}{M} - \frac{2}{M^3} \\
		&> \begop^{*\intercal} \laplacian \begop^* - \frac{8}{M} - \frac{2}{M^3} \\
		&\geq \begop^{*\intercal} \laplacian \begop^* + 1 - \frac{8}{M} - \frac{2}{M^3} \\
		&\geq \finop^{*\intercal} \laplacian \finop^{*},
	\end{align*}
   where we used that
   $\begop^{o\intercal} \laplacian \begop^o > \begop^{*\intercal} \laplacian \begop^{*}$
   by assumption, that the entries in $\laplacian$ are integers and
   $\begop^{o\intercal}\in\{-1,1\}^n$. In the last step, we applied
   Claim~\ref{claim:technical} and the fact that
   $1 - \frac{8}{M} - \frac{2}{M^3} \geq \frac{4}{M} + \frac{5}{2M^3}$ for $M \geq 13$.
   Note that we can assume that $M\geq 13$ since
   Problem~\ref{problem:max-cut-variant} is \NPhard even for unweighted graphs
   and then $M=\Omega(\abs{E})\gg 13$.
   Since the above inequality contradicts the fact that $\begop^*$ is
   optimal for Problem~\ref{problem:cut-middle-problem}, this finishes the
   proof.

	\begin{proof}[Proof of Claim~\ref{claim:technical}]
	First, let us start by rewriting $\finop^{\intercal} \laplacian \finop$,
	where we use that $\efinop{i} = \delta_i \ebegop{i} + (1 - \delta_i) \Delta_i$.
	Then we have
    \begin{equation}
	\begin{aligned}
	\label{equation:cut-middle-cut-variat-1}
		&\finop^{\intercal} \laplacian \finop
		= \sum_{i,j=1}^n \efinop{i} \efinop{j} \laplacian_{ij} \\
		&= \sum_{i,j=1}^n (\delta_i \ebegop{i} + (1 - \delta_i) \Delta_i) (\delta_j \ebegop{j} + (1 - \delta_j) \Delta_j) \laplacian_{ij} \\
		&= \sum_{i,j=1}^n \big[\delta_i \delta_j \ebegop{i} \ebegop{j} + (1 - \delta_i)\Delta_i \delta_j \ebegop{j} + \delta_i \ebegop{i} (1 - \delta_j) \Delta_j \\
		&\quad\quad\quad\quad + (1-\delta_i)(1 - \delta_j)\Delta_i\Delta_j \big] \laplacian_{ij}.
    \end{aligned}
	\end{equation}

	In the following, we split
	Equation~\eqref{equation:cut-middle-cut-variat-1} into
	$\sum_{i,j=1}^n \delta_i \delta_j \ebegop{i} \ebegop{j} \laplacian_{ij}$
	and $\sum_{i,j=1}^n ((1 - \delta_i)\Delta_i \delta_j \ebegop{j} + \delta_i
			\ebegop{i} (1 - \delta_j) \Delta_j + (1-\delta_i)(1 -
				\delta_j)\Delta_i\Delta_j) \laplacian_{ij}$. 
	For both terms, we will derive upper and lower bounds.
	However, before we do that we first prove bounds for $\delta_i$, $\Delta_i$
	and for $\begop^{\intercal} \laplacian \begop$.
      
	Now we derive bounds for $\delta_i$:
	Let $D_{ii} = \sum_{j=1}w_{ij}$. If we pick $Q \geq (M^2 - 1) \max_i D_{ii}$
	and using the definition $\delta_i = \frac{Q}{Q+\sum_j w_{ij}}$, it follows that 
	for any $i$, $\delta_i \leq 1$ and
    \begin{equation}
        \label{equation:cut-middle-cut-variat-1-2}
        \begin{aligned}
            \delta_i 
			\geq \frac{(M^2 - 1) \max_j D_{jj}}{(M^2 - 1) \max_j D_{jj} + D_{ii}}
			\geq \frac{(M^2 - 1) D_{ii}}{(M^2 - 1)D_{ii} + D_{ii}} = \frac{M^2 - 1}{M^2}.
        \end{aligned}
    \end{equation}
	
    Next, we argue that $ -1 \leq \Delta_i \leq 1$:
	By definition of $\Delta_i$ the claim follows if we show that
	$-1\leq z_j^{(t)}\leq 1$ for all $t$ and $j$.
	For $t=0$ we have that $\efinop{j}^{(0)} = \ebegop{j}$ and the claim
	is true.
	Now observe that the update rule for $\efinop{i}^{(t+1)}$ implies that at
	each iteration it holds that,
	$\min \{s_i, z_j^{(t)} \mid j \neq i\}
		\leq \efinop{i}^{(t+1)}
		\leq \max \{s_i, z_j^{(t)} \mid j \neq i\}$.
	Now by induction we obtain that each $\efinop{i}$ is always upper bounded by
	$\max\{s_i \mid i = 1, 2, \ldots, n\}$ 
    and lower bounded by $\min \{s_i \mid i = 1, 2, \ldots, n\}$.
	Since we have that $-1\leq\begop_i\leq1$ for all $i$, we obtain our result
	for $\Delta_i$.
    
	Next, we obtain the following bounds on
	$\begop^{\intercal} \laplacian \begop$ for $\begop \in \{-1, 1\}^n$:
	\begin{equation}
        \label{equation:cut-middle-cut-variat-1-3}
        \begin{aligned}
            0 \leq \begop^{\intercal} \laplacian \begop
			\leq \sum_{i,j} \abs{\laplacian_{i, j}}
			= 2M,
        \end{aligned}
    \end{equation}
	where we used that $\laplacian$ is positive semidefinite and that, since
	$G$ is an undirected graph, $M$ is twice the sum of the edge weights.
   
	Given the fact that $\ebegop{i}$ is equal to either $1$ or $-1$, and
	Equation~\eqref{equation:cut-middle-cut-variat-1-2}~\eqref{equation:cut-middle-cut-variat-1-3},
	it follows that 
    \begin{equation}
        \label{equation:cut-middle-cut-variat-2}
        \begin{aligned}
            \sum_{i,j=1}^n \delta_i \delta_j \ebegop{i} \ebegop{j} \laplacian_{ij}
			&= \frac{1}{2} \sum_{i,j=1}^n w_{ij} (\delta_i \ebegop{i} - \delta_j \ebegop{j})^2  \\
            &\geq \frac{1}{2} \sum_{i,j=1, \ebegop{i} \neq \ebegop{j}}^n w_{ij} (\delta_i + \delta_j)^2  \\
            &\geq \frac{1}{2} \sum_{i,j=1, \ebegop{i} \neq \ebegop{j}}^n w_{ij} \left(\frac{M^2 - 1}{M^2}\right)^2 2^2  \\
            &= \frac{1}{2} \sum_{i,j=1}^n w_{ij} \left(\frac{M^2 - 1}{M^2}\right)^2(\ebegop{i} - \ebegop{j})^2  \\
            &= \left(\frac{M^2 - 1}{M^2}\right)^2 \begop^{\intercal} \laplacian \begop \\
			&\geq \begop^{\intercal} \laplacian \begop - \frac{4}{M},
        \end{aligned}
    \end{equation}
	where the last inequality comes from
	Equation~\eqref{equation:cut-middle-cut-variat-1-3} which implies that 
    $\left(\frac{M^2 - 1}{M^2}\right)^2 \begop^{\intercal} \laplacian \begop 
		\geq \begop^{\intercal} \laplacian \begop - \frac{2}{M^2}\begop^{\intercal} \laplacian \begop
		\geq \begop^{\intercal} \laplacian \begop - \frac{4}{M}$.  

	Similarly, in Equation~\eqref{equation:cut-middle-cut-variat-3} we prove an
	upper bound on
	$\sum_{i,j=1}^n \delta_i \delta_j \ebegop{i} \ebegop{j} \laplacian_{ij}$,
    \begin{equation}
        \label{equation:cut-middle-cut-variat-3}
        \begin{aligned}
            &\sum_{i,j=1}^n \delta_i \delta_j \ebegop{i} \ebegop{j} \laplacian_{ij}
			= \frac{1}{2} \sum_{i,j=1}^n w_{ij} (\delta_i \ebegop{i} - \delta_j \ebegop{j})^2  \\
			&= \frac{1}{2}\sum_{i,j=1, \ebegop{i} \neq \ebegop{j}}^n w_{ij}(\delta_i + \delta_j)^2 + \frac{1}{2} \sum_{i,j=1, \ebegop{i} = \ebegop{j}}^n w_{ij}(\delta_i - \delta_j)^2 \\
            &\leq  \frac{1}{2}\sum_{i,j=1, \ebegop{i} \neq \ebegop{j}}^n w_{ij} 2^2
				+ \frac{1}{2}\sum_{i,j=1}^n w_{ij} \left(\frac{1}{M^2}\right)^2 \\
            &\leq \begop^{\intercal} \laplacian \begop + \frac{1}{2}\sum_{i,j=1}^n w_{ij} \frac{1}{M^4} \\
            &= \begop^{\intercal} \laplacian \begop + \frac{1}{2M^3},
        \end{aligned}
    \end{equation}
    where in the second line we used that
	$\delta_i + \delta_j \leq 2$ and
	$\abs{\delta_i - \delta_j} \leq 1 - \frac{M^2 - 1}{M^2} \leq \frac{1}{M^2}$,
	and in the final equality we used that $M$ is twice the sum over all edge weights.

	We continue to bound the remaining part of $\finop^\intercal \laplacian
	\finop$, first we get an upper bound in
	Equation~\eqref{equation:cut-middle-cut-variat-4}, 
     \begin{equation}
        \label{equation:cut-middle-cut-variat-4}
        \begin{aligned}
            &\sum_{i,j=1}^n ((1 - \delta_i)\Delta_i \delta_j \ebegop{j} + \delta_i \ebegop{i} (1 - \delta_j) \Delta_j + (1-\delta_i)(1 - \delta_j)\Delta_i\Delta_j) \laplacian_{ij} \\
            &\leq \sum_{i,j=1}^n ((1-\delta_i) + (1 - \delta_j) + (1 - \delta_i)(1 - \delta_j))\abs{\laplacian_{ij}} \\
            &\leq \sum_{i,j=1}^n \left(\frac{1}{M^2} +\frac{1}{M^2} + \frac{1}{M^4}\right)\abs{\laplacian_{ij}} = \frac{4}{M} + \frac{2}{M^3},
        \end{aligned}
    \end{equation}
	where the first inequality holds since $\abs{\Delta_i} \leq 1$,
	$\abs{\delta_i} \leq 1$ and $\abs{s_i} \leq 1$. 
    The second inequality holds since $\delta_i \geq \frac{M^2 - 1}{M^2}$, and 
    $\sum_{i,j}\abs{\laplacian_{ij}} = 2M$.
	
	Similarly, we give a lower bound in
	Equation~\eqref{equation:cut-middle-cut-variat-5}:
     \begin{equation}
        \label{equation:cut-middle-cut-variat-5}
        \begin{aligned}
            &\sum_{i,j=1}^n ((1 - \delta_i)\Delta_i \delta_j \ebegop{j} + \delta_i \ebegop{i} (1 - \delta_j) \Delta_j + (1-\delta_i)(1 - \delta_j)\Delta_i\Delta_j) \laplacian_{ij} \\
            &\geq -\sum_{i,j=1}^n ((1-\delta_i) + (1 - \delta_j) + (1 - \delta_i)(1 - \delta_j))\abs{\laplacian_{ij}} \\
            &\geq -\sum_{i,j=1}^n \left(\frac{1}{M^2} +\frac{1}{M^2} + \frac{1}{M^4}\right)\abs{\laplacian_{ij}} = -\frac{4}{M} - \frac{2}{M^3}.  
        \end{aligned}
    \end{equation}

    Combining all together, we conclude
	$\begop^{\intercal} \laplacian \begop - \frac{8}{M} - \frac{2}{M^3} 
		\leq \finop^{\intercal} \laplacian \finop
		\leq \begop^{\intercal} \laplacian \begop + \frac{4}{M} + \frac{5}{2M^3}$.
	\end{proof}

\begin{proof}[Proof of Theorem~\ref{thm:disagreement-np-hard}]
	We prove that Problem~\eqref{problem:max-disagreement-without-constraint} is \NPhard.
	Notice that since the objective function of the problem is convex (see
	Section~\ref{sec:convexity}) the optimal
	solution~$\begop^*$ for Problem~\eqref{problem:max-disagreement-without-constraint} will be a
	vector with all entries in the set $\{-1,1\}$.

	Consider an input $(G=(V,E,w),Q)$ for
	Problem~\eqref{problem:cut-middle-problem}. We use an algorithm for
	Problem~\eqref{problem:max-disagreement-without-constraint} and apply it on the graph with edge
	weights $\frac{w}{Q}$, i.e., each edge~$e$ has weight~$\frac{w_e}{Q}$.
	We note that for Problem~\eqref{problem:max-disagreement-without-constraint} we do not assume
	that the edge weights are integers.

	We let $\OPTONE(V, E, \frac{w}{Q})$ denote the optimal objective value
	for Problem~\eqref{problem:max-disagreement-without-constraint} on input $G = (V, E, \frac{w}{Q})$.  
	We have:
	\begin{align*}
		Q\cdot \OPTONE(V, E, \frac{1}{Q}w) 
		&= Q \! \max_{\begop\in\{-1,1\}^n}
				\begop^{\intercal}
				(\ID + \frac{1}{Q} \laplacian)^{-1}
				\frac{1}{Q} \laplacian
				(\ID + \frac{1}{Q} \laplacian)^{-1}
				\begop \\
		&= \max_{\begop\in\{-1,1\}^n,\finop=(\ID + \frac{1}{Q} \laplacian)^{-1}\begop}
				\finop^{\intercal} \laplacian \finop \\
		&= \OPTTHREE(V, E, w).
	\end{align*}

	We can now apply
	Lemma~\ref{lemma:cut-middle-cut-variat} and it follows that
	Problem~\eqref{problem:max-disagreement-without-constraint} is~\NPhard.  

	The fact that Problem~\eqref{problem:max-disagreement} and Problem~\eqref{eq:our-problem} 
	are \NPhard follow immediately
	from the result, and the analysis from the beginning of the section.
\end{proof}

\subsection{Proof of Corollary~\ref{cor:our-problem-np-hard}}

Given that Problem~\eqref{problem:max-disagreement-without-constraint} is \NPhard,
we proceed to prove that Problem~\eqref{eq:our-problem} is \NPhard when $k = \frac{n}{2}$.
We denote the latter problem with cardinality constraint $k = \frac{n}{2}$ as Problem~$C$. 
We apply the same technique that shows that Bisection is \NPhard
given that {\sc Max\-Cut} is \NPhard~\cite{garey1974some}. 

Consider an instance of
Problem~\eqref{problem:max-disagreement-without-constraint} with $n'=k$ vertices, 
where the disagreement index matrix is represented as 
$\DisIdx{\laplacian} = (\ID + \laplacian)^{-1} \laplacian (\ID + \laplacian)^{-1}$. 
We construct an instance of Problem $C$ by adding $n'$ isolated nodes. Note
that in the new instance we have $2n' = n$ vertices and $k=\frac{n}{2} = n'$.

The Laplacian of the graph after adding the isolated nodes becomes
$\laplacian' =
\begin{bmatrix}
\m+0 & \m+0 \\
\m+0 & \laplacian
\end{bmatrix}
$. Thus, we can express the disagreement index matrix 
of our new instance for Problem~$C$ as:
\begin{displaymath}
\begin{aligned}
\DisIdx{\laplacian'}
	&= \begin{bmatrix}
			\m+I & \m+0 \\
			\m+0 & \ID + \laplacian
		\end{bmatrix}^{-1}
		\begin{bmatrix}
			\m+0 & \m+0 \\
			\m+0 & \laplacian
		\end{bmatrix}
		\begin{bmatrix}
			\m+I & \m+0 \\
			\m+0 & \ID + \laplacian
		\end{bmatrix}^{-1} \\
	&= \begin{bmatrix}
			\m+I & \m+0 \\
			\m+0 & (\ID + \laplacian)^{-1}
		\end{bmatrix}
		\begin{bmatrix}
			\m+0 & \m+0 \\
			\m+0 & \laplacian
		\end{bmatrix}
		\begin{bmatrix}
			\m+I & \m+0 \\
			\m+0 & (\ID + \laplacian)^{-1}
		\end{bmatrix} \\
	&= \begin{bmatrix}
			\m+0 & \m+0 \\
			\m+0 & (\ID + \laplacian)^{-1}\laplacian(\ID + \laplacian)^{-1}
		\end{bmatrix} \\
	&= \begin{bmatrix}
			\m+0 & \m+0 \\
			\m+0 & \DisIdx{\laplacian}
		\end{bmatrix},
\end{aligned}
\end{displaymath}
i.e., $\DisIdx{\laplacian'}$ can be partitioned into four blocks, where
each block is an $n' \times n'$ matrix, with three of them being $\mathbf{0}$ matrices. 
The remaining block is the $n' \times n'$ matrix $\DisIdx{\laplacian}$ from our
instance of Problem~\eqref{problem:max-disagreement-without-constraint}.

Now consider the solution~$\begop$ of our instance $\DisIdx{\laplacian'}$ for
Problem~$C$. Observe that the solution's objective function value is given by
$\begop^\intercal \DisIdx{\laplacian'} \begop$. Note that, due to the
block structure of $\DisIdx{\laplacian'}$, only the final $n'$ entries of
$\begop$ create a non-negative disagreement. Furthermore, the first $n'$
entries of $\begop$ do not contribute to the final disagreement and can be set
arbitrarily.

Hence, since our cardinality constraint is $k=n'$, the algorithm can pick the
solution for the final $n'$~vertices such as to maximize the disagreement
$\DisIdx{\laplacian}$ on our instance of
Problem~\eqref{problem:max-disagreement-without-constraint}; the first
$n'$~entries of $\begop$ can be picked such that the cardinality constraint is
satisfied.  Therefore, the optimal objective function values of our new instance
and the original instance for
Problem~\eqref{problem:max-disagreement-without-constraint} are identical.

Consequently, we conclude that
Problem~\eqref{problem:max-disagreement-without-constraint} can be solved by
utilizing a solver for 
Problem~$C$, and the reduction is in polynomial time.
 
\balance

\end{document}